\newcolumntype{C}[1]{>{\centering\arraybackslash}p{#1}}
\let\oldmarginpar\marginpar
\renewcommand\marginpar[1]{\-\oldmarginpar[\raggedleft\footnotesize #1]%
{\raggedright\footnotesize #1}}
\newcommand{\pst}[1]{\overline{#1}}
\newcommand{\scenario}{\mathcal{S}}
\newcommand{\inputs}[1]{\mathcal{I}_{\mathrm{#1}}}
\newcommand{\dd}[1]{{d_{\mathcal{#1}}}}
\newcommand{\RR}{\mathbb{R}}
\newcommand\restr[2]{{
	\left.\kern-\nulldelimiterspace 
	#1 
	\right|_{#2} 
}}
\newcommand{\Mid}{\mathbbm{1}}
\newcommand{\MidP}[1]{\mathbbm{1}_{\mathrm{#1}}}
\newcommand{\domain}[1]{\left\{ 1 .. #1 \right\}}
\newtheorem{corollary}{Corollary}
\newtheorem{lemma}{Lemma}
\newtheorem{definition}{Definition}
\newtheorem{proposition}{Proposition}
\newcommand{\pname}[1]{\mathrm{#1}}
\newcommand{\rvar}[1]{\mathrm{#1}}
\DeclareMathOperator{\spann}{span}
\newcommand{\spanset}[1]{\ensuremath\spann\left(#1\right)}
\newcommand{\Pop}[2]{{\operatorname{P}_{\mathrm{#1}}}\!\left( #2 \right)}
\newcommand{\Pgiven}[2]{{\operatorname{P}_{\mathrm{#1}|\mathrm{#2}}}}
\newcommand{\Pgivenop}[4]{{\operatorname{P}_{\mathrm{#1}|\mathrm{#2}}}\!\left( #3 \middle | #4 \right)}
\newcommand{\Pgivenopprime}[4]{{\operatorname{P'}_{\mathrm{#1}|\mathrm{#2}}}\!\left( #3 \middle | #4 \right)}
\newcommand{\Pnosub}[2]{\operatorname{P}\!\left( #1 \middle | #2 \right)}
\newcommand{\vecP}[1]{{\vec{P}_{\mathcal{#1}}}}
\newcommand{\Pspace}[1]{\mathcal{#1}}
\newcommand{\Pbasis}[3]{{\left\llbracket #2 \middle| #3 \right\rrbracket}_{\mathcal{#1}}}
\newcommand{\Kset}[1]{\mathcal{K}_{\mathcal{#1}}}
\newcommand{\Csub}{\mathsf{C}}
\newcommand{\Zsub}{\mathsf{Z}}
\newcommand{\Ssub}{\mathsf{S}}
\newcommand{\Vsub}{\mathsf{V}}
\newcommand{\Cof}[1]{\mathsf{C}_{\!\mathcal{#1}}}
\newcommand{\Zof}[1]{\mathsf{Z}_{\!\mathcal{#1}}}
\newcommand{\Sof}[1]{\mathsf{S}_{\!\mathcal{#1}}}
\newcommand{\Vof}[1]{\mathsf{V}_{\!\mathcal{#1}}}
\newcommand{\Puni}[1]{\vec{Z}_{\mathcal{#1}}}
\newcommand{\PuniI}[2]{\vec{Z}_{\mathcal{#1}_{#2}}}
\newcommand{\Det}[1]{\mathsf{Det}_{\mathcal{#1}}}
\newcommand{\Lmap}[1]{\boldsymbol{\Lambda}_{\!\mathcal{#1}}}
\newcommand{\id}{\mathsf{id}}
\newcommand{\expr}[1]{\Phi_{\!\mathcal{#1}}}
\newcommand{\traceout}[1]{\tau_{\!\mathcal{#1}}}
\newcommand{\traceoutI}[2]{\tau_{\!\mathcal{#1}_{#2}}}
\newcommand{\Ssum}[1]{\sigma_{\!\mathcal{#1}}}
\newcommand{\SsumI}[2]{\sigma_{\!\mathcal{#1}_{#2}}}
\newcommand{\Sanh}[1]{\overline{\sigma}_{\!\mathcal{#1}}}
\newcommand{\SanhI}[2]{\overline{\sigma}_{\!\mathcal{#1}_{#2}}}
\newcommand{\data}[1]{\vec{C}_{\!\mathcal{#1}}}
\newcommand{\sig}[1]{\vec{S}_{\!\mathcal{#1}}}
\newcommand{\datad}[1]{\gamma_{\!\mathcal{#1}}}
\newcommand{\datadI}[2]{\gamma_{\!\mathcal{#1}_{#2}}}
\begin{document}

\title{Algebraic and geometric properties of local transformations}
\date{\today}
\author{Denis Rosset}
\email{physics@denisrosset.com}
\affiliation{Perimeter Institute for Theoretical Physics, 31 Caroline St. N., Waterloo, Ontario, Canada, N2L 2Y5}
\author{{\"A}min Baumeler}
\affiliation{Institute for Quantum Optics and Quantum Information (IQOQI) Vienna, Austrian Academy of Sciences, Boltzmanngasse 3, 1090 Vienna, Austria}
\affiliation{Faculty of Physics, University of Vienna, Boltzmanngasse 5, 1090 Vienna, Austria}
\affiliation{Facolt\`a indipendente di Gandria, Lunga scala, 6978 Gandria, Switzerland}
\author{Jean-Daniel Bancal}
\author{Nicolas Gisin}
\author{Anthony Martin}
\author{Marc-Olivier Renou}
\affiliation{D\'epartement de Physique Appliqu\'ee, Universit\'e de Gen\`eve, 1211 Gen\`eve, Switzerland}
\affiliation{ICFO-Institut de Ciencies Fotoniques, The Barcelona Institute of Science and Technology, 08860 Castelldefels (Barcelona), Spain}
\author{Elie Wolfe}
\affiliation{Perimeter Institute for Theoretical Physics, 31 Caroline St. N., Waterloo, Ontario, Canada, N2L 2Y5}

\maketitle

\begin{abstract}
  Some properties of physical systems can be characterized from their correlations.
  In that framework, subsystems are viewed as abstract devices that receive measurement settings as inputs and produce measurement outcomes as outputs.
  The labeling convention used to describe these inputs and outputs does not affect the physics; and relabelings are easily implemented by rewiring the input and output ports of the devices.
  However, a more general class of operations can be achieved by using correlated preprocessing and postprocessing of the inputs and outputs.
  In contrast to relabelings, some of these operations irreversibly lose information about the underlying device.
  Other operations are reversible, but modify the number of cardinality of inputs and/or outputs.
  In this work, we single out the set of deterministic local maps as the one satisfying two equivalent constructions: an operational definition from causality, and an axiomatic definition reminiscent of the definition of quantum completely positive trace-preserving maps.
  We then study the algebraic properties of that set.
  Surprisingly, the study of these fundamental properties has deep and practical applications.
  First, the invariant subspaces of these transformations directly decompose the space of correlations/Bell inequalities into nonsignaling, signaling and normalization components.
  This impacts the classification of Bell and causal inequalities, and the construction of assemblages/witnesses in steering scenarios.
  Second, the left and right invertible deterministic local operations provide an operational generalization of the liftings introduced by Pironio [J. Math. Phys.
, 46(6):062112 (2005)].
  Not only Bell-local, but also causal inequalities can be lifted; liftings also apply to correlation boxes in a variety of scenarios.
\end{abstract}

\newpage

\tableofcontents
\newpage

\part*{Contents}
Our motivation is to provide a formal study of the transformations of behaviors in correlation scenarios.
By behavior, we mean a joint conditional probability distribution on devices spanning subsystems.
While such transformations have been studied before~\cite{Barrett2007a, Horodecki2015, Vicente2014}, no detailed study exists that also encompasses scenarios with possible {\em signaling}.
Indeed, scenarios involving signaling directions are increasingly relevant in the study of indefinite causal orders (see {\it e.g.}, Refs.~\cite{Oreshkov2012, Baumeler2016, MacLean2016, Castro-Ruiz2017}).
In this context, we consider two possible definitions of local transformations and show that they single out the same class of maps.
We then study the {\em geometric\/} and {\em algebraic\/} properties of local transformations.
{\em Geometrically}, we show that local transformations decompose the correlation space they act upon into invariant subspaces.
We identify these invariant subspaces with properties such as normalization or signaling, and show that a natural decomposition of the correlation space follows.
{\em Algebraically}, we explore how local transformations compose, and study their invertibility.
We show that invertibility corresponds to the lifting of either behaviors or Bell-like inequalities.

Our manuscript is divided in three parts.
Part~\ref{Part:LocalTransformations} provides the foundations for the rest of the manuscript.
It formally defines scenarios, behaviors, Bell-like inequalities, local transformations and their actions.
In particular, this part singles out the class of local transformations studied in the rest of the work.
Definitions and results are provided in Section~\ref{Part:LocalTransformations} while longer proofs are relegated to Section~\ref{Sec:ProofsLocalTransformations}.

Part~\ref{Part:InvariantSubspaces} studies the invariant subspaces of local transformations: Section~\ref{Sec:InvariantSubspacesSingle} and Section~\ref{Sec:InvariantSubspacesMulti} address the single party and multi-party cases respectively.
We present in Section~\ref{Sec:InvariantApplications} three applications: the equivalency of Bell-like
inequalities under affine transformations and nonsignaling constraints (generalizing the approach of Ref.~\cite{Rosset2014a} to signaling scenarios);
the optimization of the variance of Bell inequalities when used as statistical estimators (generalizing Ref.~\cite{Renou2016})
and the decomposition of assemblages/witnesses in steering scenarios.
Section~\ref{Sec:InvariantTechnical} contains proofs.

Part~\ref{Part:Liftings} studies reversible transformations.
We study composition of local transformations in Section~\ref{Sec:LiftingDefinitions}, before motivating a definition of liftings as generic transformations between equivalent inequalities/behaviors in Section~\ref{Sec:DefiningLiftings}.
In Section~\ref{Sec:LiftingBehaviors}, we consider transformations that create equivalent behaviors from existing behaviors.
We show that a richer class of such transformations exist compared to liftings of inequalities.
We also show that in the nonsignaling scenario where Alice has ternary inputs and outputs, and Bob has binary inputs and outputs, all boxes are either local, or liftings of the PR-box from the CHSH scenario.
In Section~\ref{Sec:LiftingExpressions}, we make an exhaustive inventory of liftings of Bell inequalities, and show that the class of transformations considered by Pironio~\cite{Pironio2005} is complete.
However, our construction applies also to signaling scenarios; we demonstrate that causal inequalities are also affected by lifting redundancies.

\newpage

\part{Local transformations}
\label{Part:LocalTransformations}
In the first part of our manuscript, we formally define the objects under study: scenarios, behaviors, correlation sets, Bell expressions and Bell-like inequalities, and how local transformations act on them.
An excellent preliminary read is the review by Brunner {\it et al.}~\cite{Brunner2014}, as our approach is more mathematical.
The main question we address is the transformation of {\em boxes}, which represent the subsystems in a correlation scenario that possibly includes signaling.
We define local transformations using two approaches: one based on causality (past events cannot depend on future events), and one based on axioms that transformations should obey (such as preserving nonnegativity and normalization).
We show that both definitions single out the same class of local transformations.
In addition, we show that local transformations mirror the positive-but-not-completely-positive property of quantum channels~\cite{Bengtsson2008}.
Here, to show that a local transformation is positive but not completely positive, we will need a signalling distribution (Prop. \ref{Prop:ProperLocalMapsCausal}).
Previous works addressed similar questions.
Barrett~\cite{Barrett2007a} considered normalization-preserving transformations in generalized probabilistic theories.
Due to the nonsignaling constraints, his description of transformations has redundancy; he considers equivalence classes of those and shows that each equivalence class contains a stochastic-like transformation.
The same transformations were studied in greater detail in Ref.~\cite{Horodecki2015}.
These stochastic-like transformations corresponds to the local transformations we study in this Part, although our approach removes the ambiguities.
Another work by de Vincente~\cite{Vicente2014} lists families of local transformations (without claim of exhaustiveness); we recover the operations he lists as subset of our transformations; see also our Part~\ref{Part:Liftings} where we decompose local transformations in detail.

This part is structured as follows.
In Section~\ref{Sec:LocalTransformations:Definitions}, we provide the definitions used in the manuscript: scenarios (Section~\ref{Sec:Definitions:Scenario}), behaviors (or probability distributions, Section~\ref{Sec:Definition:Behaviors}), describe partial or full nonsignaling conditions (Section~\ref{Sec:NonsignalingConditions}), deterministic and local behaviors (Section~\ref{Sec:Def:DeterministicLocalBehaviors}).
This sets up the stage to tackle local transformations (Section~\ref{Sec:Def:LocalMaps}), where the causal and axiomatic definitions are stated to be equivalent in Proposition~\ref{Prop:ProperLocalMapsCausal}.
We then move to characterize such transformations as mixtures (Proposition~\ref{Prop:LocalTransformationsConvex}) of deterministic transformations (Section~\ref{Sec:Def:LocalMaps:Deterministic}).
We define correlation sets closed under local transformations in Section~\ref{Sec:Def:CorrelationSets}; the boundary of such sets is characterized by Bell-like inequalities (Section~\ref{Sec:Def:BellExpressions}), which we decompose as a linear functional associated with an upper bound.
We conclude the Section by demonstrating how Bell-like inequalities transform under local transformations (Proposition~\ref{Prop:BellInequalityTransformation}).
Some proofs of these results were moved to Section~\ref{Sec:ProofsLocalTransformations} to simplify the presentation.

\section{Definitions and preliminary results}
\label{Sec:LocalTransformations:Definitions}
Parties, devices, subsystems or players in a Bell correlation scenario are usually ordered alphabetically as A(lice), B(ob), C(harlie), D(ave) and so on.
We work in the setting where a referee chooses inputs at random and sends them to the parties.
After suitable processing, the parties provide outputs that are collected by the referee who estimates the correlations among the parties.
One can also think of the parties as devices that take measurement settings as input and produce measurement outcomes.
As our description is abstract, we are not concerned by these details and use the terminology party/input/output.

\subsection{Scenario}
\label{Sec:Definitions:Scenario}
A scenario is composed of a number of devices labeled A, B, \ldots.
Most of our definitions and results are stated in the two-party case;
except when explicitly mentioned, the multi-party generalization is straightforward.
The devices receive inputs taken from finite sets; without loss of generality,
the device A receives $x \in \domain{X}$, while the device B receives~$y \in \domain{Y}$.
The integers $X$ and $Y$ are the numbers of input values.
The devices' outputs also have finite cardinality, but their number can depend on the input.
When the device A receives the input $x$, it outputs $a \in \domain{A_x}$; respectively for the input $y$ the device B outputs $b \in \domain{B_Y}$.
The {\em cardinality} of the party/device A is given by the sequence $\pst{A} = (A_1, \ldots, A_X)$ and similarly for $\pst{B}$.
When necessary, we will use additional devices C and D with inputs $z$ and $t$, and outputs $c$ and $d$, the rest of the notation being easily deduced.

\marginnote{For example, a two-party
	nonsignaling scenario has $E^\text{NS} = \{ (\text{A}, \text{A}), (\text{B}, \text{B}) \}$, whereas the two-party scenario used to study indefinite causal orders in~\cite{Branciard2016} has $E = E^\text{NS} \cup \{ (\text{A}, \text{B}), (\text{B}, \text{A}) \}$.
}
Depending on the underlying causal structure, there will be restrictions on the correlations between inputs and outputs of distinct parties, for example due to the impossibility of faster-than-light communication.

The {\em signaling directions} of a scenario are described by the set of pairs $E = \{ (s, t) \}$, where $s, t \in \{ \text{A}, \text{B}, \ldots \}$ and $s$ can signal to $t$.
For simplicity, we define that $(s, s) \in E$ for all $s$.

\begin{definition}
A {\em scenario} $\scenario$ is defined by the cardinality of its parties $(\pst{A}, \pst{B}, \ldots)$ and the signaling directions $E$. 
\end{definition}

The interpretation of those signaling directions is clarified in Section~\ref{Sec:NonsignalingConditions}.

\subsection{Behaviors}
\label{Sec:Definition:Behaviors}

\marginnote{We recall that the Kronecker product is defined, for $\vec{x} \in \RR^n$ and $\vec{y} \in \RR^m$ as
  \[
    \vec{x} \otimes \vec{y} = \begin{pmatrix} x_1 \vec{y} \\ \ldots \\ x_n \vec{y} \end{pmatrix}\;.
  \]
}
The behavior of devices is fully described by the distribution $\Pgivenop{AB}{XY}{ab}{xy}$.
We enumerate the coefficients of that distribution in a column vector $\vecP{AB} \in \Pspace{A} \otimes \Pspace{B}$.
More precisely, for $d_\Pspace{A} = \sum_x A_x$, vectors in the space $\Pspace{A} = \RR^{d_\Pspace{A}}$ correspond to the enumeration of the coefficients of the single party distribution $\Pgivenop{A}{X}{a}{x}$ obtained by first incrementing the index $a$ and then $x$.
The same holds for $\Pspace{B}$ and for the spaces of the subsequent parties.
We fix the enumeration in $\vecP{AB}$ by requiring that the coefficient order in $\vecP{AB}$ corresponds to the Kronecker product $\vecP{A} \otimes \vecP{B}$ when $\Pgivenop{AB}{XY}{ab}{xy} = \Pgivenop{A}{X}{a}{x} \Pgivenop{B}{Y}{b}{y}$.

As $\Pspace{A}$, $\Pspace{B}$ and $\Pspace{A} \otimes \Pspace{B}$ are vector spaces, they include elements that do not correspond to proper probability distributions.
We define their nonnegative subsets
\begin{equation}
  \label{Eq:NonnegativeSubset}
  (\Pspace{A} \otimes \Pspace{B})^+ = \Big \{ \vecP{AB} \in \Pspace{A} \otimes \Pspace{B} \quad  : \quad \forall a,b,x,y, ~ \Pgivenop{AB}{XY}{ab}{xy} \ge 0 \Big \}
  \;,
\end{equation}
while the normalized subset is
\begin{equation}
  \label{Eq:NormalizedSubset}
  (\Pspace{A} \otimes \Pspace{B})^\Sigma = \Big \{ \vecP{AB} \in \Pspace{A} \otimes \Pspace{B} \quad : \quad \forall x,y, ~ \sum_{ab} \Pgivenop{AB}{XY}{ab}{xy} = 1 \Big \}\;.
\end{equation}
Combining those two definitions, we denote by $(\Pspace{A} \otimes \Pspace{B})^{\Sigma+} = (\Pspace{A} \otimes \Pspace{B})^{\Sigma} \cap (\Pspace{A} \otimes \Pspace{B})^{+}$ the set of normalized, nonnegative behaviors.
Those definitions were made for two-party behaviors; but similar definitions apply to single party behaviors (for example $\Pspace{A}^+$, $\Pspace{A}^\Sigma$).

\subsection{Nonsignaling conditions}
\label{Sec:NonsignalingConditions}
We now describe the nonsignaling constraints~\cite{Brunner2014} for two-party distributions.
Without loss of generality, we consider the case where B does not signal to C, that is $(B, C) \notin E$.
The behavior $\Pgivenop{BC}{YZ}{bc}{yz}$ then obeys the nonsignaling constraint:
\begin{equation}
  \sum_{b} \Pgivenop{BC}{YZ}{bc}{yz} = \sum_{b} \Pgivenop{BC}{YZ}{bc}{y'z}, \qquad \forall c,y,y',z\;.
\end{equation}

Let us now consider the multi-party case.
\marginnote{The multi-party definition is relevant to understand the decomposition of the correlation space into nonsignaling and signaling subspaces in Part~\ref{Part:InvariantSubspaces}, and can be skipped at first reading.}
To simplify the notation in the definition below, and in a few other parts of the manuscript, we temporarily group the parties into sets such as $\{ A_1, A_2 \ldots\}$ depending on their role in the nonsignaling conditions.
\begin{definition}
  \label{Def:Nonsignaling}
  In a given scenario, we consider all subsets of parties that obey the condition below, where we relabel the parties for convenience, and enumerate the corresponding {\em nonsignaling constraints}.
  We consider a source subset of parties $\{ \pname{B}_1, \pname{B}_2, \ldots \}$, and a target subset $\{ \pname{C}_1, \pname{C}_2, \ldots \}$ such that no source signals to a target: $(\pname{B}_i, \pname{C}_j) \notin E$.
  The remaining parties are enumerated $\{ \pname{A}_1, \pname{A}_2, \ldots \}$.
  To simplify the notation in the equation below, we regroup the variables $\overline{\rvar{A}} = (\rvar{A}_1, \rvar{A}_2, \ldots)$, $\overline{\rvar{X}} = (\rvar{X}_1, \rvar{X}_2, \ldots)$ and the indices $\overline{a} = (a_1, a_2, \ldots)$, $\overline{x} = (x_1, x_2, \ldots)$, and similarly for the two other sets of parties.
  
  The behaviors of that scenario obey the constraint:
  \begin{equation}
    \sum_{\overline{a}\overline{b}} \Pgivenop{\overline{A}\overline{B}\overline{C}}{\overline{X}\overline{Y}\overline{Z}}{\overline{a} \overline{b} \overline{c}}{\overline{x} \overline{y} \overline{z}} = \sum_{\overline{a}\overline{b}} \Pgivenop{\overline{A}\overline{B}\overline{C}}{\overline{X}\overline{Y}\overline{Z}}{\overline{a} \overline{b} \overline{c}}{\overline{x} \overline{y}' \overline{z}}, \quad \forall \overline{c},\overline{x},\overline{y}, \overline{y}', \overline{z}\;.
  \end{equation}
\end{definition}

We now revert to the original enumeration A, B, C, ... of the parties.

\subsection{Deterministic and local behaviors}
\label{Sec:Def:DeterministicLocalBehaviors}

Let $\inputs{A}$ be the inputs of the parties that can signal to A, with A itself included (i.e. $x \in \inputs{A}$ always):
\begin{equation}
  \inputs{A} = \{ \mathrm{input}(P) : (P, A) \in E \}\;.
\end{equation}
and the same for $\inputs{B}$ and B, and so on.
Then, a {\em deterministic behavior} is written
\begin{equation}
  \Pgivenop{AB\ldots}{XY\ldots}{ab\ldots}{xy\ldots} = \Pgivenop{A}{\inputs{A}}{a}{\inputs{A}} ~ \Pgivenop{B}{\inputs{B}}{b}{\inputs{B}} ~ \ldots \;,
\end{equation}
where $\Pgiven{A}{\inputs{A}}$, $\Pgiven{B}{\inputs{B}}$, \ldots are deterministic distributions with coefficients in $\{0,1\}$.

It is straightforward to verify that such behaviors obey the nonsignaling conditions of Definition~\ref{Def:Nonsignaling}.
We define now local behaviors.
\begin{definition}
  \label{Def:LocalBehaviors}
  A {\em local behavior} is a convex mixture of deterministic behaviors.
\end{definition}
By linearity, local behaviors obey the nonsignaling conditions of Definition~\ref{Def:Nonsignaling}.

\subsection{Local transformations}
\label{Sec:Def:LocalMaps}
Consider a device A with cardinality $\pst{A}$.
We can apply processing before and after operating the device: for example preprocess its input or postprocess its output.
These extra operations can even be correlated.
As a consequence of this processing, the resulting device can have a different structure $\pst{A}'$, for example with additional inputs or outputs.
As we will see, such transformations can be used to adapt any device to a given structure, possibly losing information in the process.

\marginnote{
  A few examples for $\pst{A}=(2,2)$ and maps $\Pspace{A} \to \Pspace{A}$.
  The permutation of inputs is written
  \begin{equation*}
    \Lmap{A}^\text{IF} = \begin{pmatrix} 0 & 0 & 1 & 0 \\ 0 & 0 & 0 & 1 \\ 1 & 0 & 0 & 0 \\ 0 & 1 & 0 & 0 \end{pmatrix}\;,
  \end{equation*}
  while output randomization is
  \begin{equation*}
    \Lmap{A}^\text{RND} = \frac{1}{2} \begin{pmatrix} 1 & 1 & 1 & 1 \\ 1 & 1 & 1 & 1 \\ 1 & 1 & 1 & 1 \\ 1 & 1 & 1 & 1 \end{pmatrix}\;.
  \end{equation*}
  }
Formally, given $\vecP{A}\in\Pspace{A}$, we look for a map $\Lmap{A}: \Pspace{A}\to\Pspace{A}'$ such that $\vecP{A'}' = \Lmap{A} \vecP{A}$.
In our probabilistic setting, the map $\Lmap{A}$ has to be linear to preserve convexity.
Hence, we write $\Lmap{A}$ as a matrix in $\RR^{{\dd{A'}}\times\dd{A}}$.
The row space of $\Lambda$ is implicitly indexed by $(a',x')$, while its column space is indexed by $(a,x)$.
The probability distribution $\vecP{A}$ transforms to $\vecP{A'}'$:
\begin{equation}
  \Pgivenop{A'}{X'}{a'}{x'} = \sum_{x=1}^X \sum_{a=1}^{A_x} \Lambda_{(a',x'),(a,x)} \Pgivenop{A}{X}{a}{x} \;.
\end{equation}

This notation is compatible with the tensor product structure; for example, when applying $\Lmap{A}$ on the party A but leaving B untransformed, we write:
\begin{equation}
  \vecP{A'B}' = (\Lmap{A} \otimes \MidP{B}) \cdot \vecP{AB}\;, \quad \Pgivenopprime{A'B}{X'Y}{a'b}{x'y} = \sum_{ax} \Lambda_{(a',x'),(a,x)} \Pgivenop{AB}{XY}{ab}{xy}\;,
\end{equation}
with $\MidP{B}: \Pspace{B} \to \Pspace{B}$ the identity map, and $\vecP{A'B}' \in \Pspace{A}' \otimes \Pspace{B}$.

Not all matrices $\Lmap{A}$ correspond to a sound transformation.

\subsubsection{Causal and axiomatic definitions}
\label{Sec:Def:LocalMaps:Definitions}
The subset of local maps can be defined in two equivalent ways, which we investigate below.
We can first ask that any processing should follow causality.
For example, the postprocessing of outputs can depend on the input but not the other way around.
\marginpar{See also the definition of 1W-LOCC transformations in Refs.~\cite{Gallego2015,Kaur2017a}.}
\begin{definition}[Causal local transformations]
  \label{Def:CausalLocalTransformations}
  Causal local maps are composed of an input preprocessing step $\Pgivenop{X}{X'}{x}{x'}$ and an output postprocessing step $\Pgivenop{A'}{XAX'}{a'}{x a x'}$ such that
  \begin{equation}
    \label{Eq:CausalLocalTransformations}
    \Pgivenop{A'}{X'}{a'}{x'} = \sum_{ax} \underbrace{\Pgivenop{X}{X'}{x}{x'} \Pgivenop{A'}{XAX'}{a'}{x a x'}}_{\Lambda_{(a',x'),(a,x)}} \Pgivenop{A}{X}{a}{x}\;,
  \end{equation}
where  $\Pgivenop{X}{X'}{x}{x'}$ and $\Pgivenop{A'}{XAX'}{a'}{x a x'}$ are probability distributions.
\end{definition}

The second way is to define axioms that local transformations should obey.
For example, the processing should preserve normalization and the nonnegativity of coefficients, even when applied in arbitrary multi-party scenarios.
\marginnote{Note that all the maps we consider are normalization preserving.
  This is in contrast with the quantum case, where CP maps are not necessarily completely positive and trace preserving (CPTP).
  Our definitions and proofs still apply when requiring a weaker condition: Behaviors can be subnormalized by a factor that is constant over all input combinations; and that positive/completely positive maps preserve the consistency of subnormalization across inputs, but can modify that factor.
}
\begin{definition}[Positive and completely positive (CP) local transformations]
  \label{Def:PositiveTransformations}
  A local transformation~$\Lmap{A}:\Pspace{A}\to\Pspace{A'}$ is called {\em positive\/} if and only if it maps all normalized, positive behaviors to normalized, positive behaviors, {\it i.e.},
  \begin{equation}
    \vecP{A}\in\Pspace{A}^{\Sigma+} \quad \Rightarrow \quad \Lmap{A} ~ \vecP{A} \in \Pspace{A}^{\Sigma+}
    \,.
  \end{equation}
  
  A local transformation~$\Lmap{A}:\Pspace{A}\to\Pspace{A'}$ is called {\em completely positive\/} if and only if it maps all joint normalized, positive behaviors~$A,B$ to joint normalized, positive behaviors~$A',B$ via partial application on the first, {\it i.e.},
  \begin{align}
    \vecP{AB}\in(\Pspace{A}\otimes\Pspace{B})^{\Sigma+} \quad \Rightarrow \quad (\Lmap{A} \otimes \MidP{B}) \cdot \vecP{AB} \in (\Pspace{A'} \otimes \Pspace{B})^{\Sigma+}
    \,,
  \end{align}
  for all cardinalities $\pst{B}$ and all scenarios, including those with signaling from A to B.
\end{definition}

\begin{proposition}
  \label{Prop:ProperLocalMapsCausal}
  Causal local transformations (Definition~\ref{Def:CausalLocalTransformations}) are equivalent to completely positive local transformations (Definition~\ref{Def:PositiveTransformations}).
\end{proposition}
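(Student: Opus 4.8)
The plan is to prove the equivalence by establishing both inclusions between the two classes of maps. The easy direction is to show that every causal local transformation is completely positive. Given a map of the form in Eq.~\eqref{Eq:CausalLocalTransformations}, the coefficients $\Lambda_{(a',x'),(a,x)} = \Pgivenop{X}{X'}{x}{x'} \Pgivenop{A'}{XAX'}{a'}{x a x'}$ are products of conditional probabilities, so they are nonnegative, and the postprocessing being normalized gives $\sum_{a'} \Pgivenop{A'}{XAX'}{a'}{x a x'} = 1$ for every $(x,a,x')$. I would check directly that when such a $\Lmap{A}$ is applied to one party of a joint behavior $\vecP{AB} \in (\Pspace{A}\otimes\Pspace{B})^{\Sigma+}$ — even in a scenario with signaling from A to B — the result stays nonnegative (immediate, as it is a nonnegative combination of nonnegative numbers) and stays normalized: summing the transformed distribution over $a'$ and $b$ and using the normalization of the postprocessing collapses the $a'$ sum, then the $x$-sum over the preprocessing $\Pgivenop{X}{X'}{x}{x'}$ equals $1$, leaving $\sum_{ab}\Pgivenop{AB}{XY}{ab}{xy}$, which is $1$ by hypothesis. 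One must be slightly careful that the index $b$ may range over a set $B_y$ whose size does not depend on $x$, so the sum over $b$ commutes past everything; this is where the assumption that B's output cardinality is independent of A's input matters, but it holds by the setup in Section~\ref{Sec:Definitions:Scenario}.

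The substantive direction is to show that every completely positive local transformation arises from a causal preprocessing/postprocessing pair. Here the strategy is to feed $\Lmap{A}$ carefully chosen test behaviors and read off the required structure from the constraints. First I would use plain positivity (the single-party condition, with trivial B) applied to the deterministic behaviors $\Pgivenop{A}{X}{a}{x} = \delta_{a,f(x)}$ for arbitrary functions $f$: this forces every column of $\Lambda$ indexed by a deterministic strategy to be mapped to a nonnegative, normalized vector, and by taking differences of such behaviors one deduces first that all entries $\Lambda_{(a',x'),(a,x)} \ge 0$, and second that for each fixed $x$ the column sums $\sum_{a'}\Lambda_{(a',x'),(a,x)}$ depend only on $x$ and $x'$ — call this quantity $q(x'|x)$, and normalization of the image forces $\sum_x q(x'|x) \cdot [\text{something}]$… more precisely, applying the map to the behavior "output $a=1$ deterministically for input $x_0$, arbitrary elsewhere" and comparing inputs isolates $q(x'|x)$ and shows $\sum_{x} (\text{coefficient})$ collapses correctly, yielding that $\Pgivenop{X}{X'}{x}{x'} := q(x|x')$ is a valid conditional distribution. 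Then define $\Pgivenop{A'}{XAX'}{a'}{xax'} := \Lambda_{(a',x'),(a,x)} / q(x'|x)$ wherever $q(x'|x) > 0$ (and arbitrarily, normalized, where $q(x'|x)=0$, since those columns contribute nothing); by construction this is nonnegative and sums to $1$ over $a'$, so it is a valid postprocessing, and $\Lambda_{(a',x'),(a,x)} = \Pgivenop{X}{X'}{x}{x'}\Pgivenop{A'}{XAX'}{a'}{xax'}$ as required.

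The delicate point — and I expect this to be the main obstacle — is to explain where the \emph{completely positive} hypothesis (as opposed to mere positivity) is actually needed, i.e.\ to rule out maps that are positive on single-party behaviors but fail under tensoring with a signaling partner. The remark in the excerpt and the reference to a "signalling distribution" strongly suggest that single-party positivity alone is \emph{not} enough to force the nonnegativity of all entries $\Lambda_{(a',x'),(a,x)}$: a map could be positive simply because the positive cone of normalized single-party behaviors is not full-dimensional in a way that hides sign-indefinite entries, yet become non-positive once a suitably correlated (signaling) B is attached, since partial application then probes $\Lambda$ against a richer set of conditional vectors. So the careful version of the argument above must run the test behaviors through the \emph{bipartite} condition with B chosen to signal from A, using behaviors of the form $\Pgivenopprime{AB}{XYX}{ab}{xyx}$ where B's output copies (a function of) A's input; conditioning on B's output then effectively post-selects A's behavior on a single input value $x$, which is exactly what lets one isolate individual columns of $\Lambda$ and conclude entrywise nonnegativity and the column-sum structure. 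I would therefore organize the proof as: (i) causal $\Rightarrow$ CP by direct computation; (ii) CP $\Rightarrow$ entrywise nonnegativity of $\Lambda$, via tensoring with a signaling B that copies A's input; (iii) CP $\Rightarrow$ the column sums define a stochastic matrix $\Pgiven{X}{X'}$, via the normalization constraint on the same test behaviors; (iv) define the postprocessing by division and verify it is stochastic, completing the factorization. Steps (ii)–(iii) are where all the real content lies; (i) and (iv) are bookkeeping.
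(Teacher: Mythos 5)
Your final plan (i)--(iv) is essentially the paper's own proof: the paper likewise extracts entrywise nonnegativity and the conditional-distribution structure of $\Lambda$ by applying $\Lmap{A} \otimes \MidP{B}$ to a signaling ``swap'' device whose B-output copies A's input (and whose A-output encodes B's input, so every column $(a,x)$ is probed individually), then rules out any dependence of $\sum_{a'}\Lambda_{(a',x'),(a,x)}$ on $a$ via normalization of the image of a single-party deterministic behavior, after which the factorization into $\Pgiven{X}{X'}\,\Pgiven{A'}{XAX'}$ is just the chain rule. Your self-correction that single-party positivity alone cannot force entrywise nonnegativity is exactly right and is precisely where the completely-positive hypothesis (the signaling partner) is needed, as in the paper.
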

\begin{proof}
  Causal transformations are completely positive by the rules of probability theory.
  The converse is proven in Section~\ref{Sec:ProofsLocalTransformations:Causal}.
\end{proof}
We provide here an example that illustrates that complete positivity is required.
Let $\Lmap{A}: \Pspace{A} \to \Pspace{A}$ with $\pst{A} = (1,1)$; the device A provides a choice of two inputs but always returns the same outcome.
It is clear that $\vecP{A} = (1,1)^\top$ is the only normalized behavior corresponding to that device.
The following map $\Lmap{A}$ is positive and preserves normalization.
Actually, it leaves the only possible $\vecP{A}$ invariant:
\begin{equation}
  \Lmap{A} = \begin{pmatrix} 2 & -1 \\ -1 & 2 \end{pmatrix}\;.
\end{equation}
Nevertheless $(\Lmap{A} \otimes \MidP{B})$ fails to preserve nonnegativity when applied on the signaling behavior
\begin{equation}\label{Eq:signalling}
  \Pgivenop{AB}{XY}{ab}{xy} = \begin{cases} 1 & \mbox{if } b = x\;, \\ 0 & \mbox{otherwise,} \end{cases}
\end{equation}
with the device B having a single input with binary outputs $\pst{B} = (2)$.

We move towards the algebraic characterization of local maps.

\subsubsection{Deterministic local maps}
\label{Sec:Def:LocalMaps:Deterministic}
\marginnote{Deterministic local transformations are studied in greater details in Section~\ref{Sec:LifitingDefinitions:DeterministicLocalTransformations}.}
We now consider transformations of the form~\eqref{Eq:CausalLocalTransformations} where
$\Pgiven{X}{X'}$ and $\Pgiven{A'}{XAX'}$ are deterministic.
As $x$ is fully determined by $x'$, it is sufficient to consider local maps of the form
\begin{equation}
  \Lambda_{(a',x'),(a,x)} = \Pgivenop{A'}{AX'}{a'}{ax'} \Pgivenop{X}{X'}{x}{x'}
\end{equation}
with deterministic distributions $\Pgiven{A'}{AX'}$ and $\Pgiven{X}{X'}$, as pictured on Figure~\ref{Fig:DeterministicLocalMap}.
We provide now a compact notation for those deterministic local maps.
\marginnote{
  \begin{center}
    \includegraphics{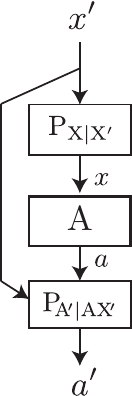}
  \end{center}
  \captionof{figure}{\label{Fig:DeterministicLocalMap}
    General form of a deterministic local map.
  }
}
\begin{definition}
  \label{Def:LocalDeterministicMap}
  A {\em local deterministic map} $\Lmap{A}$ is fully determined by the mapping of inputs
  \begin{equation}
    \xi: \domain{X'} \to \domain{X}, \qquad \xi: x' \mapsto x \;,
  \end{equation}
  and the mapping of outputs $\overline{\alpha} = (\alpha_1, \dots, \alpha_{X'})$, eventually conditioned on $x'$:
  \begin{equation}
    \alpha_{x'}: \domain{A_{\xi(x')}} \to \domain{A'_{x'}}, \qquad \alpha_{x'}: a \mapsto a' \;,
  \end{equation}
  so that
  \begin{equation}
    \Lambda_{(a',x'),(a,x)} = \begin{cases} 1 & \mbox{if }  x = \xi(x') \mbox{ and }  a' = \alpha_{x'}(a)\;, \\ 0 & \mbox{otherwise.} \end{cases}
  \end{equation}
\end{definition}

\subsubsection{All local transformations}
We now arrive at our main characterization.

\begin{proposition}
  \label{Prop:LocalTransformationsConvex}
  All causal (or, equivalently, completely positive) local transformations can be written as a convex mixture of deterministic local maps
  \begin{equation}
    \boxed{
      \Lmap{A} = \sum_i p_i ~ \Lmap{A}^i
    }
  \end{equation}
  where $\sum_i p_i = 1$, $p_i \ge 0$ and the $\Lmap{A}^i$ satisfy Definition~\ref{Def:LocalDeterministicMap}.
\end{proposition}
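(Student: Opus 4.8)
The plan is to decompose a general causal local transformation $\Lmap{A}$ into deterministic pieces by separately decomposing the preprocessing distribution $\Pgiven{X}{X'}$ and the postprocessing distribution $\Pgiven{A'}{XAX'}$ into convex mixtures of deterministic (Kronecker-delta) distributions, and then collecting terms. Concretely, I would start from the causal form
\begin{equation}
  \Lambda_{(a',x'),(a,x)} = \Pgivenop{X}{X'}{x}{x'} ~ \Pgivenop{A'}{XAX'}{a'}{x a x'}\;.
\end{equation}
A stochastic matrix is a convex combination of $\{0,1\}$-valued stochastic matrices (this is just the statement that the polytope of column-stochastic matrices has deterministic vertices, applied columnwise). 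Apply this to both factors: write $\Pgivenop{X}{X'}{x}{x'} = \sum_{j} q_j ~ \delta_{x,\xi^j(x')}$ for a family of functions $\xi^j : \domain{X'} \to \domain{X}$ with $q_j \ge 0$, $\sum_j q_j = 1$, and similarly $\Pgivenop{A'}{XAX'}{a'}{xax'} = \sum_k r_k ~ \delta_{a',\beta^k_{x,x'}(a)}$ for families of output-relabeling functions $\beta^k_{x,x'}$.

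Substituting and distributing the products gives $\Lambda_{(a',x'),(a,x)} = \sum_{j,k} q_j r_k ~ \delta_{x,\xi^j(x')}~\delta_{a',\beta^k_{x,x'}(a)}$. For each fixed pair $(j,k)$, the summand is a matrix of the form appearing in Definition~\ref{Def:LocalDeterministicMap}: once $x = \xi^j(x')$ is forced by the first delta, the output map $\beta^k_{x,x'}$ can be rewritten as a map $\alpha_{x'} := \beta^k_{\xi^j(x'),x'}$ depending only on $x'$, so the $(j,k)$ term is exactly $\Lmap{A}^{(j,k)}$ built from $\xi^j$ and $\overline{\alpha}$. Setting $p_{(j,k)} = q_j r_k$ yields $\Lmap{A} = \sum_{j,k} p_{(j,k)} \Lmap{A}^{(j,k)}$ with $p_{(j,k)} \ge 0$ and $\sum_{j,k} p_{(j,k)} = \big(\sum_j q_j\big)\big(\sum_k r_k\big) = 1$, which is the claimed form. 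The reverse inclusion — that every such convex mixture is a causal local transformation — is immediate since each $\Lmap{A}^i$ is causal (it is a special case of Definition~\ref{Def:CausalLocalTransformations} with deterministic preprocessing and postprocessing) and causality/complete positivity is preserved under convex combinations by linearity.

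The only genuinely delicate point is the Birkhoff-type step: justifying that a column-stochastic matrix (here $\Pgiven{X}{X'}$, and the conditional $\Pgiven{A'}{XAX'}$ viewed appropriately as stochastic in its output index) decomposes as a finite convex combination of deterministic stochastic matrices. This is elementary — each column is a probability vector, hence a convex combination of standard basis vectors, and one can take a common refinement across columns — but it should be stated cleanly, perhaps by induction on the number of nonzero entries, or simply by invoking that the set of column-stochastic $\{0,1\}$-matrices is exactly the vertex set of the (compact, convex) polytope of column-stochastic matrices. A secondary bookkeeping subtlety is that the postprocessing $\Pgiven{A'}{XAX'}$ is conditioned on $x$ as well as $x'$, but since we have already fixed $x=\xi^j(x')$ before decomposing it, this dependence collapses harmlessly into the $x'$-dependence allowed by Definition~\ref{Def:LocalDeterministicMap}; I would be careful to perform the two decompositions in this order to avoid an artificial blow-up. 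Everything else is routine substitution and normalization arithmetic.
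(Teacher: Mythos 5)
Your proposal is correct and follows essentially the same route as the paper's proof: decompose both $\Pgiven{X}{X'}$ and $\Pgiven{A'}{XAX'}$ into convex mixtures of deterministic distributions, take the product decomposition with weights $q_j r_k$, and use the fact that $x$ is forced to equal $\xi^j(x')$ to collapse the postprocessing into an output map depending only on $(a,x')$, exactly as the paper does via $g'_i(a,x') = g_\omega(f_\xi(x'),a,x')$. The only difference is cosmetic bookkeeping (the paper introduces a product random variable $I = \Xi\times\Omega$), so no further comparison is needed.
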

\begin{proof}
  See Section~\ref{Sec:ProofsLocalTransformations:Convex}.
\end{proof}

\subsection{Correlation sets}
\label{Sec:Def:CorrelationSets}
\marginnote{
 \begin{center}
   \includegraphics{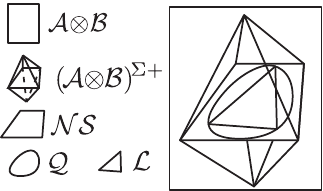}
 \end{center}
 \captionof{figure}{
   \label{Fig:CorrelationSets}
   The vector space $\Pspace{A} \otimes \Pspace{B}$ contains normalized and nonnegative probability distributions $(\Pspace{A} \otimes \Pspace{B})^{\Sigma+}$.
   In quantum information, we customarily distinguish the subspace $\mathcal{NS}$ obeying the nonsignaling conditions of Definition~\ref{Def:Nonsignaling}, the set of quantum correlations~\cite{Goh2018} $\mathcal{Q}$ and the local set $\mathcal{L}$ of Definition~\ref{Def:LocalBehaviors}.
 }
}
The set of local behaviors is closed under local transformations.
\begin{proposition}
  Let $\vecP{AB}$ be a local behavior according to Definition~\ref{Def:LocalBehaviors}, and $\Lmap{A}$, $\Lmap{B}$ be local transformations according to Proposition~\ref{Prop:LocalTransformationsConvex}.
  Then $(\Lmap{A} \otimes \Lmap{B}) ~ \vecP{AB}$ is a local behavior.
\end{proposition}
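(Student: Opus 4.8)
The plan is to reduce the claim to the deterministic case and then invoke linearity. By Proposition~\ref{Prop:LocalTransformationsConvex}, write $\Lmap{A} = \sum_i p_i \Lmap{A}^i$ and $\Lmap{B} = \sum_j q_j \Lmap{B}^j$ as convex mixtures of deterministic local maps. Then $(\Lmap{A} \otimes \Lmap{B}) = \sum_{i,j} p_i q_j (\Lmap{A}^i \otimes \Lmap{B}^j)$ is a convex combination, so it suffices to show that a single deterministic pair $(\Lmap{A}^i \otimes \Lmap{B}^j)$ maps local behaviors to local behaviors; local behaviors being convex mixtures of deterministic ones (Definition~\ref{Def:LocalBehaviors}), and $\Lmap{A}^i \otimes \Lmap{B}^j$ being linear, it further suffices to treat the case where $\vecP{AB}$ is itself a \emph{deterministic} behavior.

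So the core step is: if $\vecP{AB}$ is deterministic and $\Lmap{A}$, $\Lmap{B}$ are deterministic local maps (Definition~\ref{Def:LocalDeterministicMap}), then $(\Lmap{A} \otimes \Lmap{B}) \vecP{AB}$ is again deterministic. Here I would argue directly from the formulas. A deterministic $\vecP{AB}$ assigns $b$ as a function of the inputs in $\inputs{B}$ and $a$ as a function of the inputs in $\inputs{A}$. The map $\Lmap{A}$ is given by an input map $\xi^A: x' \mapsto x$ and output maps $\alpha^A_{x'}$, and similarly $\Lmap{B}$ by $\xi^B$ and $\alpha^B_{y'}$. Applying $\Lmap{A} \otimes \Lmap{B}$ precomposes the input dependence with $\xi^A, \xi^B$ and postcomposes the output with $\alpha^A_{x'}, \alpha^B_{y'}$: concretely, the new behavior sends $(x',y') \mapsto (a',b')$ where $a' = \alpha^A_{x'}(a)$ with $a$ the original output on input $\xi^A(x')$ (and whatever signaling inputs feed A, themselves reindexed), and symmetrically for $b'$. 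Since a composition of functions is a function, the result is deterministic — I just need to check the index bookkeeping, in particular that the primed signaling inputs map through $\xi^A, \xi^B$ to valid unprimed inputs so the composite is well-defined on all of $\inputs{A}'$, $\inputs{B}'$.

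The one genuine subtlety — and the step I expect to need the most care — is the interaction with \emph{signaling}: in a scenario with, say, $(\text{A},\text{B}) \in E$, the deterministic behavior's B-output can depend on A's input $x$, and under the transformation B's input becomes $y'$ while A's input becomes $x'$; I must verify that the reindexing $x = \xi^A(x')$ is the \emph{same} $\xi^A$ governing A's side, so that the cross-dependence is consistently relabeled and the result still has the product-of-deterministic-distributions form of Section~\ref{Sec:Def:DeterministicLocalBehaviors}. This is where one confirms that local transformations act independently on the input-relabeling of each party, so no inconsistency arises. Once this is checked, the deterministic case is done; unwinding the two reductions (convexity of the maps, then convexity of the behavior) via bilinearity of $\otimes$ and a final regrouping of the resulting double-indexed convex sum completes the proof. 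The multi-party version is identical with more indices.
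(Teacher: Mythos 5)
Your proposal is correct and follows essentially the same route as the paper's (sketched) proof: decompose both local transformations and the local behavior convexly into deterministic pieces, observe that a deterministic local map applied to a deterministic behavior yields a deterministic behavior, and regroup the convex sum. Your extra care about the input-reindexing under signaling is a sound elaboration of the deterministic step that the paper leaves implicit, not a different argument.
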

\begin{proof}(Sketch).
  Remark that a deterministic local transformation applied to a deterministic behavior produces a deterministic behavior.
  The result follows from the convex decomposition of the local behavior into deterministic behaviors, and the convex decomposition of the local transformation into deterministic local transformations. 
\end{proof}

Many types of correlation sets are convex, in particular when the underlying scenario allows for shared randomness between all parties.
We are particularly interested in the convex correlation sets that are closed under local transformations.
This is the case for the set of quantum correlations~\cite{Goh2018} or almost quantum correlations~\cite{Navascues2015}.
Some convex correlation sets are polytopes.
For
example, the local, nonsignaling~\cite{Brunner2014} and causally ordered sets~\cite{Branciard2016} are polytopes.

By the hyperplane separation theorem~\cite[Corollary 11.4.2]{Rockafellar1970}, if a correlation vector $\vecP{AB}$ is not part of some convex correlation set $\Kset{AB}$, there exists a linear inequality that separates that vector from the set:
\begin{equation}
  \label{Eq:SeparatingHyperplane}
  \sum_{abxy} \phi(a,b,x,y) ~ \Pgivenop{AB}{XY}{ab}{xy} \le u \;,
\end{equation}
a property we formalize below.

\subsection{Bell expressions and Bell-like inequalities}
\label{Sec:Def:BellExpressions}
\marginnote{
  It would be a mistake to think of $\expr{AB}$ and $\vecP{AB}$ as members of the {\em same} vector space. 
  Bell expressions and probability vectors do not transform alike.
  Their invariant spaces are different, as Propositions~\ref{Prop:Decomposition} and~\ref{Prop:DecompositionDual} show.
  Moreover, the ``inner products'' $\expr{AB} \cdot \expr{AB}'$ or $\vecP{AB} \cdot \vecP{AB}'$ would have little physical justification.
}[-1cm]
A {\em Bell expression} is a linear map (or linear form) $\expr{AB}: \Pspace{A} \otimes \Pspace{B} \to \RR$.
Formally, $\expr{AB}$ is an element of the dual vector space $(\Pspace{A} \otimes \Pspace{B})^*$.
If we identify probability vectors $\vecP{AB}$ with column vectors, then Bell expressions are row vectors with real coefficients $\phi(a,b,x,y)$ so that
\begin{equation}
  \expr{AB} ~ \vecP{AB} = \sum_{abxy} \phi(a,b,x,y) ~ \Pgivenop{AB}{XY}{ab}{xy} \;.
\end{equation}
We use the following convention: We write column vectors/behaviors using latin letters with a vector arrow (as in $\vecP{}$), row vectors/Bell expressions using greek letters without an arrow (as in $\expr{}$), and matrices/linear maps using bold greek letters (as in $\Lmap{}$).

\marginnote{
 \begin{center}
   \includegraphics{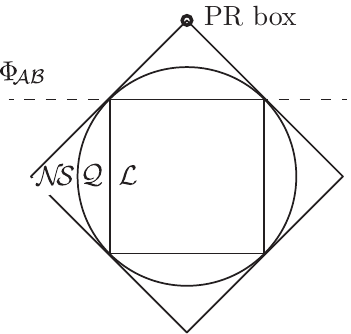}
 \end{center}
 \captionof{figure}{
   \label{Fig:KInequality}
   The CHSH inequality is also a $\mathcal{L}$-inequality certifying that the PR box correlations are not part of the local set $\mathcal{L}$.
 }
}[-2cm]
Formally, the membership certificates presented in~\eqref{Eq:SeparatingHyperplane} are defined as follows (see example in Figure~\ref{Fig:KInequality}).
\begin{definition}
  A $\Kset{}${\em-inequality} $(\expr{AB}, u)$ is a Bell expression $\expr{AB}\in(\Pspace{A}\otimes\Pspace{B})^*$ along with an upper bound $u \in \mathbb{R}$ such that:
  \begin{equation}
    \vecP{AB} \in \Kset{AB} \quad \Rightarrow \quad \expr{AB} \vecP{AB} \le u\;,
  \end{equation}
  where~$\Kset{AB}$ is a convex set closed under local transformations.
\end{definition}

\subsubsection{Transformations of Bell expressions}
Remember that local transformations act on behaviors as matrix-vector multiplication.
For $\vecP{A} \in \Pspace{A}$, $\Lmap{A}: \Pspace{A} \to \Pspace{A'}$:
\begin{equation}
  \vecP{A'}' = \Lmap{A} ~ \vecP{A}, \qquad \Pgivenop{A'}{X'}{a'}{x'} = \sum_{ax} \Lambda_{(a',x),(a,x)} ~ \Pgivenop{A}{X}{a}{x}\;.
\end{equation}

We can also define an action of local transformations on Bell expressions.
\marginnote{Note that the order of source and target spaces is reversed between $\Lmap{A}$ and $\Lmap{A'}$.}
Let $\Lmap{A'}: \Pspace{A'} \to \Pspace{A}$ be a local transformation, and $\expr{A}: \Pspace{A} \to \RR$ a Bell expression.
Then $\expr{A'}' = \expr{A} ~ \Lmap{A'}$ defines a Bell expression that takes a behavior in $\Pspace{A'}$, applies the transformation $\Lmap{A'}$ to obtain a behavior in $\Pspace{A}$ and finally evaluates the original $\expr{A}$ on the transformed behavior.
This corresponds to the row-vector-matrix multiplication
\begin{equation}
  \phi'(a',x') = \sum_{ax} \phi(a,x) ~ \Lambda'_{(a,x),(a',x')} \;.
\end{equation}
\marginnote{
Formally~\cite[Chap. 3]{Roman2005}, this action of local transformation corresponds to the adjoint of $\Lmap{A'}: \Pspace{A'} \to \Pspace{A}$, usually written $\Lmap{A'}^\dagger: \Pspace{A}^* \to \Pspace{A'}^*$.
}
We easily verify the following proposition.
\begin{proposition}
  \label{Prop:BellInequalityTransformation}
  Let $(\expr{AB}, u)$ be a $\Kset{}$-inequality with $\expr{AB} \in (\Pspace{A} \otimes \Pspace{B})^*$.
  Let $\Lmap{A'}: \Pspace{A'} \to \Pspace{A}$ and $\Lmap{B'}: \Pspace{B'} \to \Pspace{B}$ be local transformations.
  Then $(\expr{A'B'}', u)$ with $\expr{A'B'}' = \expr{AB} (\Lmap{A'} \otimes \Lmap{B'})$ is also a $\Kset{}$-inequality.
\end{proposition}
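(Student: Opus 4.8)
The plan is to unwind the two facts already in place — that $\Kset{}$ is closed under local transformations, and that $(\expr{AB},u)$ is a valid $\Kset{}$-inequality — and chain them through the identity $\expr{A'B'}'=\expr{AB}(\Lmap{A'}\otimes\Lmap{B'})$ introduced just above the statement. Concretely, I would fix an arbitrary behavior $\vecP{A'B'}\in\Kset{A'B'}$ and aim to show $\expr{A'B'}'\,\vecP{A'B'}\le u$; since $\vecP{A'B'}$ is arbitrary, that is exactly the defining property of the pair $(\expr{A'B'}',u)$.

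First I would observe that, by Proposition~\ref{Prop:LocalTransformationsConvex}, $\Lmap{A'}$ and $\Lmap{B'}$ are convex mixtures of deterministic local maps and hence genuine local transformations, so $\Lmap{A'}\otimes\Lmap{B'}$ is an admissible local operation on the bipartite scenario. Because $\Kset{}$ is assumed closed under local transformations, this gives $(\Lmap{A'}\otimes\Lmap{B'})\,\vecP{A'B'}\in\Kset{AB}$. I would then apply the defining inequality of $(\expr{AB},u)$ to this transformed behavior, $\expr{AB}\bigl((\Lmap{A'}\otimes\Lmap{B'})\,\vecP{A'B'}\bigr)\le u$, and use associativity of the matrix products together with $\expr{A'B'}'=\expr{AB}(\Lmap{A'}\otimes\Lmap{B'})$ to rewrite the left-hand side as $\expr{A'B'}'\,\vecP{A'B'}$. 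Since the bound $u$ is never rescaled, this is precisely the claim that $(\expr{A'B'}',u)$ is a $\Kset{}$-inequality.

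The one point deserving care — and the step I would spell out most carefully — is the precise meaning of ``closed under local transformations'' in the definition of a $\Kset{}$-inequality: it must be read as a property of the family $\Kset{}$ across scenarios, namely that $\Lmap{A'}\otimes\Lmap{B'}$ sends $\Kset{A'B'}$ into $\Kset{AB}$ even when the cardinalities $\pst{A'},\pst{B'}$ differ from $\pst{A},\pst{B}$. I would note that this is the same cross-scenario reading used in the earlier closure statement for the local set $\mathcal{L}$, and that the other examples mentioned (quantum, almost-quantum, nonsignaling and causally ordered correlations) all enjoy it; granting this, the whole argument reduces to the one-line chain of (in)equalities above, so I do not anticipate any genuine obstacle beyond making that reading explicit.
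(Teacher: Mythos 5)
Your proposal is correct and matches the paper's own argument: the paper's proof is exactly the same one-liner — shift $(\Lmap{A'}\otimes\Lmap{B'})$ onto the behavior, invoke closure of $\Kset{}$ under local transformations, and apply the original bound $u$. Your careful note about the cross-scenario reading of closure is a reasonable elaboration but does not change the route.
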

\begin{proof}
  Shift the application of $(\Lmap{A'} \otimes \Lmap{B'})$ to $\vecP{A'B'}$ and note that, by definition, the correlation set $\Kset{AB}$ is closed under local transformations.
\end{proof}

\section{Proofs}
\label{Sec:ProofsLocalTransformations}

Here we provide proofs of the preceding propositions.

\subsection{Completely positive local transformations are causal}
\label{Sec:ProofsLocalTransformations:Causal}
We prove Proposition~\ref{Prop:ProperLocalMapsCausal} in two steps: First we show that CP local maps are conditional probability distributions, then we prove the proposition, {\it i.e.}, CP local maps can be understood as a combination of pre- and post-processing.
\begin{lemma}
  CP local maps~$\Lmap{A}:\Pspace{A}\rightarrow\Pspace{A}'$ correspond to conditional probability distributions of the form~$\Pgiven{A'X}{AX'}$, {\it i.e.},
  \begin{equation}
    \exists \Pgiven{A'X}{AX'}: \Lambda_{(a',x'),(a,x)}=\Pgivenop{A'X}{AX'}{a',x}{a,x'}
    \,.
  \end{equation}
\end{lemma}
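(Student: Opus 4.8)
The plan is to read off, directly from the two defining properties of a completely positive local map, that the matrix entries $\Lambda_{(a',x'),(a,x)}$ are (i) nonnegative and (ii) satisfy $\sum_{a',x}\Lambda_{(a',x'),(a,x)}=1$ for every fixed pair $(a,x')$. Items (i) and (ii) together say precisely that, for each $(a,x')$, the array $(a',x)\mapsto\Lambda_{(a',x'),(a,x)}$ is a probability distribution, that is, that $\Lambda$ coincides with a conditional distribution $\Pgivenop{A'X}{AX'}{a',x}{a,x'}$ as claimed. The two items will be obtained from the two halves of Definition~\ref{Def:PositiveTransformations} separately: normalization needs only positivity on a single party, whereas nonnegativity of the individual entries genuinely needs complete positivity together with a \emph{signaling} auxiliary party.

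For normalization I would feed $\Lmap{A}$ the deterministic single-party behaviors. For any function $a_0$ with $a_0(x)\in\domain{A_x}$, the vector $\Pgivenop{A}{X}{a}{x}=\delta_{a,a_0(x)}$ lies in $\Pspace{A}^{\Sigma+}$, so positivity forces $\Lmap{A}\vecP{A}$ to be normalized, that is, $\sum_{a'}\sum_x\Lambda_{(a',x'),(a_0(x),x)}=1$ for every $x'$. Setting $g_{x'}(x,a)\Def\sum_{a'}\Lambda_{(a',x'),(a,x)}$, this reads $\sum_x g_{x'}(x,a_0(x))=1$; comparing this identity for two choices of $a_0$ that differ at a single input shows that $g_{x'}(x,a)$ does not depend on $a$, hence $\sum_x g_{x'}(x)=1$ and therefore $\sum_{a',x}\Lambda_{(a',x'),(a,x)}=1$ for every $(a,x')$.

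The delicate step is nonnegativity of the individual entries: positivity alone is not enough, since the map displayed just after Proposition~\ref{Prop:ProperLocalMapsCausal} is positive and normalization-preserving yet has a negative entry. Here I would invoke the clause of Definition~\ref{Def:PositiveTransformations} allowing scenarios with signaling from A to B. Take an auxiliary party B with a single input and $X$ outputs, $\pst{B}=(X)$, and the deterministic behavior $\Pgivenop{AB}{XY}{ab}{xy}=\delta_{b,x}\,\delta_{a,\bar a(x)}$ with $\bar a(x)\in\domain{A_x}$ arbitrary; it is normalized and nonnegative, and B's output merely broadcasts Alice's input. Applying $\Lmap{A}\otimes\MidP{B}$ gives $\Pgivenopprime{A'B}{X'Y}{a'b}{x'y}=\Lambda_{(a',x'),(\bar a(b),b)}$, which complete positivity forces to be nonnegative for all $a',x',b$. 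Given a target index, choosing $b=x$ and $\bar a$ with $\bar a(x)=a$ isolates $\Lambda_{(a',x'),(a,x)}\ge 0$. Combined with the previous paragraph, this proves the lemma. The main obstacle — and the conceptual heart of the argument — is exactly this point: one must realize that an auxiliary \emph{signaling} party is needed to decouple the individual-$x$ contributions that positivity constrains only after summation over $x$; once that device is available, what remains is routine bookkeeping, including the minor matching of the ranges $a\in\domain{A_x}$ when the $A_x$ vary with $x$.
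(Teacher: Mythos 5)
Your proof is correct and takes essentially the same route as the paper: both arguments hinge on applying complete positivity with a signaling auxiliary party whose output broadcasts Alice's input, and your family of devices $\delta_{b,x}\,\delta_{a,\bar a(x)}$ is exactly the collection of fixed-$y$ slices of the paper's single ``swap'' device $\Pgivenop{AB}{XY}{a,b}{x,y}=\delta_{a,\restr{y}{A_x}}\,\delta_{b,x}$, which likewise isolates the entries $\Lambda_{(a',x'),(a,x)}\ge 0$. The only cosmetic difference is that you extract normalization from single-party deterministic behaviors together with a comparison of sections differing at one input (which incidentally also shows that $\sum_{a'}\Lambda_{(a',x'),(a,x)}$ is independent of $a$, a fact the paper defers to the proof of Proposition~\ref{Prop:ProperLocalMapsCausal}), whereas the paper reads both nonnegativity and normalization off the same swap device.
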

\begin{proof}
  Let~$\Lmap{A}$ act on the first part of some joint device~$\vecP{AB}\geq0$ which is defined as follows:
  \begin{align}
    &\pst{A}=(A_1,\dots,A_X)\,,\notag\\
    &\pst{B}=(B_1,\dots,B_X)\quad\text{with}\quad B_i=\max_x A_x\,,\notag\\
    &\Pgivenop{AB}{XY}{a,b}{x,y}=
      \delta_{a,\restr{y}{A_x}} \delta_{b,x}\,,
      \label{eq:swap}
  \end{align}
  where~$\restr{y}{A_x}=((y-1)\mod A_x)+1$.
  This device takes takes two inputs and swaps them, where additionally the output~$a$ is truncated to the range~$1,\dots,A_x$ appropriate for input~$x$.
  The distribution~$\Pgiven{AB}{XY}$ transforms to~$\Pgiven{A'B}{X'Y}$ via
  \begin{align}
    \Pgivenop{A'B}{X'Y}{a',b}{x',y}=\sum_{x=1}^X\sum_{a=1}^{A_x}\Lambda_{(a',x'),(a,x)}\Pgivenop{AB}{XY}{a,b}{x,y}
    \,.
    \notag
  \end{align}
  Since~$\Lmap{A}$ is completely positive, we have
  \begin{align}
    \forall a',b,x',y:& \Pgivenop{A'B}{X'Y}{a',b}{x',y}\geq 0\,,\notag\\
    \forall x',y:& \sum_{a'=1}^{A_{x'}}\sum_{b=1}^{B_y} \Pgivenop{A'B}{X'Y}{a',b}{x',y} =1
                   \,.
                   \notag
	\end{align}
	By plugging Equation~\eqref{eq:swap} into the above equations, we obtain that the following is nonnegative
	for every choice of~$a',b,x',y$:
	\begin{align}
          &\sum_{x=1}^X\sum_{a=1}^{A_x}\Lambda_{(a',x'),(a,x)}
            \delta_{a,\restr{y}{A_x}} \delta_{b,x}
            = \Lambda_{(a',x'),(\restr{y}{A_b},b)}
		\,,
		\notag
	\end{align}
	and that the following sums to~$1$ for every choice of~$x',y$:
	\begin{align}
		\sum_{a'=1}^{A_{x'}}
		\sum_{b=1}^{B_y}
		\sum_{x=1}^X
		\sum_{a=1}^{A_x}
		\Lambda_{(a',x'),(a,x)}
		\delta_{a,\restr{y}{A_x}} \delta_{b,x}
		=
		\sum_{a'=1}^{A_{x'}}
		\sum_{b=1}^{B_y}
		\Lambda_{(a',x'),(\restr{y}{A_b},b)}
		\,.
		\notag
	\end{align}
	So,~$\Lmap{A}$ is a nonnegative function on four variables with the property that the sum over~$a'$ and~$b$ yields 1: It is a probability distribution of the form~$\Pgiven{A'X}{AX'}$.
\end{proof}
By invoking the above Lemma, we prove Proposition~\ref{Prop:ProperLocalMapsCausal}.
\begin{proof}[Proof of Proposition~\ref{Prop:ProperLocalMapsCausal}]
	Suppose towards a contradiction that there exist some values~$x'_0,x_0,a_0,a_+$ where
	\begin{align}
		S:=\sum_{a'}\Pgivenop{A'X}{AX'}{a',x_0}{a_0,x'_0} < \sum_{a'}\Pgivenop{A'X}{AX'}{a',x_0}{a_+,x'_0}=:Q
		\,,
		\notag
	\end{align}
	{\it i.e.}, the probability that~$\Lmap{A}$ outputs~$x_0$ depends on the value of~$A$.
	If we would apply this CP local map on the device
	\begin{align}
		\Pgivenop{A}{X}{a}{x}=\delta_{x,x_0}\delta_{a,a_0}+(1-\delta_{x,x_0})\delta_{a,a_+}
		\,,
		\notag
	\end{align}
	then normalization is {\em not\/} preserved:
	\begin{align}
		\sum_{a',x,a}&\Pgivenop{A}{X}{a}{x}\Pgivenop{A'X}{AX'}{a',x}{a,x'_0}\notag\\
		&=
		\sum_{a',x,a}\left(\delta_{x,x_0}\delta_{a,a_0}+(1-\delta_{x,x_0})\delta_{a,a_+}\right) \Pgivenop{A'X}{AX'}{a',x}{a,x'_0}\notag\\
		&=
		\sum_{a'}\Pgivenop{A'X}{AX'}{a',x_0}{a_0,x'_0}+\sum_{a',x\not=x_0}\Pgivenop{A'X}{AX'}{a',x}{a_+,x'_0}\notag\\
		&=
		S+1-Q<1
		\,.
		\notag
	\end{align}
\end{proof}

\subsection{Local transformations as convex mixtures of deterministic transformations}
\label{Sec:ProofsLocalTransformations:Convex}
Thanks to Proposition~\ref{Prop:ProperLocalMapsCausal} we can identify the CP local map~$\Lmap{A}$ with~$\Pgiven{X}{X'}\Pgiven{A'}{XAX'}$ and~$\Lmap{A}^i$ with~$\Pgiven{X}{X',I=i}\Pgiven{A'}{AX',I=i}$.
Thus, in order to prove Proposition~\ref{Prop:LocalTransformationsConvex}, we have to show that~$\Pgiven{X}{X'}\Pgiven{A'}{XAX'}$ can be written as a convex combination of {\em deterministic\/} distributions~$\Pgiven{X}{X',I=i}\Pgiven{A'}{AX',I=i}$.
\begin{proof}[Proof of Proposition~\ref{Prop:LocalTransformationsConvex}]
	First, we decompose~$\Pgiven{X}{X'}$ and~$\Pgiven{A'}{XAX'}$ as convex combinations of deterministic distributions:
	\begin{align}
		\Pgivenop{X}{X'}{x}{x'}&=\sum_{\xi}\Pop{\Xi}{\xi}\delta_{x,f_\xi(x')}\,,\notag\\
		\Pgivenop{A'}{XAX'}{a'}{x,a,x'}&=\sum_{\omega}\Pop{\Omega}{\omega}\delta_{a',g_\omega(x,a,x')}\notag\,,
	\end{align}
	where~$f_\xi$ and~$g_\omega$ are functions.
	From this, we get
	\begin{align}
		\Pgivenop{X}{X'}{x}{x'}&\Pgivenop{A'}{XAX'}{a'}{x,a,x'}\notag\\
		&= \sum_{\xi\omega}\Pop{\Xi}{\xi}\Pop{\Omega}{\omega} \delta_{x,f_\xi(x')} \delta_{a',g_\omega(x,a,x')}\notag\\
		&= \sum_{\xi\omega}\Pop{\Xi}{\xi}\Pop{\Omega}{\omega} \delta_{x,f_\xi(x')} \delta_{a',g_\omega(f_\xi(x'),a,x')}\notag\,.
	\end{align}
	Now, we define a random variable~$I=\Xi\times\Omega$:
	\begin{align}
		\Pop{I}{i}=\Pop{I}{\xi,\omega}:=\Pop{\Xi}{\xi}\Pop{\Omega}{\omega}
		\,;
		\notag
	\end{align}
	and a function
	\begin{align}
		g'_i(a,x'):=g_\omega(f_\xi(x'),a,x')
		\,.
		\notag
	\end{align}
	We conclude the proof by noting that
	\begin{align}
		\Pgivenop{X}{X'}{x}{x'}\Pgivenop{A'}{XAX'}{a'}{x,a,x'} = \sum_i\Pop{I}{i} \delta_{x,f_\xi(x')}\delta_{a',g'_i(a,x')}
		\,.
		\notag
	\end{align}
      \end{proof}

\newpage
\part{Invariant subspaces}
\label{Part:InvariantSubspaces}
We now motivate the analysis of the invariant subspaces of local transformations.
In particular, those invariant subspaces correspond to physical properties such as normalization, or being nonsignaling.
Indeed, local transformations map normalized behaviors to normalized behaviors.
As they act locally, they map nonsignaling behaviors to nonsignaling behaviors.
As normalization and nonsignaling constraints are written using linear equalities, they restrict the linear subspaces of which behaviors can be part of.
This second part of our manuscript is structured as follows.
Below, we provide an overview of the goals of this part using a concrete example, and show how single party invariant subspaces are linked to the multi-party structure. 
In Section~\ref{Sec:InvariantSubspacesSingle}, we construct the invariant subspaces of local transformations for the single party case.
We consider the multi-party case in Section~\ref{Sec:InvariantSubspacesMulti}, and identify the normalization, nonsignaling and signaling subspaces in arbitrary scenarios.
Applications are discussed in Section~\ref{Sec:InvariantApplications}: the equivalence of Bell-like inequalities, the optimization of their statistical properties, and the decomposition of assemblages/steering witnesses in steering scenarios.
Finally, Section~\ref{Sec:InvariantTechnical} proves the uniqueness of the decomposition into invariant subspaces, and collects longer proofs of the preceding sections.
This last section is more technical than the rest of the manuscript and should be skipped at first reading.

\section{Motivation}
\label{Section:InvariantSubspaces:Motivation}
\marginnote{
  To verify our understanding of that notation, we give a concrete example here.
  With the ordering of coefficients defined in Section~\ref{Sec:Definition:Behaviors}, we describe the PR box correlations with the vector
  \begin{multline*}
    \vecP{AB} = ( \Pnosub{11}{11}, \Pnosub{12}{11}, \\
    \Pnosub{11}{12}, \Pnosub{11}{22}, \\
    \ldots, \Pnosub{22}{22} )^\top
  \end{multline*}
  that is
  \begin{multline*}
    \vecP{AB}^\text{PR} = (1, 0, 1, 0, 0, 1, 0, 1, 1, \\
    0, 1, 0, 0, 1, 0, 1)^\top/2
  \end{multline*}
  while the CHSH inequality $\expr{AB}^\text{CHSH}$ has coefficients
  \begin{multline*}
    \expr{AB}^\text{CHSH} = (1, -1, 1, -1, -1, 1,\\
    -1, 1, 1, -1, -1, 1, -1, 1, 1, -1)
  \end{multline*}
  and we verify $\expr{AB}^\text{CHSH} ~ \vecP{AB}^\text{PR} = 4$.
  In the bases described above, these vectors decompose as
  \begin{multline*}
    \vecP{AB}^\text{PR} = \Puni{A} \otimes \Puni{B} + \\
                        \sum_{xy} (-1)^{(x-1)(y-1)} \data{A}^{x} \otimes \data{B}^{y}
  \end{multline*}
  while
  \begin{multline*}
    \expr{AB}^\text{CHSH} = \\
    \sum_{xy} (-1)^{(x-1)(y-1)} \datad{A}^{x} \otimes \datad{B}^{y}\;.
  \end{multline*}
}
Consider a two-party scenario with binary inputs and outputs without signaling between Alice and Bob.
The behavior vector $\vecP{AB}$ has 16 coefficients; however normalization constraints~\eqref{Eq:NormalizedSubset} and nonsignaling constraints (Definition~\ref{Def:Nonsignaling}) reduce the degrees of freedom needed to describe the behavior to 8.
Indeed, any nonsignaling behavior has a description using the correlators~\cite{Sliwa2003}
\begin{equation}
  \mathbf{A} = (-1)^{a-1}\;, \qquad \mathbf{B} = (-1)^{b-1}\;,
\end{equation}
with their expectation values compactly written
\begin{align}
  \left< \mathbf{A}_x \right> &= \sum_a (-1)^{a-1} \Pgivenop{A}{X}{a}{x}\;, \quad
                       \left< \mathbf{B}_y \right> = \sum_b (-1)^{b-1} \Pgivenop{B}{Y}{b}{y}\;, \nonumber \\
  \label{Eq:BinaryCorrelators}
  \left< \mathbf{A}_x \mathbf{B}_y \right> &= \sum_{a,b} (-1)^{(a-1)(b-1)} \Pgivenop{AB}{XY}{ab}{xy}\;.
\end{align}
for $a,b,x,y=1,2$;
The behavior $\vecP{AB}$ can be reconstructed with
\begin{equation}
  \vecP{AB} = \Puni{A} \otimes \Puni{B} + \sum_x \left< \mathbf{A}_x \right> ~ \data{A}^x \otimes \Puni{B} + \sum_y \left< \mathbf{B}_y \right> ~ \Puni{A} \otimes \data{B}^y + \sum_{xy} \left< \mathbf{A}_x \mathbf{B}_y \right> \data{A}^x \otimes \data{B}^y\;,
\end{equation}
  where
\begin{equation}
  \Puni{A} = \Puni{B} = \frac{1}{2} \begin{pmatrix} 1 \\ 1 \\ 1 \\ 1 \end{pmatrix}\;,
  \quad
  \data{A}^1 = \data{B}^1 = \frac{1}{2} \begin{pmatrix} 1 \\ -1 \\ 0 \\ 0 \end{pmatrix}\;,
  \quad
  \data{A}^2 = \data{B}^2 = \frac{1}{2} \begin{pmatrix} 0 \\ 0 \\ 1 \\ -1 \end{pmatrix}\;,
\end{equation}
using the coefficient enumeration convention of Section~\ref{Sec:Definition:Behaviors}.
The correlator notation provides several advantages: A behavior is specified using a minimal number of coefficients, Bell inequalities/expressions can be written in a canonical form under nonsignaling constraints, and it directly expresses the nature of correlations, for example whether marginals are uniformly random or not.

At the single party level ($\vecP{A} \in \RR^4$), we base our exploration of invariant subspaces on the following observation.
\begin{proposition}
  \label{Prop:LocalTransformationsNormalization}
  Let $\vecP{A}$ be an element of $\Pspace{A}$ such that $\sum_a \Pgivenop{A}{X}{a}{x} = c$, for all $x$ and a constant $c \in \RR$, but otherwise arbitrary.
  Let $\vecP{A}' = \Lmap{A} ~ \vecP{A}$ be the behavior after local transformation by an arbitrary $\Lmap{A}$.
  Then $\sum_{a'} \Pgivenopprime{A'}{X'}{a'}{x'} = c$.
\end{proposition}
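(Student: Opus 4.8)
The plan is to reduce to the case of a deterministic local map and then compute directly. By Proposition~\ref{Prop:LocalTransformationsConvex}, an arbitrary local transformation $\Lmap{A}$ decomposes as a convex mixture $\Lmap{A} = \sum_i p_i \Lmap{A}^i$ with each $\Lmap{A}^i$ a deterministic local map satisfying Definition~\ref{Def:LocalDeterministicMap}. Since the quantity $\sum_{a'} \Pgivenopprime{A'}{X'}{a'}{x'}$ is linear in $\Lmap{A}$ (it is a sum of matrix entries of $\Lmap{A}$ contracted against $\vecP{A}$), and $\sum_i p_i = 1$, it suffices to prove the claim for a single deterministic local map $\Lmap{A}^i$; the general statement then follows by taking the convex combination.

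So suppose $\Lmap{A}$ is deterministic, determined by an input map $\xi: \domain{X'} \to \domain{X}$ and output maps $\alpha_{x'}: \domain{A_{\xi(x')}} \to \domain{A'_{x'}}$, so that $\Lambda_{(a',x'),(a,x)} = 1$ exactly when $x = \xi(x')$ and $a' = \alpha_{x'}(a)$, and $0$ otherwise. Then for any fixed $x'$,
\begin{equation}
  \sum_{a'} \Pgivenopprime{A'}{X'}{a'}{x'} = \sum_{a'} \sum_{ax} \Lambda_{(a',x'),(a,x)} \Pgivenop{A}{X}{a}{x} = \sum_{a=1}^{A_{\xi(x')}} \Pgivenop{A}{X}{a}{\xi(x')}\;,
\end{equation}
where the last equality holds because, for each $a$ in the domain of $\alpha_{x'}$, there is exactly one $a'$ (namely $a' = \alpha_{x'}(a)$) with $\Lambda_{(a',x'),(a,x)} \neq 0$, and the only surviving column index is $x = \xi(x')$. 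By hypothesis $\sum_a \Pgivenop{A}{X}{a}{x} = c$ for every $x$, in particular for $x = \xi(x')$, so the right-hand side equals $c$. Since this holds for every $x'$, we are done.

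The only subtlety worth flagging — and the closest thing to an ``obstacle'' — is bookkeeping about the output ranges: one must check that $\alpha_{x'}$ is defined on exactly $\domain{A_{\xi(x')}}$ (which is guaranteed by Definition~\ref{Def:LocalDeterministicMap}, since the map's column indices $(a,x)$ are constrained by $a \in \domain{A_x}$), so that summing over $a'$ recovers precisely the sum $\sum_{a} \Pgivenop{A}{X}{a}{\xi(x')}$ with the correct upper limit $A_{\xi(x')}$ and nothing is double-counted or dropped. Apart from that, the argument is a one-line contraction plus a convexity reduction, and no deeper input is needed.
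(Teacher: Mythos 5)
Your proof is correct and follows essentially the same route as the paper: reduce to a single deterministic local map by the convex decomposition of Proposition~\ref{Prop:LocalTransformationsConvex}, then observe that summing over $a'$ collapses the map onto $\sum_a \Pgivenop{A}{X}{a}{\xi(x')} = c$. The only cosmetic difference is that the paper writes the deterministic map in its conditional-probability form $\Pgiven{A'}{AX'}\Pgiven{X}{X'}$ rather than via the pair $(\xi,\overline{\alpha})$, which changes nothing of substance.
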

\begin{proof}
  $\Lmap{A}$ is a mixture of deterministic transformations.
  For a single element in this decomposition:
  \begin{align}
    \sum_{a'} \Pgivenopprime{A'}{X'}{a'}{x'} & = \sum_{a'ax} \Pgivenop{A'}{AX'}{a'}{ax'} \Pgivenop{X}{X'}{x}{x'} \Pgivenop{A}{X}{a}{x} \\
    & = \sum_{x} \Pgivenop{X}{X'}{x}{x'} \underbrace{\sum_{a} \Pgivenop{A}{X}{a}{x}}_{= c} = c\;. \qedhere
  \end{align}
\end{proof}

An invariant subspace $\Vof{A} \subset \Pspace{A}$ is such that any $\vecP{A} \in \Vof{A}$ has image $\Lmap{A} ~ \vecP{A} \in \Vof{A}$.
Our observation is that there are nontrivial subspaces of $\Pspace{A}$ (read different from $0$ or $\Pspace{A}$) which are invariant.
A first invariant subspace is spanned by $\{\data{A}^1, \data{A}^2\}$, as $\sum_a \data{A}^x = 0$.

As local transformations preserve normalization itself (the value of the constant $c$ in Proposition~\ref{Prop:LocalTransformationsNormalization}), another invariant subspace is spanned by $\{\Puni{A}, \data{A}^1, \data{A}^2\}$.
The last invariant subspace is the space $\RR^4$ itself, for which we complete our basis with a fourth vector $\sig{A} = (1,1,-1,-1)^\top/2$, such that elements of the chain
\begin{equation}
  0 \subset \spanset{\data{A}^1, \data{A}^2} \subset \spanset{\data{A}^1, \data{A}^2, \Puni{A}} \subset \spanset{\data{A}^1, \data{A}^2, \Puni{A}, \sig{A}} = \RR^4 = \Pspace{A}
\end{equation}
are invariant subspaces of local transformations.
We will prove later that no other decomposition in terms of invariant subspaces is possible.

The basis $\left\{ \data{A}^1, \data{A}^2, \Puni{A}, \sig{A} \right\}$ induces a basis of the dual space $\Pspace{A}^*$, which we write $\left\{ \datad{A}^1, \datad{A}^2, \traceout{A}, \Sanh{A} \right\}$.
It is uniquely defined if we require that
\begin{equation}
  \datad{A}^x ~ \data{A}^x = \traceout{A} ~ \Puni{A} = \Sanh{A} ~ \sig{A} = 1
\end{equation}
and all other contractions equal to zero.

First, we have the map
\begin{equation}
  \label{Eq:InvariantSubspaces:Motivation:Ann}
  \Sanh{A}=(1,1,-1,-1)/2
\end{equation}
that verifies that the normalization of $\vecP{A}$ is balanced:
\begin{equation}
  \Sanh{A} ~ \vecP{A} = \sum_a \Pgivenop{A}{X}{a}{1} - \sum_a \Pgivenop{A}{X}{a}{2}\;.
\end{equation}
The subspace of $\Pspace{A}$ for which $\Sanh{A} ~ \vecP{A} = 0$ is exactly $\Vsub = \spanset{\data{A}^1, \data{A}^2, \Puni{A}}$.
Now, for any local transformation $\Lmap{A}$ and any $\vecP{A} \in \Vsub$, we have
\begin{equation}
  \Sanh{A} ~ (\Lmap{A} ~ \vecP{A}) = 0 \quad \Leftrightarrow \quad (\Sanh{A} ~ \Lmap{A}) ~ \vecP{A} = 0\;,
\end{equation}
and thus $(\Sanh{A} ~ \Lmap{A}) \in \spanset{\Sanh{A}}$, and $\spanset{\Sanh{A}}$ is an invariant subspace of $\Pspace{A}^*$.
As can be easily verified, another invariant subspace is given by $\spanset{\Sanh{A}, \traceout{A}}$, with $\traceout{A} = (1,1,1,1)/2$; we remark that $\Sanh{A} ~ \vecP{A} = 0$ and $\traceout{A} ~ \vecP{A} = 1$ are constraints obeyed by all normalized behaviors, and normalization is preserved by local transformations.
Finally, the dual basis is completed by $\datad{A}^1 = (1,-1,0,0)$ and $\datad{A}^2 = (0,0,1,-1)$.
Thus, the elements of the chain
\begin{equation}
  0 \subset \spanset{\Sanh{A}} \subset \spanset{\Sanh{A}, \traceout{A}} \subset \spanset{\Sanh{A}, \traceout{A}, \datad{A}^1, \datad{A}^2} = \Pspace{A}^*
\end{equation}
are invariant subspaces of $\Pspace{A}^*$.

We note that the value of the correlators~\eqref{Eq:BinaryCorrelators} can be recovered using these dual basis elements:
\begin{equation}
  \left<\mathbf{A}_x\right> = (\datad{A}^x \otimes \traceout{B}) ~ \vecP{AB}, \quad \left<\mathbf{B}_y\right> = (\traceout{A} \otimes \datad{B}^y) ~ \vecP{AB}, \quad \left<\mathbf{A}_x \mathbf{B}_y\right> = (\datad{A}^x \otimes \datad{B}^y) ~ \vecP{AB}\;.
\end{equation}
Finally, we remark that the above derivation was made for a two-party scenario without any signaling directions.
As we will see later, but can be verified explicitly by the reader, signaling from A to B is represented by the subspace $\spanset{\sig{A}} \otimes ~ \spanset{\data{B}^1, \data{B}^2}$, while signaling from B to A is represented by the subspace $\spanset{\data{A}^1, \data{A}^2} \otimes \spanset{\sig{B}}$.

As a conclusion of this overview, we see that a pertinent decomposition of the two-party behavior space comes directly from the decomposition of the invariant subspaces of single parties.

\clearpage

\section{Invariant subspaces of single party correlations}
\label{Sec:InvariantSubspacesSingle}

\marginnote{The constructions below can be generalized to arbitrary local transformations $\Lmap{A}: \Pspace{A} \to \Pspace{A'}$, where $\pst{A} \ne \pst{A'}$.
  The generalization is left to the reader, but the end result will be that the upper triangular form of Figure~\ref{Fig:LocalMapSpaces} applies to those maps as well.
}
We now study the invariant subspaces of the local transformation $\Lmap{A}: \Pspace{A} \to \Pspace{A}$, for a party A of arbitrary cardinality.
As $\Lmap{A}$ acts on behaviors and Bell expressions, we will study invariant subspaces of $\Pspace{A}$ and its dual $\Pspace{A}^*$.

\subsection{Linear forms}
\label{Sec:Def:LinearForms}
From Section~\ref{Sec:Def:BellExpressions}, we recall that a linear form $\expr{A}: \Pspace{A} \to \RR$ is an element of the dual space $\Pspace{A}^*$.
In the single party case, a useful example is the {\em trace out} map $\traceout{A} : \Pspace{A} \to \RR$, defined as:
\begin{equation}
  \label{Eq:Traceout}
  \traceout{A} ~ \vecP{A} = \frac{1}{X} \sum_{ax} \Pgivenop{A}{X}{a}{x} = \sum_{ax} \traceout{A}(a,x) ~ \Pgivenop{A}{X}{a}{x} \;, \qquad \traceout{A}(a,x) = 1/X\;.
\end{equation}
We observe that $\traceout{A} ~ \vecP{A} = 1$ for all normalized distributions.
Linear forms are written explicitly using their coefficients $\phi_\mathrm{A}(a,x)$ such that
\begin{equation}
  \expr{A} ~ \vecP{A} = \sum_{ax} \phi_\mathrm{A}(a,x) \Pgivenop{A}{X}{a}{x}\;.
\end{equation}

In the two-party case, linear forms respect the tensor structure, such that:
\begin{equation}
  (\expr{A} \otimes \expr{B}) \vecP{AB} = \sum_{abxy} \phi_\mathrm{A}(a,x) \phi_\mathrm{B}(b,y) \Pgivenop{AB}{XY}{ab}{xy}\;,
\end{equation}
and even partially applied
\begin{equation}
  \label{Eq:PartialApplication}
  \vecP{B}' = (\expr{A} \otimes \MidP{B}) \vecP{AB} \quad \Leftrightarrow \quad \Pgivenopprime{B}{Y}{b}{y} = \sum_{ax} \phi_\mathrm{A}(a,x) \Pgivenop{AB}{XY}{ab}{xy}\;.
\end{equation}

We observe that $(\traceout{A} \otimes \traceout{B}) ~ \vecP{AB} = 1$ for all normalized distributions, and that $(\traceout{A} \otimes \MidP{B}) ~ \vecP{AB}$ provides the marginal distribution $\vecP{B}$ for $x$ chosen uniformly at random.

\subsection{Notation for the Euclidean basis}
\label{Sec:InvariantSubspaces:Definitions}
\marginnote{If $\Pspace{A}$ was equipped with an inner product $\left<\cdot, \cdot \right>$, we could use the same basis for $\Pspace{A}$ and its dual, sending each basis element $\vec{e}_i$ to the dual element $\left<\vec{e}_i, \cdot \right>$.
  We do not have a meaningful inner product at hand, and as we will see, the spaces $\Pspace{A}$ and $\Pspace{A}^*$ have different decompositions into invariant subspaces.
  Thus, the usual approach based on orthogonal representations of finite or compact groups cannot work here.
  Our definitions for linear algebra are based on~\cite{Roman2005}, in particular Chapter 3.
  We summarize in this subsection the definitions required to understand the statements.
  The precise formulation of invariant subspaces and the proofs given in Section~\ref{Sec:InvariantTechnical} are based on the representation theory of associative algebras~\cite{Etingof2009}; however in the present section we keep the jargon to a minimum.
}
By identification $\Pspace{A} \sim \RR^{\dd{A}}$, we enumerate the elements of the standard basis on $x \in \domain{X}$ and $a\in\domain{A_x}$ (recall the definitions from Section~\ref{Sec:Definitions:Scenario}).
We denote by $\Pbasis{A}{a}{x}$ the basis element which has a single nonzero coefficient equal to one in the $(a,x)$ position, such that
\begin{equation}
  \boxed{
    \vecP{A} = \sum_{a,x} \Pgivenop{A}{X}{a}{x} \Pbasis{A}{a}{x}
  } \;.
\end{equation}
We enumerate $\Pbasis{A}{a}{x}^*$ the basis elements of $\Pspace{A}^*$, such that
\begin{equation}
  \boxed{
    \Pbasis{A}{a}{x}^* ~ \vecP{A} = \Pgivenop{A}{X}{a}{x}
  } \;.
\end{equation}
The bases $\left \{ \Pbasis{A}{a}{x} \right \}_{a,x}$ and $\left \{ \Pbasis{A}{a}{x}^* \right \}_{a,x}$ are dual (\cite[Theorem 3.11]{Roman2005}) in the sense that
\begin{equation}
  \Pbasis{A}{a'}{x'}^* ~ \Pbasis{A}{a}{x} = \delta_{a,a'} \delta_{x,x'}
\end{equation}
with $\delta$ the Kronecker delta.
This notation will be invaluable to define compactly a more appropriate basis of these spaces.

\subsection{Generalized correlators}
\label{Sec:InvariantSubspaces:Correlators}
\marginnote{
  Example: Let A have binary inputs and $A_1 = 3$ outputs for $x=1$ and $A_2 = 2$ outputs for $x=2$, i.e. $\pst{A} = (3,2)$.
  Then $\Puni{A} = (1/3,1/3,1/3,1/2,1/2)^\top$, $\vec{S}_1 = (1/3,1/3,1/3,-1/2,-1/2)^\top$, $\vec{C}_{1|1} = (1, 0, -1, 0, 0)^\top/3$, $\vec{C}_{2|1} = (0, 1, -1, 0, 0)^\top/3$, $\vec{C}_{1|2} = (0,0,0,1,-1)^\top/2$.
}
We now consider a basis of $\Pspace{A}$ and $\Pspace{A}^*$ that describes the physics at hand.
We first define the {\em uniformly random behavior}:
\begin{equation}
  \boxed{
    \Puni{A} = \sum_{x=1}^X \sum_{a=1}^{A_x} \frac{1}{A_x} \Pbasis{A}{a}{x}
  }\;,
\end{equation}
the {\em correlation vectors}:
\begin{equation}
  \label{Eq:CorrelationVectors}
  \boxed{
    \data{A}^{i|x} = \frac{\Pbasis{A}{i}{x} - \Pbasis{A}{A_x}{x}}{A_x}
  }\;, \quad x \in \domain{X}, \quad i \in \domain{A_x-1},
\end{equation}
\marginnote{Motivating a unique form for those definitions is one goal of the present manuscript.
  Part of it comes from the existence of invariant subspaces, while other choices come from existing conventions used in the field (Section~\ref{Sec:InvariantTechnical:FixingDOF}).
}
and the {\em normalization-violating} or {\em signaling vectors}:
\begin{equation}
  \boxed{
    \sig{A}^k =  \sum_{a=1}^{A_k} \frac{X}{A_k} \Pbasis{A}{a}{k} - \sum_{x=1}^X \sum_{a=1}^{A_x} \frac{1}{A_x} \Pbasis{A}{a}{x}
  }\;, \quad k\in\domain{X-1}.
\end{equation}
We write
\begin{equation}
  \boxed{
    \Zof{A} = \spanset{\Puni{A}}, \quad \Cof{A} = \spanset{\left\{\data{A}^{i|x}\right\}_{ix}}, \quad \Sof{A} = \spanset{\left\{\sig{A}^k\right\}_k}
  }
\end{equation}
the subspaces spanned by those vectors.

\subsubsection{Basis of the dual space}
\marginnote{
  Example: let $\pst{A} = (3,2)$ as above.
  Then $\traceout{A} = (1,1,1,1,1)/2$, $\Sanh{A}^1 = (1,1,1,-1,-1)/2$, $\datad{A}^{1|1} = (2,-1,-1,0,0)$, $\datad{A}^{2|1} = (-1,2,-1,0,0)$ and $\datad{A}^{1|2} = (0,0,0,1,-1)$.
}
Now, we move to the space $\Pspace{A}^*$ of Bell expressions.
We define the normalization checking linear forms $\Ssum{A}^x \in \Pspace{A}^*$:
\begin{equation}
  \label{Eq:DefSsum}
  x\in\domain{X}, \quad \Ssum{A}^x = \sum_{a=1}^{A_x} \Pbasis{A}{a}{x}^* \;,
\end{equation}
so that $\Ssum{A}^x \vecP{A} = 1$ for normalized distributions. 
From those, we can express the {\em trace out form} of Eq.~\eqref{Eq:Traceout} that verifies overall normalization ($\traceout{A} \vecP{A} = 1$) and discards parties:
\begin{equation}
  \label{Eq:Traceout1}
  \boxed{ \traceout{A} = \frac{1}{X} \sum_{x=1}^X \Ssum{A}^x } \;.
\end{equation}
Uniform normalization is checked by the forms $\Sanh{A}^k$:
\begin{equation}
  \label{Eq:DefSanh}
  \boxed{
    \Sanh{A}^k = \frac{\Ssum{A}^k - \Ssum{A}^X}{X}
  } \;, \quad k\in\domain{X-1},
\end{equation}
so that $\Sanh{A}^k~\vecP{A} = 0$ for normalized distributions.
We complete our basis by the forms $\datad{A}^{i|x}$:
\marginnote{Note that $\datad{A}^{i|x}$ corresponds to the usual binary correlators when $A_x=2$, which were discussed in the introductory Section~\ref{Section:InvariantSubspaces:Motivation}.
  Writing $\left< \mathbf{A}_x \right> = \Pgivenop{A}{X}{1}{x} - \Pgivenop{A}{X}{2}{x}$, we have, for normalized $\vecP{A}$: $\Sanh{A}^1 ~ \vecP{A} = 0$, $\traceout{A} ~ \vecP{A} = 1$ and $\datad{A}^{1|x} ~ \vecP{A} = \left< \mathbf{A}_x \right>$.}
\begin{equation}
  \label{Eq:Defdatad}
  \boxed{
    \datad{A}^{i|x} ~ \vecP{A} = A_x ~ \Pbasis{A}{i}{x}^* - \sum_{a} \Pbasis{A}{a}{x}^*
    } \;, \quad x\in\domain{X}, \quad i\in\domain{A_x-1},
\end{equation}

We now identify subspaces of $\Pspace{A}^*$:
\begin{equation}
  \boxed{
    \Zof{A}^* = \spanset{\traceout{A}}, \quad \Sof{A}^* = \spanset{\left\{\Sanh{A}^i\right\}_i}, \quad \Cof{A}^* = \spanset{\left\{\datad{A}^{i|x}\right\}_{ix}}
  }
\end{equation}
where the subspace $\Zof{A}^*$ corresponds to linear maps that take a constant value on normalized distributions, while $\Sof{A}^*$ corresponds to normalization-checking forms; finally $\Cof{A}^*$ corresponds to the forms that extract the correlation data from behaviors.

\subsection{Basis duality and projection on subspaces}
The bases given above are dual to each other.

\begin{proposition}
  \label{Prop:DualBasisNonZeros}
  Let $\left\{ \Puni{A} \right\} \cup \left\{ \data{A}^{i|x} \right \} \cup \left \{ \sig{A}^k \right \}$ be a basis of $\Pspace{A}$ for $x \in \domain{X}$, $k \in \domain{X-1}$ and $i \in \domain{A_x - 1}$.
  Let $\left\{ \traceout{A} \right\} \cup \left\{ \datad{A}^{i'|x'} \right\} \cup \left\{ \Sanh{A}^{k'} \right\}$ also be an basis of $\Pspace{A}^*$, for $x' \in \domain{X}$, $k' \in \domain{X-1}$ and $i' \in \domain{A_x - 1}$.
  Those two bases are dual to each other.
  In particular, the only nonzero contractions are
  \begin{equation}
    \traceout{A} ~ \Puni{A} = 1, \quad \datad{A}^{i'|x'} ~ \data{A}^{i|x} = \delta_{i',i} \delta_{x',x}, \quad \Sanh{A}^{k'} ~ \sig{A}^k = \delta_{i',i}\;,
  \end{equation}
  as shown in Table~\ref{Tab:Subspaces}.
\end{proposition}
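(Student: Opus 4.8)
The plan is to verify directly that each of the three families of claimed dual-basis vectors, contracted against each of the three families of primal-basis vectors, yields exactly the Kronecker-delta pattern asserted, and then invoke the fact that a set of $\dd{A}$ linear forms that is dual to a basis of $\Pspace{A}$ is itself a basis of $\Pspace{A}^*$ (\cite[Theorem 3.11]{Roman2005}). Since all nine pairings are bilinear in the standard basis $\{\Pbasis{A}{a}{x}\}$ and its dual $\{\Pbasis{A}{a}{x}^*\}$, every contraction reduces to summing products of the explicit coefficients appearing in the definitions of $\Puni{A}$, $\data{A}^{i|x}$, $\sig{A}^k$ and of $\traceout{A}$, $\datad{A}^{i'|x'}$, $\Sanh{A}^{k'}$, using only $\Pbasis{A}{a'}{x'}^*~\Pbasis{A}{a}{x} = \delta_{a,a'}\delta_{x,x'}$.

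First I would dispatch the ``diagonal'' pairings. For $\traceout{A}~\Puni{A}$, expanding both gives $\frac{1}{X}\sum_x \sum_{a=1}^{A_x} \frac{1}{A_x} = \frac{1}{X}\sum_x 1 = 1$. For $\datad{A}^{i'|x'}~\data{A}^{i|x}$, only the $x'=x$ block contributes; within that block one computes $A_x \cdot \frac{1}{A_x}(\delta_{i,i'} - \delta_{i',A_x}) - \frac{1}{A_x}\bigl(\sum_{a}(\delta_{a,i} - \delta_{a,A_x})\bigr) = (\delta_{i,i'} - 0) - \frac{1}{A_x}(1 - 1) = \delta_{i,i'}$, using $i,i' \le A_x - 1$ so neither equals $A_x$. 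For $\Sanh{A}^{k'}~\sig{A}^k$, the term $-\sum_{x,a}\frac{1}{A_x}\Pbasis{A}{a}{x}$ in $\sig{A}^k$ is annihilated by $\Sanh{A}^{k'} = \frac{1}{X}(\Ssum{A}^{k'} - \Ssum{A}^X)$ because each $\Ssum{A}^{x_0}$ sums the uniform block $x=x_0$ to $1$ and these cancel; the surviving term $\sum_{a}\frac{X}{A_k}\Pbasis{A}{a}{k}$ contracts to $\frac{1}{X}\cdot\frac{X}{A_k}(A_k\,\delta_{k,k'} - A_k\,\delta_{k,X}) = \delta_{k,k'}$, again valid since $k,k' \le X-1$.

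Next I would check the six ``off-diagonal'' pairings vanish: $\traceout{A}~\data{A}^{i|x} = \frac{1}{X}\cdot\frac{1}{A_x}(1-1)=0$; $\traceout{A}~\sig{A}^k = \frac{1}{X}(X/A_k \cdot A_k \cdot \frac{1}{X}\cdot\!\ldots)$ — concretely $\traceout{A}$ on the first term of $\sig{A}^k$ gives $\frac{1}{X}\cdot\frac{X}{A_k}\cdot A_k \cdot \frac{1}{X} = 1$... wait, more carefully, $\traceout{A}(a,k)=1/X$ so the first term contributes $\sum_{a=1}^{A_k}\frac{X}{A_k}\cdot\frac{1}{X} = 1$ and the second contributes $-\sum_{x,a}\frac{1}{A_x}\cdot\frac{1}{X} = -1$, totalling $0$; $\datad{A}^{i'|x'}~\Puni{A}$, restricted to block $x'$, equals $A_{x'}\cdot\frac{1}{A_{x'}} - \sum_a \frac{1}{A_{x'}} = 1 - 1 = 0$; $\datad{A}^{i'|x'}~\sig{A}^k$: on the first term of $\sig{A}^k$ it is $\delta_{x',k}\bigl(A_k\cdot\frac{X}{A_k}\cdot[\,i'\text{-coeff}\,] - \ldots\bigr)$, which evaluates to $\delta_{x',k}\frac{X}{A_k}(A_k\cdot 0_{\text{since }i'\le A_k-1\text{ but weight uniform}} )$ — precisely $\datad{A}^{i'|x'}$ applied to the constant-on-block-$k$ vector $\frac{X}{A_k}\mathbf 1_k$ gives $\frac{X}{A_k}(A_k\cdot 1 - A_k\cdot 1)=0$, and on the second (uniform) term it gives $0$ by the same block computation, so the total is $0$; $\Sanh{A}^{k'}~\Puni{A} = \frac{1}{X}(1-1)=0$ since $\Ssum{A}^{x_0}~\Puni{A}=1$ for every $x_0$; and $\Sanh{A}^{k'}~\data{A}^{i|x} = \frac{1}{X}(\delta_{k',x} - \delta_{X,x})\cdot\frac{1}{A_x}(1-1) = 0$ since $\Ssum{A}^{x_0}$ sums the coefficients of $\data{A}^{i|x}$ within block $x_0$, which are $\frac{1}{A_x}$ at position $i$ and $-\frac{1}{A_x}$ at position $A_x$, summing to $0$. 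Having assembled Table~\ref{Tab:Subspaces}, the proof concludes: the $\dd{A} = 1 + \sum_x(A_x-1) + (X-1)$ forms pair dually with a spanning set of $\Pspace{A}$ of the same cardinality, hence both are bases and are mutually dual.

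The only genuine obstacle is bookkeeping: ensuring the index ranges $i,i' \in \domain{A_x-1}$ and $k,k' \in \domain{X-1}$ are used exactly where needed (so that e.g.\ $\delta_{i',A_x}$ and $\delta_{k,X}$ always vanish), and keeping the two distinct cancellation mechanisms straight — ``coefficients sum to zero within a block'' (killing $\data{}$ against $\traceout{A}$ and $\Sanh{A}$, and $\datad{}$ against $\Puni{A}$ and the block structure of $\sig{A}$) versus ``equal uniform blocks cancel in a difference'' (killing the $\Puni{A}$-like tail of $\sig{A}^k$ against $\Sanh{A}^{k'}$ and $\datad{A}^{i'|x'}$). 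Everything else is a finite, routine computation; I would present it as a single table (Table~\ref{Tab:Subspaces}) with a one-line justification per cell rather than nine separate displays.
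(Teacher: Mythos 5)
Your computation is correct: all nine pairings come out as claimed (with the index restrictions $i'\le A_x-1$ and $k,k'\le X-1$ invoked exactly where the stray deltas must vanish), and the concluding dimension count $1+\sum_x(A_x-1)+(X-1)=\dd{A}$ legitimately upgrades the duality relations to the statement that the two sets are mutually dual bases. The paper simply leaves this verification to the reader, so your direct coefficient-by-coefficient check is precisely the intended argument.
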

\begin{proof}
  Left to the reader.
\end{proof}
This means that for $\expr{A} \in \Cof{A}^*$ or $\Zof{A}^*$ or $\Sof{A}^*$, and $\vecP{A} \in \Cof{A}$ or $\Zof{A}$ or $\Sof{A}$, the product $\expr{A} ~ \vecP{A}$ can be nonzero only for pairs of vectors in $(\Cof{A}^*,\Cof{A})$, $(\Zof{A}^*,\Zof{A})$ and $(\Sof{A}^*,\Sof{A})$.

Keeping the interpretation of behaviors as column vectors and linear forms as row vectors, we obtain the following corollary.
\begin{corollary}
  \label{Cor:Projectors}
  The projectors on the subspaces $\Zof{A}$, $\Cof{A}$ and $\Sof{A}$ are written:
  \begin{equation}
    \Pi_{\Zof{A}} = \Puni{A} \cdot \traceout{A}, \quad \Pi_{\Cof{A}} = \sum_{ix} \data{A}^{i|x} \cdot \datad{A}^{i|x}, \quad \Pi_{\Sof{A}} = \sum_k \sig{A}^k \cdot \Sanh{A}^k\;.
  \end{equation}
\end{corollary}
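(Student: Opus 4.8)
The plan is to read off the three projectors as the blocks of the resolution of the identity that a dual pair of bases automatically provides, and then to check that each block behaves as the projector onto the advertised subspace.

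First I would invoke Proposition~\ref{Prop:DualBasisNonZeros}: the families $\{\Puni{A}\}\cup\{\data{A}^{i|x}\}\cup\{\sig{A}^{k}\}$ and $\{\traceout{A}\}\cup\{\datad{A}^{i|x}\}\cup\{\Sanh{A}^{k}\}$ are dual bases of $\Pspace{A}$ and $\Pspace{A}^{*}$. For any such pair the identity map on $\Pspace{A}$ is the sum of outer products of paired basis elements,
\begin{equation}
  \id_{\Pspace{A}} \;=\; \Puni{A}\cdot\traceout{A} \;+\; \sum_{ix}\data{A}^{i|x}\cdot\datad{A}^{i|x} \;+\; \sum_{k}\sig{A}^{k}\cdot\Sanh{A}^{k}\;,
\end{equation}
where ``$\cdot$'' denotes the column-times-row product, so that each summand is a linear map $\Pspace{A}\to\Pspace{A}$. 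I name the three summands $\Pi_{\Zof{A}}$, $\Pi_{\Cof{A}}$, $\Pi_{\Sof{A}}$, which are exactly the expressions in the statement.

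Next I would verify, using only the contraction table recorded in Proposition~\ref{Prop:DualBasisNonZeros}, that $\Pi_{\Cof{A}}$ (and, symmetrically, the other two) is the projector onto $\Cof{A}$ along $\Zof{A}\oplus\Sof{A}$. Its image lies in $\Cof{A}=\spanset{\{\data{A}^{i|x}\}}$, the span of its left factors; it acts as the identity on $\Cof{A}$ since $\Pi_{\Cof{A}}\,\data{A}^{j|y}=\sum_{ix}\data{A}^{i|x}\,(\datad{A}^{i|x}\,\data{A}^{j|y})=\data{A}^{j|y}$ by duality; and it annihilates $\Zof{A}$ and $\Sof{A}$ because $\datad{A}^{i|x}\,\Puni{A}=0$ and $\datad{A}^{i|x}\,\sig{A}^{k}=0$. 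Idempotency $\Pi_{\Cof{A}}^{2}=\Pi_{\Cof{A}}$ is then immediate, and the analogous one-line computations dispose of $\Pi_{\Zof{A}}$ and $\Pi_{\Sof{A}}$. Since the two bases together span the whole space, $\Pspace{A}=\Zof{A}\oplus\Cof{A}\oplus\Sof{A}$, and a linear operator that is the identity on one summand and zero on the other two is precisely the projector onto that summand; this identifies the three $\Pi$'s with the claimed projectors.

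I do not expect a genuine obstacle here: the entire content of the corollary is the basis duality already established in Proposition~\ref{Prop:DualBasisNonZeros}, and the only points that call for care are bookkeeping ones — keeping straight that ``$\cdot$'' denotes the outer product rather than the pairing, and that the surviving contractions are exactly the diagonal ones listed there, so that all cross terms in the computations above vanish.
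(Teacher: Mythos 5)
Your argument is correct and is essentially the paper's own: the corollary is presented there as an immediate consequence of the dual-basis relations of Proposition~\ref{Prop:DualBasisNonZeros}, read with behaviors as column vectors and forms as row vectors, which is exactly the resolution-of-identity and contraction-table computation you spell out. No gap to report.
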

And as we see below, the projectors $\Pi_{\Zof{A}}$ and $\Pi_{\Zof{A}} + \Pi_{\Cof{A}}$ are uniquely defined by invariance under local transformations.
\begin{table}
  \begin{center}
    \begin{tabular}{r|c|c|c}
      & $\data{A}^{i|x} \in \Cof{A}$ & $\Puni{A} \in \Zof{A}$ & $\sig{A}^k \in \Sof{A}$ \\
      \hline
      $\datad{A}^{i'|x'} \in \Cof{A}^*$ & $\delta_{i',i} \delta_{x',x}$ & $0$ & $0$ \\
      \hline
      $\traceout{A} \in \Zof{A}^*$ & $0$ & $1$ & $0$ \\
      \hline
      $\Sanh{A}^{k'} \in \Sof{A}^*$ & $0$ & $0$ & $\delta_{k',k}$
    \end{tabular}
  \end{center}
  \caption{
    \label{Tab:Subspaces} Relations between linear forms and subspaces when computing $\phi(\vec{v})$ for $\vec{v} \in V$ where $\phi$ is a linear form and $V$ a subspace.
  }
\end{table}

\subsection{Decomposition of invariant subspaces}
\label{Sec:InvariantSubspaces:Decomposition}
The bases above are motivated by the following decomposition.
\marginnote{
 \begin{center}
     \includegraphics{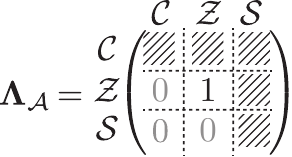}
   \end{center}
   \captionof{figure}{\label{Fig:LocalMapSpaces}
     Block diagonal form of a local map $\Lmap{A}$ in the basis given by the subspaces $\Csub$, $\Zsub$ and $\Ssub$.
   }
 }
\begin{proposition}[Simple version]
  \label{Prop:Decomposition}
  The space $\Pspace{A}$ decomposes as the series of subspaces invariant under local transformations
  \begin{equation}
    0 \subset \Cof{A} \subset \Cof{A} \oplus \Zof{A} \subset \Cof{A} \oplus \Zof{A} \oplus \Sof{A} = \Pspace{A}
  \end{equation}
  and the decomposition is unique.
\end{proposition}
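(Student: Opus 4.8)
\emph{Invariance.} The plan is to identify the three proper members of the chain as kernels of the normalization-type forms defined above. A short computation on the basis vectors gives $\Ssum{A}^{x'}\data{A}^{i|x}=0$ and $\Ssum{A}^{x'}\Puni{A}=1$, and since the $\Ssum{A}^x$ (resp. the $\Sanh{A}^k$) are linearly independent, with disjoint supports across input blocks, a dimension count yields $\Cof{A}=\bigcap_x\ker\Ssum{A}^x$ and $\Cof{A}\oplus\Zof{A}=\bigcap_k\ker\Sanh{A}^k$, of dimensions $\dd{A}-X$ and $\dd{A}-X+1$; combined with $\dim\Sof{A}=X-1$ this also confirms that the three sums are direct and exhaust $\Pspace{A}$. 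Invariance is then immediate from Proposition~\ref{Prop:LocalTransformationsNormalization}: the defining condition of $\Cof{A}\oplus\Zof{A}$ is exactly that $\sum_a\Pgivenop{A}{X}{a}{x}$ is a constant $c$ independent of $x$, which is preserved with the same $c$ by every $\Lmap{A}$, the value $c=0$ singling out $\Cof{A}$; and $\Pspace{A}$ is trivially invariant. Strictness of the inclusions follows from the dimensions.

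\emph{Uniqueness.} Let $\mathcal{A}\subseteq\End(\Pspace{A})$ be the real associative algebra generated by all local transformations $\Pspace{A}\to\Pspace{A}$; by Proposition~\ref{Prop:LocalTransformationsConvex} it is generated by the deterministic local maps of Definition~\ref{Def:LocalDeterministicMap}. Invariant subspaces are exactly $\mathcal{A}$-submodules, so I must show the submodule lattice of $\Pspace{A}$ is precisely the displayed chain, i.e. that $\Pspace{A}$ is uniserial. First, I would analyze the quotient $\Pspace{A}/\Cof{A}$: the surjection $\vecP{A}\mapsto(\Ssum{A}^1\vecP{A},\dots,\Ssum{A}^X\vecP{A})$ identifies it with $\RR^X$, and a deterministic local map with input map $\xi$ acts by $v\mapsto(v_{\xi(1)},\dots,v_{\xi(X)})$; these are all the $0/1$ pullback matrices, whose only common invariant subspaces of $\RR^X$ are $0$, the line spanned by $(1,\dots,1)$ (the image of $\Zof{A}$, on which $\mathcal{A}$ acts trivially by Proposition~\ref{Prop:LocalTransformationsNormalization}), and $\RR^X$ — so $\Pspace{A}/\Cof{A}$ is already uniserial, with factors the trivial module $\Zof{A}$ and the simple module $\Pspace{A}/(\Cof{A}\oplus\Zof{A})$, the latter being the standard representation of the symmetric group $S_X$ acting by input relabelings. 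Second, $\Cof{A}$ is a simple $\mathcal{A}$-module: when all $A_x$ are equal this follows from irreducibility of the wreath-product action (output relabelings decompose $\Cof{A}$ into pairwise non-isomorphic $S_{A_x}$-isotypic blocks, which the input relabelings permute transitively), and the general case is reduced to this by additionally using the non-invertible coarse-graining and padding maps that link blocks of different cardinality. Finally, since $\Cof{A}$ is simple, every nonzero submodule contains it: a simple submodule is either $\Cof{A}$ itself or a line $\spanset{\vec u}$ fixed by all of $\mathcal{A}$, and such a $\vec u$ would have to be proportional to the indicator of outcome $1$ at each input, which output relabelings destroy unless it vanishes. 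Hence the submodules containing $\Cof{A}$ correspond to the submodules of $\Pspace{A}/\Cof{A}$, which form a chain, and the lattice is forced to be $0\subset\Cof{A}\subset\Cof{A}\oplus\Zof{A}\subset\Pspace{A}$.

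\emph{Main obstacle.} The only genuinely hard step is the simplicity of $\Cof{A}$ for an arbitrary cardinality profile $\pst{A}=(A_1,\dots,A_X)$: for non-uniform $\pst{A}$ the blocks $\Cof{A}^{(x)}=\spanset{\{\data{A}^{i|x}\}_i}$ have different dimensions and are connected only by non-invertible local maps, so one cannot argue purely with finite-group representations as in the uniform case. This, together with making the uniseriality bookkeeping fully rigorous, is exactly the representation-theoretic content that the paper develops in Section~\ref{Sec:InvariantTechnical} (via the representation theory of the algebra generated by deterministic local maps); for the present ``simple'' statement I would invoke that machinery, or defer the uniqueness claim to the full version proved there.
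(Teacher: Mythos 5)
Your overall architecture — invariance via the normalization forms, the identification $\Pspace{A}/\Cof{A}\cong\RR^X$ with deterministic maps acting by pullback $v\mapsto(v_{\xi(1)},\dots,v_{\xi(X)})$, and uniqueness phrased as uniseriality (equivalent to the paper's Jordan--H{\"o}lder plus ordering argument) — is essentially the same as the paper's proof of Proposition~\ref{Prop:Filtration}, and those parts are sound (using the standard representation of $S_X$ for the top quotient is a legitimate shortcut). The genuine gap is exactly where you flag it: simplicity of $\Cof{A}$ as a module over the algebra generated by deterministic local maps for an arbitrary profile $\pst{A}$. Your wreath-product argument covers only the uniform case $A_1=\dots=A_X$; for the general case you invoke ``coarse-graining and padding maps'' without an argument (note there are no padding maps in $\End(\Pspace{A})$ — every map here has fixed target structure $\pst{A}$, so the only tools are output mappings $\alpha_{x'}:\domain{A_{\xi(x')}}\to\domain{A_{x'}}$), and then propose to defer to Section~\ref{Sec:InvariantTechnical}. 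That is circular: Proposition~\ref{Prop:Filtration} there is precisely the statement under proof, so as a standalone argument yours establishes the chain's invariance but not its uniqueness beyond the uniform case.

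The missing step has a short, elementary fix, which is the paper's key lemma and needs no group representation theory. Given any nonzero $\vec c\in\Cof{A}$, pick an input $\nu$ whose block of $\vec c$ is nonzero, and take the deterministic map with constant input mapping $\xi(x')=\nu$; for the target input $j$, send the source outputs on which the block is positive to $i$ and those on which it is negative to $A_j$ (zeros anywhere), and for every $x'\ne j$ collapse all outputs to a single value. Because each block of $\vec c$ sums to zero, the collapsed blocks vanish and the image is $w_+\,\data{A}^{i|j}$ with $w_+>0$; hence the submodule generated by any nonzero element of $\Cof{A}$ contains every basis vector $\data{A}^{i|j}$, i.e.\ $\Cof{A}$ is simple for arbitrary $\pst{A}$. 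This single sign-splitting construction closes your ``main obstacle'' (and supersedes the wreath-product detour). A smaller point: your claim that a putative trivial line must be supported on outcome $1$ in each block should be justified via the all-outputs-to-one map, and blocks with $A_x=1$ then need a constant input mapping onto a block with $A_x\ge 2$ to be excluded — easy, but currently unstated.
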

\begin{proof}
  See Section~\ref{Sec:InvariantTechnical}, where it is reformulated as Proposition~\ref{Prop:Filtration}.
\end{proof}
Note that the proposition allows freedom in the definition of the subspace $\Zof{A}$; only the direct sum $\Cof{A} \oplus \Zof{A}$ is unique, not the factor $\Zof{A}$ itself.
We show in Section~\ref{Sec:InvariantTechnical:RelabelingInvariant} that $\Zof{A}$ and $\Sof{A}$ are uniquely determined if we ask additionally that both subspaces are invariant under relablings of outputs, and that $\traceout{A}$ averages uniformly over inputs.
Proposition~\ref{Prop:Decomposition} can be reformulated by saying that the local transformation $\Lmap{A}$
has a block triangular form as in Figure~\ref{Fig:LocalMapSpaces}.
\marginnote{This block triangular form extends to maps $\Lmap{A}: \Pspace{A} \to \Pspace{A'}$ between spaces of different cardinalities.}
This decomposition induces a decomposition of the dual space $\Pspace{A}^*$.
We observe that $\Sof{A}^*$ is zero for elements of $\Cof{A} \oplus \Zof{A}$, and that $\Sof{A}^* \oplus \Zof{A}^*$ is zero for elements of $\Cof{A}$; Lemma~\ref{Lemma:AnnihilatorDualDecomposition} in Section~\ref{Sec:InvariantTechnical} provides the following corollary.
\begin{corollary}
  \label{Prop:DecompositionDual}
  The space $\Pspace{A}^*$ decomposes as the series of subspaces invariant under local transformations
  \begin{equation}
    0 \subset \Sof{A}^* \subset \Sof{A}^* \oplus \Zof{A}^* \subset \Sof{A}^* \oplus \Zof{A}^* \oplus \Cof{A}^* = \Pspace{A}^*
  \end{equation}
  and the decomposition is unique.
\end{corollary}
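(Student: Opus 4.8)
The plan is to obtain the dual filtration by taking annihilators of the primal filtration of Proposition~\ref{Prop:Decomposition}, using the standard correspondence between invariant subspaces of a map and invariant subspaces of its adjoint. Recall that for a subspace $W \subseteq \Pspace{A}$, the annihilator $W^\circ = \{\expr{A} \in \Pspace{A}^* : \expr{A}~\vecP{A} = 0 ~\forall \vecP{A} \in W\}$ satisfies $\dim W^\circ = \dd{A} - \dim W$ and reverses inclusions: $W_1 \subseteq W_2 \Rightarrow W_2^\circ \subseteq W_1^\circ$. Moreover, if $W$ is invariant under $\Lmap{A}$ (where $\Lmap{A}:\Pspace{A}\to\Pspace{A}$), then $W^\circ$ is invariant under the adjoint $\Lmap{A}^\dagger$, which by the discussion preceding Proposition~\ref{Prop:BellInequalityTransformation} is precisely the action of local transformations on Bell expressions. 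Applying this to the chain $0 \subset \Cof{A} \subset \Cof{A}\oplus\Zof{A} \subset \Pspace{A}$ yields an invariant chain of annihilators $0 \subset (\Cof{A}\oplus\Zof{A})^\circ \subset \Cof{A}^\circ \subset \Pspace{A}^*$.

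Next I would identify these annihilators with the claimed subspaces. From Proposition~\ref{Prop:DualBasisNonZeros} and Table~\ref{Tab:Subspaces}, the forms $\Sanh{A}^k$ pair to zero with everything in $\Cof{A}$ and everything in $\Zof{A}$, and the forms $\traceout{A}$ pair to zero with everything in $\Cof{A}$; a dimension count then forces the equalities. Concretely: $\Sof{A}^* = \spanset{\{\Sanh{A}^k\}}$ has dimension $X-1$ and lies in $(\Cof{A}\oplus\Zof{A})^\circ$, which also has dimension $\dd{A} - (\dd{A}-X+1) = X-1$, so $(\Cof{A}\oplus\Zof{A})^\circ = \Sof{A}^*$. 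Similarly $\Sof{A}^* \oplus \Zof{A}^* = \spanset{\{\Sanh{A}^k\}\cup\{\traceout{A}\}}$ has dimension $X$, lies in $\Cof{A}^\circ$ (dimension $X$), so $\Cof{A}^\circ = \Sof{A}^* \oplus \Zof{A}^*$. Finally $\Pspace{A}^* = \Sof{A}^* \oplus \Zof{A}^* \oplus \Cof{A}^*$ since the combined basis of Proposition~\ref{Prop:DualBasisNonZeros} spans. This is what Lemma~\ref{Lemma:AnnihilatorDualDecomposition} (referenced in the excerpt) supplies, so I would simply invoke it for the identification step rather than recompute the dimension bookkeeping.

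For uniqueness of the dual decomposition, I would argue that it follows from uniqueness of the primal one. Suppose $0 \subset W_1 \subset W_2 \subset \Pspace{A}^*$ were another chain of subspaces invariant under the adjoint action with $\dim W_1 = X-1$, $\dim W_2 = X$. Since $(\cdot)^\circ$ is an inclusion-reversing bijection between subspaces of $\Pspace{A}^*$ and subspaces of $\Pspace{A} = \Pspace{A}^{**}$ (identifying $\Pspace{A}$ with its double dual) that carries adjoint-invariant subspaces to $\Lmap{A}$-invariant subspaces, taking annihilators of $W_1, W_2$ produces an invariant chain in $\Pspace{A}$ of dimensions $\dd{A}-X$ and $\dd{A}-X+1$, i.e. the same dimensions as $\Cof{A}$ and $\Cof{A}\oplus\Zof{A}$. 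By the uniqueness clause of Proposition~\ref{Prop:Decomposition}, $W_2^\circ = \Cof{A}$ and $W_1^\circ = \Cof{A}\oplus\Zof{A}$, hence $W_2 = \Cof{A}^\circ = \Sof{A}^*\oplus\Zof{A}^*$ and $W_1 = (\Cof{A}\oplus\Zof{A})^\circ = \Sof{A}^*$.

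The main obstacle is bureaucratic rather than conceptual: one must be careful that the "uniqueness" in Proposition~\ref{Prop:Decomposition} is uniqueness \emph{as a filtration with prescribed dimensions/composition factors} — the excerpt explicitly notes that $\Zof{A}$ itself is not unique, only $\Cof{A}$ and $\Cof{A}\oplus\Zof{A}$ are — so the dual uniqueness statement must be read the same way, namely $\Sof{A}^*$ and $\Sof{A}^*\oplus\Zof{A}^*$ are the canonical objects while the complement $\Zof{A}^*$ is a non-canonical choice of splitting. Checking that the annihilator correspondence genuinely intertwines $\Lmap{A}$-invariance with $\Lmap{A}^\dagger$-invariance (standard, see~\cite[Chap.~3]{Roman2005}) and that the finite-dimensionality makes $\Pspace{A} \cong \Pspace{A}^{**}$ canonical is the only place where one should pause, but both are routine.
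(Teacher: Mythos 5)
Your proposal is correct and follows essentially the same route as the paper: the dual chain is obtained as the annihilators of the primal filtration of Proposition~\ref{Prop:Decomposition}, identified with $\Sof{A}^*$ and $\Sof{A}^*\oplus\Zof{A}^*$ via the pairing relations of Proposition~\ref{Prop:DualBasisNonZeros}, exactly what Lemma~\ref{Lemma:AnnihilatorDualDecomposition} supplies. Your explicit double-dual argument for transferring uniqueness back from Proposition~\ref{Prop:Decomposition}, and the caveat that only the partial sums (not the complement $\Zof{A}^*$) are canonical, are sound and merely make explicit what the paper leaves implicit.
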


We now discuss two impacts of this decomposition at the single party level.

\subsection{Impact on single party behaviors}
\label{Sec:SplitNormalizationSingle}
Using the linear forms defined above, the normalization constraint is equivalently written:
\begin{equation}
  \sum_{a=1}^{A_x} \Pgivenop{A}{X}{a}{x} = 1 \quad \text{or} \quad \Ssum{A}^x ~ \vecP{A} = 1\;, \qquad x\in\domain{X}
\end{equation}
which we split into
\begin{equation}
  \boxed{\traceout{A} ~ \vecP{A} = 1} \qquad \text{and} \qquad \boxed{\Sanh{A}^k ~ \vecP{A} = 0}, \quad k\in\domain{X-1} \;.
\end{equation}

Using the dual basis relations (Proposition~\ref{Prop:DualBasisNonZeros}), we see that any normalized $\vecP{A}$ can be written as
\begin{equation}
  \boxed{
    \vecP{A} = \Puni{A} + \vec{C}, \qquad \vec{C} \in \Cof{A}
  }\;,
\end{equation}
as $\vecP{A}$ cannot have support in $\Sof{A}$ (because $\Sanh{A}^i ~ \vecP{A} = 0$), and its coefficient in $\Zof{A}$ is fixed by $\traceout{A} ~ \vecP{A} = 1$.

\subsection{The Collins-Gisin basis}
\marginpar{
  These definitions apply directly to the multi-party case by performing the change of basis on each component of the tensor product space.
  For example, in the CHSH scenario, the Collins-Gisin basis for Alice is given by $(1, \Pgivenop{A}{X}{1}{1}, \Pgivenop{A}{X}{1}{2})^\top$ and the Collins-Gisin basis for Bob is $(1, \Pgivenop{B}{Y}{1}{1}, \Pgivenop{B}{Y}{1}{2})^\top$.
  The resulting tensor product has $9$ elements: one constant normalization, $4$ single party marginals and $4$ two-party coefficients.
}
Another description of the nonsignaling space is given by the Collins-Gisin basis.
Given a probability distribution described in that basis, the distribution in the full space is readily reconstructed.
For example, for $\pst{A} = (3,2)$,
\begin{equation}
  \begin{pmatrix} \Pgivenop{A}{X}{1}{1} \\ \Pgivenop{A}{X}{2}{1} \\ \Pgivenop{A}{X}{3}{1} \\ \Pgivenop{A}{X}{1}{2} \\ \Pgivenop{A}{X}{2}{2} \end{pmatrix} = \underbrace{\begin{pmatrix} 0 & 1 & 0 & 0 \\ 0 & 0 & 1 & 0 \\ 1 & -1 & -1 & 0 \\ 0 & 0 & 0 & 1 \\ 1 & 0 & 0 & -1 \end{pmatrix}}_{G_\Pspace{A}} \begin{pmatrix} 1 \\ \Pgivenop{A}{X}{1}{1} \\ \Pgivenop{A}{X}{2}{1} \\ \Pgivenop{A}{X}{1}{2} \end{pmatrix}\;.
\end{equation}

As $G_\Pspace{A}$ is not square, its inverse is not uniquely defined.
We can resolve this by prescribing the following.
We write $G_\Pspace{A}^+$ for the matrix that satisfies $G_\Pspace{A}^+ ~ G_\Pspace{A} = \Mid$ (and thus is a pseudoinverse), and require that $G_\Pspace{A} ~ G_\Pspace{A}^+$ is a projector on the $\Cof{A} \oplus \Zof{A}$ subspace.
The inverse coresponding to the example above is
\begin{equation}
  G_\Pspace{A}^+ = \frac{1}{12} \begin{pmatrix} 6 & 6 & 6 & 6 & 6 \\ 10 & -2 & -2 & 2 & 2 \\ -2 & 10 & -2 & 2 & 2 \\ 3 & 3 & 3 & 9 & -3 \end{pmatrix}\;.
\end{equation}

In the general case, the matrix $G_\Pspace{A}$ has the following form:
\begin{equation}
  G_\Pspace{A} = \begin{pmatrix}
    \vec{f}_1 & K_1 & 0 & 0 & \ldots \\
    \vec{f}_2 & 0 & K_2 & 0 & \\
    \vec{f}_A & 0 & 0 & K_3 & \\
    \vdots & & &
  \end{pmatrix}, \qquad \vec{f}_x = \begin{pmatrix} \vec{0}_{\alpha_x} \\ 1 \end{pmatrix}, \qquad K_n = \begin{pmatrix} \Mid_{\alpha_x} \\ -\vec{1}_{\alpha_x}^\top \end{pmatrix}\;,
\end{equation}
where $\alpha_x = A_x - 1$, while $\vec{0}_n \in \RR^n$ is the vector of all zeros and $\vec{1}_n \in \RR^n$ is the vector of all ones.
\begin{proposition}
  \label{Prop:CollinsGisinInverse}
  The following matrix $G_\Pspace{A}^+$ satisfies $G_\Pspace{A} ~ G_\Pspace{A}^+ = \Mid$ so that $G_\Pspace{A}^+ ~ G_\Pspace{A}$ is a projector on the $\Cof{A} \oplus \Zof{A}$ subspace.
  \begin{equation}
    G_\Pspace{A}^+ = \begin{pmatrix}
      \vec{1}_{A_1}^\top/X & \vec{1}_{A_2}^\top/X & \vec{1}_{A_3}^\top/X & \ldots \\
      H_{11} & H_{12} & H_{13} & \\
      H_{21} & H_{22} & H_{23} & \\
      H_{31} & H_{32} & H_{33} & \\
      \vdots & & &
    \end{pmatrix}, \qquad
    \begin{array}{rl}
      H_{xx} & = \begin{pmatrix} \Mid_{\alpha_x} \!\!- \! \mu_x \boldsymbol{1}_{\alpha_x \times \alpha_x} & -\mu_x \vec{1}_{\alpha_x} \end{pmatrix}\;, \\
      H_{i\ne j} & = \nu_x \boldsymbol{1}_{\alpha_i \times A_j} \;, \\
      \mu_x & = (n-1)/(X A_x) \;, \\
      \nu_x & = 1/(X A_x) \;,
    \end{array}
  \end{equation}
    where $\boldsymbol{1}_{m \times n} \in \RR^{m \times n}$ is the matrix of all ones.
\end{proposition}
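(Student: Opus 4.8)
The plan is to verify directly that the claimed block matrix $G_{\Pspace{A}}^+$ is a left inverse of $G_{\Pspace{A}}$, and then to identify the complementary product $G_{\Pspace{A}}^+ G_{\Pspace{A}}$ with the projector onto $\Cof{A} \oplus \Zof{A}$ described in Corollary~\ref{Cor:Projectors}. Since both matrices have an explicit block structure indexed by the inputs $x$, the computation of $G_{\Pspace{A}} G_{\Pspace{A}}^+$ decomposes into block-wise products that can be checked one entry-type at a time.

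First I would set up the block indexing carefully: the columns of $G_{\Pspace{A}}$ (equivalently the rows of $G_{\Pspace{A}}^+$) are indexed by the Collins-Gisin coordinates, namely one ``normalization'' coordinate plus, for each input $x$, the $\alpha_x = A_x-1$ coordinates $\Pgivenop{A}{X}{i}{x}$ with $i \in \domain{A_x-1}$. The rows of $G_{\Pspace{A}}$ (columns of $G_{\Pspace{A}}^+$) are indexed by the full coordinates $(a,x)$. Writing $G_{\Pspace{A}} = \begin{pmatrix} \vec{f}_x & \delta_{xx'} K_x \end{pmatrix}$ and $G_{\Pspace{A}}^+ = \begin{pmatrix} \vec{1}_{A_{x'}}^\top/X \\ H_{xx'} \end{pmatrix}$, the product $G_{\Pspace{A}} G_{\Pspace{A}}^+$ splits into four pieces. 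The $(\text{norm},\text{norm})$ entry is $\sum_{x} \vec{f}_x^\top \vec{1}_{A_x}^\top/X$ — actually $\sum_x (\vec f_x \text{ contribution})$ — which I would compute to be $1$ using that $\vec{f}_x$ has a single $1$ in the last position, so $\vec{f}_x^\top$ paired against the all-ones row over block $x$ contributes $1/X$, summed over $X$ inputs gives $1$. The $(\text{norm}, x')$ block requires $\sum_x \vec{f}_x (\text{row}) + K_x \text{-coupling}$ to vanish; here the key cancellation is that the last row of $K_{x'}$ is $-\vec{1}_{\alpha_{x'}}^\top$, which exactly kills the contribution of $\vec{f}_{x'}$'s final $1$ after the $H$-blocks are inserted. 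The $(x, \text{norm})$ and $(x, x')$ blocks are checked similarly, with the diagonal blocks reducing to $K_x H_{xx}$ and the off-diagonal to $K_x H_{xx'}$ plus $\vec f_x$-terms; the definitions of $\mu_x$ and $\nu_x$ are reverse-engineered precisely so that the off-diagonal terms cancel and the diagonal yields $\Mid$.

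The second half is to show $G_{\Pspace{A}}^+ G_{\Pspace{A}}$ is the projector onto $\Cof{A}\oplus\Zof{A}$. The cleanest route is: once $G_{\Pspace{A}} G_{\Pspace{A}}^+ = \Mid$ is established, $P := G_{\Pspace{A}}^+ G_{\Pspace{A}}$ is automatically idempotent ($P^2 = G_{\Pspace{A}}^+ (G_{\Pspace{A}} G_{\Pspace{A}}^+) G_{\Pspace{A}} = P$), so it is \emph{a} projector; its image is $\Ima G_{\Pspace{A}}^+$, which has dimension equal to the number of Collins-Gisin coordinates, i.e. $1 + \sum_x \alpha_x = 1 + \dim \Cof{A} = \dim(\Cof{A}\oplus\Zof{A})$. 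It then suffices to check that $\Ima G_{\Pspace{A}}^+ \subseteq \Cof{A}\oplus\Zof{A}$, equivalently that every column of $G_{\Pspace{A}}^+$ is annihilated by $\Sanh{A}^k$ for all $k\in\domain{X-1}$. Since $\Sanh{A}^k$ compares the block-$k$ sum against the block-$X$ sum (Eq.~\eqref{Eq:DefSanh}), and each column of $G_{\Pspace{A}}^+$ has, within each input block $x$, a pattern whose entries sum to a value independent of $x$ — the $\vec{1}_{A_x}^\top/X$ row sums to $A_x/X$... wait, here one uses instead that the \emph{columns} of $G_{\Pspace{A}}^+$ correspond to the full coordinates, so I would instead verify that $\Ssum{A}^k G_{\Pspace{A}}^+ = \Ssum{A}^X G_{\Pspace{A}}^+$ by summing the appropriate rows of $G_{\Pspace{A}}^+$ over $a\in\domain{A_k}$ versus $a\in\domain{A_X}$ and observing the totals match, because the $H_{xx'}$ blocks were built with row-sum-zero column structure (the $-\mu_x\vec1$ and $-\vec1^\top$ pieces) and the top row contributes $A_k/X$ from both. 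Given that the decomposition of Proposition~\ref{Prop:Decomposition} says $\Cof{A}\oplus\Zof{A}$ is precisely the kernel of $\spanset{\{\Sanh{A}^k\}}$, this pins down the image and identifies $P$ with $\Pi_{\Zof{A}}+\Pi_{\Cof{A}}$.

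The main obstacle is purely bookkeeping: getting the block-multiplication signs and the constants $\mu_x = (A_x-1)/(X A_x)$, $\nu_x = 1/(X A_x)$ to come out exactly right, especially the off-diagonal $H_{i\ne j}$ blocks, where one must track how $K_x$ (with its $-\vec1^\top$ last row) acts on the constant blocks $\nu_x\boldsymbol1$ and cancels against the $\vec f_x$-column contributions. There is also a mild index subtlety in that $\mu_x$ as written uses ``$n-1$'' which should read $A_x - 1 = \alpha_x$; I would state it as $\mu_x = \alpha_x/(X A_x)$ to avoid ambiguity. No deep idea is needed beyond that once the idempotency-plus-dimension-plus-image argument is in place for the second half.
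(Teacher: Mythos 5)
Your overall plan---verify the left-inverse identity block by block, then get the projector statement from idempotency plus an image/dimension count---is exactly the ``straightforward calculation'' the paper leaves to the reader, so the route is fine; but two concrete points in your write-up would make the verification fail as stated. First, your proposed fix of the constant is wrong: you assert $\mu_x=\alpha_x/(X A_x)$, whereas the cancellation you yourself invoke forces $\mu_x=(X-1)/(X A_x)$, i.e.\ the undefined ``$n$'' is the number of inputs $X$, not $A_x$ (similarly the off-diagonal blocks are $H_{ij}=\nu_i\,\boldsymbol{1}_{\alpha_i\times A_j}$ with $\nu_i=1/(X A_i)$, indexed by the row block). Indeed, the entry of $G_\Pspace{A}^+G_\Pspace{A}$ in row $(i,x)$ and the normalization column is $-\mu_x+(X-1)\nu_x$, which vanishes only for $\mu_x=(X-1)/(X A_x)$; with your value it equals $(X-A_x)/(X A_x)\neq 0$ in general. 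The paper's worked example $\pst{A}=(3,2)$ confirms this: there $1-\mu_1=10/12$, so $\mu_1=1/6=(X-1)/(XA_1)$, not $\alpha_1/(XA_1)=1/3$. So with your $\mu_x$ the claimed matrix is not a left inverse at all.

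Second, you tangle the two products throughout the second half (the proposition's own wording swaps them relative to the surrounding text, which is the dimensionally consistent version). What your first half actually verifies is $G_\Pspace{A}^+G_\Pspace{A}=\Mid$ on the Collins--Gisin coordinates; the candidate projector on $\Cof{A}\oplus\Zof{A}\subset\Pspace{A}$ is therefore $G_\Pspace{A} G_\Pspace{A}^+$, not $G_\Pspace{A}^+G_\Pspace{A}$ (which is the identity by your own computation, and in any case acts on the wrong space). Consequently your idempotency identity $P^2=G^+(GG^+)G$, the claim that the image is $\Ima G_\Pspace{A}^+$, and the proposed check that the columns of $G_\Pspace{A}^+$ are annihilated by $\Sanh{A}^k$ (or that $\Ssum{A}^k G_\Pspace{A}^+=\Ssum{A}^X G_\Pspace{A}^+$) do not typecheck: $\Sanh{A}^k$ and $\Ssum{A}^k$ act on $\Pspace{A}$, while the columns of $G_\Pspace{A}^+$ live in the Collins--Gisin coordinate space. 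The correct closing argument is the mirror image of what you wrote: set $P=G_\Pspace{A} G_\Pspace{A}^+$, so $P^2=G_\Pspace{A}(G_\Pspace{A}^+G_\Pspace{A})G_\Pspace{A}^+=P$ and $\Ima P=\Ima G_\Pspace{A}$ (since $PG_\Pspace{A}=G_\Pspace{A}$); then $\Ima G_\Pspace{A}=\Cof{A}\oplus\Zof{A}$ because each Collins--Gisin column of $G_\Pspace{A}$ equals $A_x\,\data{A}^{i|x}$, hence spans $\Cof{A}$, while the first column is a normalized deterministic point, so it lies in $\Zof{A}\oplus\Cof{A}$ with nonzero $\Puni{A}$ component; equivalently check $\Sanh{A}^k\,G_\Pspace{A}=0$ for all $k$ and conclude by the rank count $\operatorname{rank}G_\Pspace{A}=1+\sum_x\alpha_x=\dim(\Cof{A}\oplus\Zof{A})$. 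With these two repairs (the correct $\mu_x,\nu_x$ and the products put on the right sides) your argument goes through and simultaneously fixes the two typos in the paper's statement.
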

\begin{proof}
  Left to the reader (straight forward calculation).
\end{proof}

\clearpage

\section{Invariant subspaces of multi-party correlations}
\label{Sec:InvariantSubspacesMulti}
We now consider the invariant subspaces of multi-party correlations.
In particular, we link the multi-party normalization and nonsignaling constraints (Definition~\ref{Def:Nonsignaling}) and the decomposition defined in Section~\ref{Sec:InvariantSubspacesSingle}.
We study first the two-party
case and provide explicit characterizations; we then discuss multi-party generalizations.
\subsection{Two-party distributions}

We consider the space $\Pspace{A} \otimes \Pspace{B}$ of two-party behaviors and its dual $(\Pspace{A} \otimes \Pspace{B})^*$ containing two-party Bell expressions.
Due to the tensor structure, for any invariant subspace
\begin{equation}
  \Vof{A} \quad = \quad \Cof{A},\quad \Cof{A} \oplus \Zof{A} \quad \text{ or } \quad \Cof{A} \oplus \Zof{A} \oplus \Sof{A}\;,
\end{equation}
(and the same for $\Vof{B}$), the subspace $\Vof{A} \otimes \Vof{B}$ is invariant under $\Lmap{A} \otimes \Lmap{B}$ for arbitrary local transformations $\Lmap{A}$ and $\Lmap{B}$.
Our goal is now to provide an interpretation for the nine combinations $\Cof{A} \otimes \Cof{B}$, $\Cof{A} \otimes (\Cof{B} \oplus \Zof{B})$ and so on.
\begin{proposition}
  \label{Prop:NonDegenerateConstraints}
  The behavior $\vecP{AB}$ is normalized if and only if (iff.) it satisfies the following constraints:
  \begin{equation}
    \boxed{
      (\traceout{A} \otimes \traceout{B}) ~ \vecP{AB} = 1 \quad \text{and} \quad (\Sanh{A}^k \otimes \traceout{B}) ~ \vecP{AB} = (\traceout{A} \otimes \Sanh{B}^l) ~ \vecP{AB} = (\Sanh{A}^k \otimes \Sanh{B}^l) ~ \vecP{AB} = 0
    }\;.
  \end{equation}
  In addition, a normalized~$\vecP{AB}$ is nonsignaling from A to B iff. it satisfies
  \begin{equation}
    \boxed{
      (\Sanh{A}^k \otimes \datad{B}^{j|y}) ~ \vecP{AB} = 0
    }\;,
  \end{equation}
  and nonsignaling from B to A iff. it satisfies
  \begin{equation}
    \boxed{
      (\datad{A}^{i|x} \otimes \Sanh{B}^l) ~ \vecP{AB} = 0
    }\;.
  \end{equation}
  In these definitions, the indices $i,j,k,l,x,y$ run over their respective domains.
\end{proposition}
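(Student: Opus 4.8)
The plan is to rewrite both constraints—normalization and nonsignaling—purely in terms of linear forms, and then exploit the change of basis between $\{\Ssum{A}^x\}_x$ and $\{\traceout{A}\}\cup\{\Sanh{A}^k\}_k$ (and the analogous one for $B$), together with bilinearity of the tensor pairing. I would first record the elementary fact that the linear map carrying $(\Ssum{A}^1,\dots,\Ssum{A}^X)$ to $(\traceout{A},\Sanh{A}^1,\dots,\Sanh{A}^{X-1})$, defined by \eqref{Eq:Traceout1} and \eqref{Eq:DefSanh}, is invertible: the forms $\Ssum{A}^k-\Ssum{A}^X$ span the hyperplane of ``balanced'' combinations while $\traceout{A}$ does not lie in it. Consequently every $\Ssum{A}^x$ equals $\traceout{A}$ plus a linear combination of the $\Sanh{A}^k$, i.e. $\Ssum{A}^x-\traceout{A}\in\Sof{A}^*$; symmetrically for $B$. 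I would also invoke Proposition~\ref{Prop:DualBasisNonZeros}, which tells us that $\{\traceout{B}\}\cup\{\Sanh{B}^l\}_l\cup\{\datad{B}^{j|y}\}_{jy}$ is a basis of $\Pspace{B}^*$, and note the identity $\datad{B}^{j|y}=B_y\,\Pbasis{B}{j}{y}^*-\sum_b\Pbasis{B}{b}{y}^*$, which relates this basis to the standard dual basis.

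For the normalization claim, I use that $\vecP{AB}$ is normalized exactly when $(\Ssum{A}^x\otimes\Ssum{B}^y)\,\vecP{AB}=1$ for all $x,y$. Expanding $\Ssum{A}^x=\traceout{A}+\tilde S_A^x$ and $\Ssum{B}^y=\traceout{B}+\tilde S_B^y$ with $\tilde S_A^x\in\Sof{A}^*$, $\tilde S_B^y\in\Sof{B}^*$ and distributing, $(\Ssum{A}^x\otimes\Ssum{B}^y)\vecP{AB}$ becomes $(\traceout{A}\otimes\traceout{B})\vecP{AB}$ plus three terms each carrying a factor in $\Sof{A}^*$ or $\Sof{B}^*$; so the four boxed equalities give normalization. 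For the converse I would average: $(\traceout{A}\otimes\traceout{B})\vecP{AB}=\tfrac1{XY}\sum_{xy}(\Ssum{A}^x\otimes\Ssum{B}^y)\vecP{AB}=1$, and each of the three remaining boxed quantities, once the definitions of $\Sanh{A}^k$ and $\Sanh{B}^l$ are substituted, is a signed sum of terms all equal to $1$ that cancels.

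For the nonsignaling claims, take the A-to-B direction; B-to-A is the mirror image. Nonsignaling here means $\sum_a\Pgivenop{AB}{XY}{ab}{xy}$ is independent of $x$, which rewrites as $(\Sanh{A}^k\otimes\Pbasis{B}{b}{y}^*)\vecP{AB}=0$ for all $k,b,y$. Fixing $k$, I set $\vec w:=(\Sanh{A}^k\otimes\MidP{B})\,\vecP{AB}\in\Pspace{B}$, so that $\phi_B\cdot\vec w=(\Sanh{A}^k\otimes\phi_B)\vecP{AB}$ for every $\phi_B\in\Pspace{B}^*$. When $\vecP{AB}$ is normalized, the first claim already forces $\traceout{B}\cdot\vec w=0$ and $\Sanh{B}^l\cdot\vec w=0$; adding the boxed hypothesis $\datad{B}^{j|y}\cdot\vec w=0$ makes every element of a basis of $\Pspace{B}^*$ vanish on $\vec w$, hence $\vec w=0$ and nonsignaling holds. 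Conversely, nonsignaling says $\vec w$ is killed by each $\Pbasis{B}{b}{y}^*$, hence by their combination $\datad{B}^{j|y}$. I expect no serious obstacle here: the argument is bilinear bookkeeping plus one dimension count. The only point requiring care is tracking where the normalization hypothesis is actually used—it enters only the ``$\Leftarrow$'' directions of the nonsignaling statements, through the first claim.
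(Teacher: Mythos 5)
Your proposal is correct and follows essentially the same route as the paper: expressing normalization via the forms $\Ssum{A}^x\otimes\Ssum{B}^y$ and changing basis to $\{\traceout{}\}\cup\{\Sanh{}^k\}$, then expanding the nonsignaling condition $(\Sanh{A}^k\otimes\MidP{B})\,\vecP{AB}=0$ against a dual basis of $\Pspace{B}^*$ and observing that only the $\datad{}$-contractions are new beyond normalization. You merely spell out the steps the paper leaves implicit (invertibility of the basis change, the averaging argument for the converse, and the $\vec w=0$ dimension count), which is fine.
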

\begin{proof}
  Remark that
\begin{equation}
  (\Ssum{A}^x \otimes \Ssum{B}^y) ~ \vecP{AB} = 1, \quad \forall x,y\;,
\end{equation}
simply expresses the normalization constraint $\sum_{ab} \Pgivenop{AB}{XY}{ab}{xy} = 1$.
As in Section~\ref{Sec:SplitNormalizationSingle}, we split that constraint into
\begin{equation}
  \label{Eq:TwoParty:Normalization}
  (\traceout{A} \otimes \traceout{B}) ~ \vecP{AB} = 1, \qquad (\traceout{A} \otimes \Sanh{B}^l) ~ \vecP{AB} = (\Sanh{A}^k \otimes \traceout{B}) ~ \vecP{AB} = (\Sanh{A}^k \otimes \Sanh{B}^l) ~ \vecP{AB} = 0
\end{equation}
for all $i,j$.
This provides an interpretation for the four subspaces $(\Zof{A} \oplus \Sof{A}) \otimes (\Zof{B} \oplus \Sof{B})$.
We consider now the constraint that A does not signal to B.
It is written
\begin{equation}
  \sum_a \Pgivenop{AB}{XY}{ab}{xy} - \Pgivenop{AB}{XY}{ab}{x'y} = 0 \quad \Leftrightarrow \quad \left[ (\Ssum{A}^x - \Ssum{A}^{x'}) \otimes \MidP{B}\right] \vecP{AB} = 0
\end{equation}
for all $b,x,x',y$.
Without loss of generality, we can fix $x'=X$ to the last input value.
Then the nonsignaling constraint becomes:
\begin{equation}
  (\Sanh{A}^k \otimes \traceout{B}) ~ \vecP{AB} = (\Sanh{A}^k \otimes \datad{B}^{j|y}) ~ \vecP{AB} = (\Sanh{A}^k \otimes \Sanh{B}^l) ~ \vecP{AB} = 0\;.
\end{equation}
Compared to the normalization constraint~\eqref{Eq:TwoParty:Normalization}, only $(\Sanh{A}^k \otimes \datad{B}^{j|y}) ~ \vecP{AB} = 0$ is new, which leads us to identify $\Sof{A} \otimes \Cof{B}$ as the A $\to$ B signaling subspace.
A similar argument shows that $\Cof{A} \otimes \Sof{B}$ corresponds to the B $\to$ A signaling subspace, and that $(\datad{A}^{i|x} \otimes \Sanh{B}^l) ~ \vecP{AB} = 0$ is the B $\to$ A nonsignaling constraint not covered by normalization.
\end{proof}
Regarding the correlation space $\Pspace{A} \otimes \Pspace{B}$, we obtain the following characterization.

\begin{proposition}
  \label{Prop:TwoPartyBehaviorDecomposition}
  Any behavior $\vecP{AB}$ has the form
\begin{equation}
  \vecP{AB} = \Puni{A} \otimes \Puni{B} + \vecP{AB}^\text{nonsig} + \vecP{AB}^{\text{A}\to\text{B}} + \vecP{AB}^{\text{B}\to\text{A}}
\end{equation}
  where the nonsignaling component $\vecP{AB}^\text{nonsig}$ is in $(\Zof{A} \otimes \Cof{B}) \oplus (\Cof{A} \otimes \Zof{B}) \oplus (\Cof{A} \otimes \Cof{B})$, signaling A$\to$B is expressed by $\vecP{AB}^{\text{A}\to\text{B}} \in \Sof{A} \otimes \Cof{B}$, and signaling B$\to$A is expressed by $\vecP{AB}^{\text{B}\to\text{A}} \in \Cof{A} \otimes \Sof{B}$.
A graphical summary is displayed in Figure~\ref{Fig:TwoPartySubspaces}.
\end{proposition}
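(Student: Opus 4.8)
The plan is to lift the single-party filtration of Proposition~\ref{Prop:Decomposition} to the tensor product and then use the normalization constraints of Proposition~\ref{Prop:NonDegenerateConstraints} to locate $\vecP{AB}$ inside it. First I would record that, viewed merely as a direct sum of subspaces, Proposition~\ref{Prop:Decomposition} gives $\Pspace{A} = \Cof{A} \oplus \Zof{A} \oplus \Sof{A}$ and $\Pspace{B} = \Cof{B} \oplus \Zof{B} \oplus \Sof{B}$. Distributing over the tensor product, $\Pspace{A} \otimes \Pspace{B}$ is the direct sum of the nine blocks $\Vof{A} \otimes \Vof{B}$ with $\Vof{A} \in \{\Cof{A},\Zof{A},\Sof{A}\}$ and $\Vof{B} \in \{\Cof{B},\Zof{B},\Sof{B}\}$, each invariant under $\Lmap{A} \otimes \Lmap{B}$ by the tensor-structure remark preceding the statement. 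Accordingly I would write $\vecP{AB}$ as a sum of nine components, one in each block.

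Next I would annihilate the unwanted blocks. Since $\vecP{AB}$ is a normalized behavior, Proposition~\ref{Prop:NonDegenerateConstraints} supplies $(\traceout{A} \otimes \traceout{B})\vecP{AB} = 1$ and $(\Sanh{A}^k \otimes \traceout{B})\vecP{AB} = (\traceout{A} \otimes \Sanh{B}^l)\vecP{AB} = (\Sanh{A}^k \otimes \Sanh{B}^l)\vecP{AB} = 0$. I would read each contraction through the duality table (Proposition~\ref{Prop:DualBasisNonZeros} / Table~\ref{Tab:Subspaces}): $\traceout{A}$ is nonzero only on $\Zof{A}$ and $\Sanh{A}^k$ only on $\Sof{A}$, likewise for $\mathrm{B}$. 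Hence $(\Sanh{A}^k \otimes \traceout{B})\vecP{AB} = 0$ kills the $\Sof{A} \otimes \Zof{B}$ component, $(\traceout{A} \otimes \Sanh{B}^l)\vecP{AB} = 0$ kills $\Zof{A} \otimes \Sof{B}$, $(\Sanh{A}^k \otimes \Sanh{B}^l)\vecP{AB} = 0$ kills $\Sof{A} \otimes \Sof{B}$, and $(\traceout{A} \otimes \traceout{B})\vecP{AB} = 1$ forces the (one-dimensional) $\Zof{A} \otimes \Zof{B}$ component to equal $\Puni{A} \otimes \Puni{B}$ exactly, because $(\traceout{A} \otimes \traceout{B})(\Puni{A} \otimes \Puni{B}) = 1$ while the left-hand side equals the coefficient of $\Puni{A} \otimes \Puni{B}$ in that block. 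What survives are the three blocks $\Cof{A} \otimes \Cof{B}$, $\Cof{A} \otimes \Zof{B}$, $\Zof{A} \otimes \Cof{B}$, which I would bundle as $\vecP{AB}^\text{nonsig}$, together with $\Sof{A} \otimes \Cof{B} =: \vecP{AB}^{\text{A}\to\text{B}}$ and $\Cof{A} \otimes \Sof{B} =: \vecP{AB}^{\text{B}\to\text{A}}$ — which is precisely the claimed form.

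Finally I would justify the labels by invoking the signaling half of Proposition~\ref{Prop:NonDegenerateConstraints}. If $\vecP{AB}$ is nonsignaling from $\mathrm{A}$ to $\mathrm{B}$, then $(\Sanh{A}^k \otimes \datad{B}^{j|y})\vecP{AB} = 0$; since $\Sanh{A}^k$ detects only $\Sof{A}$ and $\datad{B}^{j|y}$ only $\Cof{B}$, this reads off exactly the $\Sof{A} \otimes \Cof{B}$ block, so the condition is equivalent to $\vecP{AB}^{\text{A}\to\text{B}} = 0$, and symmetrically $\vecP{AB}^{\text{B}\to\text{A}} = 0$ iff $\vecP{AB}$ is nonsignaling from $\mathrm{B}$ to $\mathrm{A}$; the remaining piece $\vecP{AB}^\text{nonsig}$ is thus the part common to all nonsignaling behaviors. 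I do not anticipate a genuine obstacle: the whole argument is bookkeeping of which of the nine blocks a given contraction $\phi_\mathrm{A} \otimes \phi_\mathrm{B}$ sees, all of it dictated by Table~\ref{Tab:Subspaces}. The one point I would flag explicitly is that the statement tacitly assumes $\vecP{AB}$ is normalized — without that, the $\Puni{A} \otimes \Puni{B}$ term carries a free coefficient and the three $\Sof{}$-adjacent blocks need not vanish.
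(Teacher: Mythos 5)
Your proposal is correct and follows essentially the same route as the paper's proof: expand $\vecP{AB}$ over the nine tensor blocks, use the normalization constraints of Proposition~\ref{Prop:NonDegenerateConstraints} (read through the duality relations of Proposition~\ref{Prop:DualBasisNonZeros}) to fix the $\Zof{A} \otimes \Zof{B}$ component to $\Puni{A} \otimes \Puni{B}$ and kill the blocks involving $\Sof{}$ paired with $\Zof{}$, and then identify $\Sof{A} \otimes \Cof{B}$ and $\Cof{A} \otimes \Sof{B}$ as the A$\to$B and B$\to$A signaling components via the signaling constraints. The only difference is cosmetic: you make the dual-basis bookkeeping explicit, while the paper additionally interprets the three surviving nonsignaling blocks as joint and marginal correlations, which your argument does not need.
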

\marginnote{
  \begin{center}
	  \includegraphics{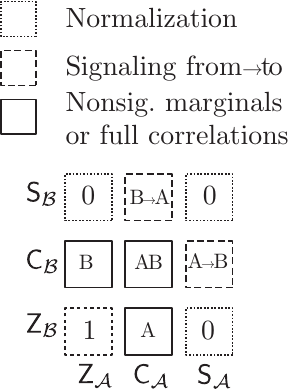}
  \end{center}
  \captionof{figure}{\label{Fig:TwoPartySubspaces}
    Subspaces in two party (A,B) scenarios.
  }
}[-4cm]
\marginnote{Signaling directions are allowed or forbidden depending on the scenario definition: see Section~\ref{Sec:Definitions:Scenario}.}[3cm]
\begin{proof}
  Looking back at Proposition~\ref{Prop:NonDegenerateConstraints}, normalization fixes the component in $\Zof{A} \otimes \Zof{B}$, and forbids components in $\Zof{A} \otimes \Sof{B}$, $\Sof{A} \otimes \Zof{B}$ and $\Sof{A} \otimes \Sof{B}$.
  The signaling components $\Cof{A} \otimes \Sof{B}$ and $\Sof{A} \otimes \Cof{B}$ were identified from the signaling constraints.
  Remain the interpretation of $\Zof{A} \otimes \Cof{B}$, $\Cof{A} \otimes \Zof{B}$ and $\Cof{A} \otimes \Cof{B}$.
  Remark that elements in $\Cof{A} \otimes \Cof{B}$ are sent to zero when tracing out one of the parties, either using $(\traceout{A} \otimes \MidP{B})$ or $(\MidP{A} \otimes \traceout{B})$.
  Thus, $\Cof{A} \otimes \Cof{B}$ corresponds to joint correlations.
  On the other hand, $\Cof{A} \otimes \Zof{B}$ disappears when tracing out B, and thus corresponds to the marginal correlations of A; similarly for $\Zof{A} \otimes \Cof{B}$, which disappears when tracing out A, and represents the marginal correlations of B.
\end{proof}

\subsection{Normalization and (non)signaling subspaces in multi-party scenarios}
\label{Sec:InvariantSubspaces:MultipartySubspaces}
We now move to the multi-party case.
For $n$ parties, the space $\Pspace{A} \otimes \Pspace{B} \otimes \Pspace{C} \otimes \ldots$ decomposes as
\begin{equation}
  \Pspace{A} \otimes \Pspace{B} \otimes \Pspace{C} \otimes \ldots =
  \left( \Sof{A} \oplus \Cof{A} \oplus \Zof{A} \right) \otimes
  \left( \Sof{B} \oplus \Cof{B} \oplus \Zof{B} \right) \otimes
  \left( \Sof{C} \oplus \Cof{C} \oplus \Zof{C} \right) \otimes \ldots
\end{equation}

We consider a single term $\Vof{A} \otimes \Vof{B} \otimes \Vof{C} \otimes \ldots$ in the expansion of the decomposition above, with $\Vof{A} \in \left \{ \Sof{A}, \Cof{A}, \Zof{A} \right \}$, $\Vof{B} \in \left \{ \Sof{B}, \Cof{B}, \Zof{B} \right \}$, and so on, so that after expansion of the tensor products we are left with $3^n$ subspaces.
The question is now to identify what a given subspace corresponds to.
We count using $n_\Ssub$, $n_\Csub$ and $n_\Zsub$ how many times each subspace is present, with $n_\Ssub + n_\Csub + n_\Zsub = n$.
To ease the notation in the propositions below, we reorder the parties such that $A_1, \ldots, A_{n_\Zsub}$ correspond to subspaces of type $\Zsub$; that $B_1, \ldots, B_{n_\Ssub}$ correspond to subspaces of type $\Ssub$, and finally $C_1, \ldots, C_{n_\Csub}$ correspond to subspaces of type $\Csub$.

We are now ready to generalize the Propositions~\ref{Prop:NonDegenerateConstraints} and~\ref{Prop:TwoPartyBehaviorDecomposition} to the multi-party case.
We will consider the different combinations of subspaces separately.

\marginnote{
  \begin{center}
    \includegraphics{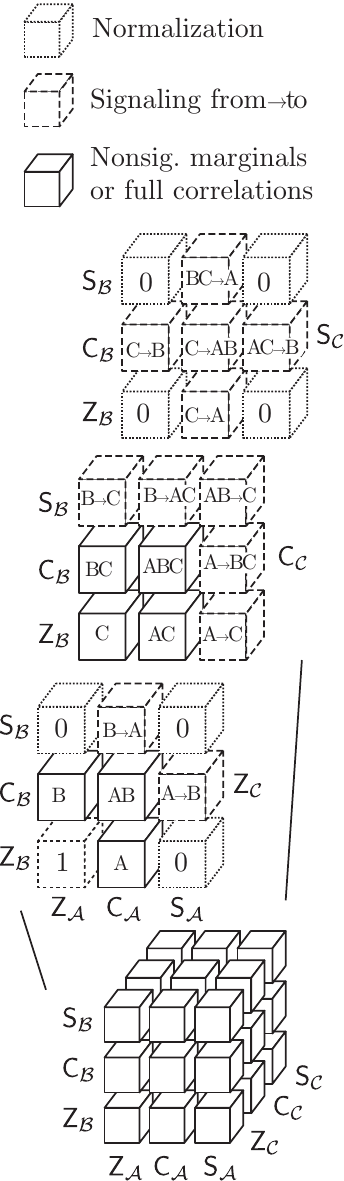}
  \end{center}
  \captionof{figure}{\label{Fig:ThreePartySubspaces}
    Subspaces in three party (A,B,C) scenarios; note that we have ordered the components as $\Zsub$, $\Csub$ and $\Ssub$ for clarity.
  }
}[-6cm]
\begin{proposition}[Normalization]
  \label{Prop:Normalization}
  The subspaces with $n_\Csub = 0$ correspond to {\em normalization subspaces}.
  For $n_\Zsub = n$ and $n_\Ssub = 0$, we have the constraint
  \begin{equation}
    (\traceoutI{A}{1} \otimes \traceoutI{A}{2} \otimes \ldots) ~ \vec{P}_{\mathcal{A}_1\mathcal{A}_2\ldots} = 1\;,
  \end{equation}
  and thus, for $\vecP{}$, the component in the $\Zsub_{\mathcal{A}_1} \otimes \Zsub_{\mathcal{A}_2} \otimes \ldots$ subspace is fixed to $\PuniI{A}{1} \otimes \PuniI{A}{2} \otimes \ldots$.
  For $n_\Ssub = n$ and $n_\Zsub = 0$, we have the constraint
  \begin{equation}
    (\SanhI{B}{1}^{k_1} \otimes \ldots \otimes \SanhI{B}{n_\Ssub}^{k_{n_\Ssub}}) ~ \vec{P}_{\mathcal{A}_1 \ldots \mathcal{B}_{n_\Ssub}} = 0\;,
  \end{equation}
  while when $n_\Zsub, n_\Ssub > 0$, we obtain
  \begin{equation}
    (\traceoutI{A}{1} \otimes \ldots \otimes \traceoutI{A}{n_\Zsub} \otimes \SanhI{B}{1}^{k_1} \otimes \ldots \otimes \SanhI{B}{n_\Ssub}^{k_{n_\Ssub}}) ~ \vec{P}_{\mathcal{A}_1 \ldots \mathcal{B}_{n_\Ssub}} = 0\;,
  \end{equation}
  for all $k_1,\ldots,k_{n_\Ssub}$,
  and there cannot be a component in the $\Zsub_{\mathcal{A}_1} \otimes \ldots \otimes \Zsub_{\mathcal{A}_{n_\Zsub}} \otimes \Ssub_{\mathcal{B}_1} \otimes \ldots \otimes \Ssub_{\mathcal{B}_{n_\Ssub}}$ subspace for any normalized behavior $\vecP{}$.
\end{proposition}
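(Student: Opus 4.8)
The plan is to lift the two-party argument of Proposition~\ref{Prop:NonDegenerateConstraints} to $n$ parties. First I would restate multi-party normalization: for every input tuple $(x_1,\dots,x_n)$,
\[
  \left(\Ssum{A_1}^{x_1}\otimes\dots\otimes\Ssum{A_n}^{x_n}\right)~\vecP{}=1 ,
\]
which is just the statement that the outcome probabilities sum to one for each choice of inputs, since $\Ssum{A}^x$ sums the coefficients of $\Pgiven{A}{X}$ over the output (Eq.~\eqref{Eq:DefSsum}). From Section~\ref{Sec:SplitNormalizationSingle}, each single-party $\Ssum{A}^x$ lies in $\Zof{A}^*\oplus\Sof{A}^*$; one has $\traceout{A}=\frac{1}{X}\sum_x\Ssum{A}^x$ (Eq.~\eqref{Eq:Traceout1}) and $\Sanh{A}^k=(\Ssum{A}^k-\Ssum{A}^X)/X$ (Eq.~\eqref{Eq:DefSanh}), so the $X$ functionals $\{\Ssum{A}^x\}_{x\in\domain{X}}$ span $\Zof{A}^*\oplus\Sof{A}^*$ and form a basis of it (they are related to the partial dual basis $\{\traceout{A}\}\cup\{\Sanh{A}^k\}_k$ of Proposition~\ref{Prop:DualBasisNonZeros} by an invertible change of basis). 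Hence the products $\Ssum{A_1}^{x_1}\otimes\dots\otimes\Ssum{A_n}^{x_n}$ span precisely $\bigotimes_i(\Zof{A_i}^*\oplus\Sof{A_i}^*)$, i.e. the direct sum of all dual subspaces with $n_\Csub=0$; this is the content of the phrase ``the subspaces with $n_\Csub=0$ are the normalization subspaces''.

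The heart of the proof is then one computation. Take any $\psi=\bigotimes_i\psi_i$ with each $\psi_i\in\{\traceout{A_i}\}\cup\{\Sanh{A_i}^{k_i}\}$, and expand each $\psi_i$ in the basis $\{\Ssum{A_i}^{x_i}\}_{x_i}$, say $\psi_i=\sum_{x_i}c^{(i)}_{x_i}\Ssum{A_i}^{x_i}$. By the identities above, when $\psi_i=\traceout{A_i}$ the coefficients are all $1/X_i$ and sum to $1$, whereas when $\psi_i=\Sanh{A_i}^{k_i}$ the coefficients are $1/X_i$ and $-1/X_i$ and sum to $0$. Expanding the tensor product gives $\psi=\sum_{x_1,\dots,x_n}\big(\prod_i c^{(i)}_{x_i}\big)\,\Ssum{A_1}^{x_1}\otimes\dots\otimes\Ssum{A_n}^{x_n}$, and applying $\psi$ to a normalized behavior, on which every $\Ssum{A_1}^{x_1}\otimes\dots\otimes\Ssum{A_n}^{x_n}$ evaluates to $1$, yields
\[
  \psi~\vecP{}=\prod_i\Big(\sum_{x_i}c^{(i)}_{x_i}\Big).
\]
This product is $1$ exactly when every $\psi_i=\traceout{A_i}$ — the case $n_\Zsub=n$, which gives $(\traceoutI{A}{1}\otimes\dots\otimes\traceoutI{A}{n})~\vecP{}=1$ and hence, by Proposition~\ref{Prop:DualBasisNonZeros}, fixes the component of $\vecP{}$ in $\Zof{A_1}\otimes\dots\otimes\Zof{A_n}$ to $\PuniI{A}{1}\otimes\dots\otimes\PuniI{A}{n}$ — and is $0$ as soon as any slot carries a $\Sanh{}$-factor, which covers both the $n_\Ssub=n$ case and the mixed $n_\Zsub,n_\Ssub>0$ case. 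Reordering the parties as prescribed in the statement reproduces the three displayed equations verbatim and shows that no normalized behavior has a component in any subspace $\Zsub_{\mathcal{A}_1}\otimes\dots\otimes\Ssub_{\mathcal{B}_1}\otimes\dots$ with $n_\Ssub>0$.

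For the (easy) converse direction it suffices to observe that each $\Ssum{A_1}^{x_1}\otimes\dots\otimes\Ssum{A_n}^{x_n}$ equals $\traceout{A_1}\otimes\dots\otimes\traceout{A_n}$ (with coefficient $1$) plus a functional supported on subspaces carrying at least one $\Sof{}$-factor, so fixing the $\traceout{A_1}\otimes\dots\otimes\traceout{A_n}$-component and annihilating all components with an $\Sof{}$-factor recovers $(\Ssum{A_1}^{x_1}\otimes\dots\otimes\Ssum{A_n}^{x_n})~\vecP{}=1$ for every input tuple, i.e. full normalization. The invariance of each subspace $\Vof{A_1}\otimes\dots\otimes\Vof{A_n}$ under products of local transformations was already recorded at the start of this subsection, so nothing further is required there. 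I expect the only real difficulty to be bookkeeping: carrying the $n$-fold change of basis and checking that the per-slot coefficient sums multiply as claimed, so that the trichotomy $n_\Zsub=n$ / $n_\Ssub=n$ / mixed is handled uniformly; there is no conceptual obstacle beyond what already appears in the two-party proof of Proposition~\ref{Prop:NonDegenerateConstraints}.
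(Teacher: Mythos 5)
Your proposal is correct and follows essentially the same route as the paper's proof: express multi-party normalization as $(\SsumI{A}{1}^{x_1}\otimes\cdots)\,\vecP{}=1$ for all input tuples, then perform the per-slot change of basis from $\{\Ssum{}^{x}\}$ to $\{\traceout{}\}\cup\{\Sanh{}^{k}\}$ and read off the values. Your explicit computation of the coefficient sums ($1$ for a $\traceout{}$-slot, $0$ for a $\Sanh{}$-slot) merely spells out what the paper compresses into ``the r.h.s.\ value is obtained by substituting the definitions,'' and the added converse and duality remarks are harmless extras.
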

\begin{proof}
  See Section~\ref{Sec:ProofPropNormalization}.
\end{proof}

The subspaces with $n_\Csub > 0$ and $n_\Ssub = 0$ correspond to {\em nonsignaling correlations}.
\begin{proposition}[Nonsignaling correlations]
  \label{Prop:Nonsignaling}
  Any deterministic nonsignaling behavior has nonzero support in the subspace $\Zsub_{\mathcal{A}_1} \otimes \ldots \otimes \Zsub_{\mathcal{A}_{n_\Zsub}} \otimes \Csub_{\mathcal{C}_1} \otimes \ldots \otimes \Csub_{\mathcal{C}_{n_\Csub}}$.
\end{proposition}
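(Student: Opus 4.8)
The plan is to verify the claim by a direct computation on an arbitrary deterministic nonsignaling behavior, using the projector formalism of Corollary~\ref{Cor:Projectors} and the structure of the correlators introduced in Section~\ref{Sec:InvariantSubspaces:Correlators}. Recall from Section~\ref{Sec:Def:DeterministicLocalBehaviors} that a deterministic nonsignaling behavior factorizes as $\Pgivenop{A_1\ldots}{\ldots}{a_1\ldots}{\ldots} = \prod_j \Pgivenop{A_j}{\inputs{A_j}}{a_j}{\cdot}$ with each factor having $\{0,1\}$ coefficients; after relabeling the parties so that $\mathcal{A}_1,\ldots,\mathcal{A}_{n_\Zsub}$ carry the $\Zsub$ components and $\mathcal{C}_1,\ldots,\mathcal{C}_{n_\Csub}$ carry the $\Csub$ components, the statement amounts to showing that the projector $(\Pi_{\Zof{A_1}} \otimes \cdots \otimes \Pi_{\Zof{A_{n_\Zsub}}} \otimes \Pi_{\Cof{C_1}} \otimes \cdots \otimes \Pi_{\Cof{C_{n_\Csub}}})$ applied to this behavior is nonzero. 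By the tensor structure of the projectors and the product form of a deterministic behavior, this factorizes into a product of single-party quantities, so it suffices to check that: (i) $\Pi_{\Zof{A}} \vecP{A}^{\mathrm{det}}$ is the nonzero vector $\Puni{A}$ for any single-party deterministic behavior $\vecP{A}^{\mathrm{det}}$, and (ii) for a $\Csub$-component party there exists at least one nonzero contraction $\datad{C}^{i|x} \cdot \vecP{C}^{\mathrm{det}} \ne 0$.

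First I would handle claim (i): since $\traceout{A} \vecP{A}^{\mathrm{det}} = 1$ for any normalized behavior (Eq.~\eqref{Eq:Traceout}), and $\Pi_{\Zof{A}} = \Puni{A} \cdot \traceout{A}$, we get $\Pi_{\Zof{A}} \vecP{A}^{\mathrm{det}} = \Puni{A} \ne 0$. This is immediate. Next, for claim (ii), I would observe that a single-party deterministic behavior assigns, for each input $x$, all the weight to exactly one output value $a^\star(x)$; hence $\Pbasis{C}{a}{x}^* \vecP{C}^{\mathrm{det}} = \delta_{a,a^\star(x)}$. Plugging into the definition~\eqref{Eq:Defdatad} of $\datad{C}^{i|x}$ gives $\datad{C}^{i|x} \vecP{C}^{\mathrm{det}} = A_x \delta_{i,a^\star(x)} - 1$, which equals $A_x - 1 > 0$ when $i = a^\star(x)$ (valid provided $A_x \geq 2$, which is forced since otherwise $\Cof{C}$ is trivial and a $\Csub$-component could not have been assigned), and equals $-1 \ne 0$ otherwise. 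Either way there is a nonzero contraction, so $\Pi_{\Cof{C}} \vecP{C}^{\mathrm{det}} \ne 0$ by Corollary~\ref{Cor:Projectors}.

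Combining (i) and (ii) through the tensor product: the image of the deterministic nonsignaling behavior under the relevant projector is a tensor product of factors each of which we have shown to be nonzero, hence nonzero, which establishes that the behavior has nonzero support in the subspace $\Zsub_{\mathcal{A}_1} \otimes \cdots \otimes \Zsub_{\mathcal{A}_{n_\Zsub}} \otimes \Csub_{\mathcal{C}_1} \otimes \cdots \otimes \Csub_{\mathcal{C}_{n_\Csub}}$. I do not expect a serious obstacle here; the one point requiring slight care is ensuring the bookkeeping that a $\Csub$-type slot genuinely has $A_x \geq 2$ (so the correlator space is nontrivial) and that the product over the remaining $\Ssub$-type parties — which do not appear in the target subspace at all — does not interfere, which is automatic because the target projector simply does not act on those tensor factors (equivalently, one contracts them with $\traceout{}$ or leaves them, and in any case the $\Zsub\!-\!\Csub$ block we are projecting onto is computed independently). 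A remark could be added that this also shows the support is in fact equal to a specific nonzero vector, $\Puni{A_1}\otimes\cdots\otimes\Puni{A_{n_\Zsub}}\otimes(\cdots)$, but the proposition only asks for nonvanishing.
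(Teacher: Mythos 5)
Your proof is correct, and it reaches the conclusion by a slightly different route than the paper. The paper evaluates a single dual contraction $(\traceoutI{A}{1} \otimes \cdots \otimes \datadI{C}{1}^{1|1} \otimes \cdots)\,\vec{P}$ on one canonical deterministic behavior (all outputs of the $\Csub$-type parties equal to $1$, with the large-cardinality input taken as $z=1$), obtaining the product $\prod_j [(C_j)_{z_j=1}-1] \ne 0$, and then extends to an arbitrary deterministic nonsignaling behavior by applying an output relabeling and invoking invariance of the subspace under reversible local transformations. You instead apply the projector $\Pi_{\Zof{A_1}} \otimes \cdots \otimes \Pi_{\Cof{C_{n_\Csub}}}$ directly to an arbitrary deterministic nonsignaling behavior and check each single-party factor separately: $\Pi_{\Zof{A}}\vecP{A}^{\mathrm{det}} = \Puni{A} \ne 0$, and for a $\Csub$-slot party the contraction $\datad{C}^{i|x}\vecP{C}^{\mathrm{det}} = A_x\delta_{i,a^\star(x)}-1$ is nonzero for every admissible $i$ (in particular the value $-1$ covers the case $a^\star(x)=A_x$, where $i=a^\star(x)$ is not a valid correlator index), so no relabeling-invariance step is needed. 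Your route is more self-contained; the paper's is a shorter computation at the cost of the invariance argument. Two small points: your argument, like the paper's, tacitly uses that a \emph{deterministic} nonsignaling behavior tensor-factorizes with each party's response depending only on its own input (your opening display with $\inputs{A_j}$ would not tensor-factorize otherwise), and your closing remark about $\Ssub$-type parties is vacuous here, since this proposition concerns the case $n_\Ssub = 0$, so every party carries either a $\Zsub$ or a $\Csub$ factor of the projector.
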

\begin{proof}
  See Section~\ref{Sec:ProofPropNonsignaling}.
\end{proof}
With our definition of the trace out map~\eqref{Eq:Traceout} and the duality relations of Proposition~\ref{Prop:DualBasisNonZeros}, we remark that the subspaces of $\mathcal{B}_1 \otimes \ldots \otimes \mathcal{B}_{n_\Ssub} \otimes \mathcal{C}_1 \otimes \mathcal{C}_{n_\Csub}$ express the marginal distribution $\Pgivenop{B_1B_2\ldots C_1C_2\ldots}{Y_1Y_2\ldots Z_1Z_2 \ldots}{b_1b_2\ldots c_1c_2\ldots}{y_1y_2\ldots z_1z_2\ldots}$ after the parties $\mathcal{A}_1$, $\mathcal{A}_2$ have been traced out using a uniform input distribution $\operatorname{P}(x_1x_2\ldots) = 1/(X_1 X_2 \ldots)$.
\begin{proposition}[Signaling correlations]
  \label{Prop:Signaling}
  Let $n_\Ssub > 0$ and $n_\Csub > 0$.
  If $(B_i, C_j) \notin E$ for all $i,j$, i.e. no Bob signals to any Charlie, then the behaviors of that scenario obey the constraint:
  \begin{equation}
    (\traceoutI{A}{1} \otimes \traceoutI{A}{2} \otimes \ldots \otimes \SanhI{B}{1}^{l_1} \otimes \SanhI{B}{2}^{l_2} \otimes \ldots \otimes \datadI{C}{1}^{k_1|x_1} \otimes \datadI{C}{2}^{k_2|x_2}) ~ \vec{P} = 0\;,
  \end{equation}
  and thus do not have a component in the subspace
\begin{equation}
    \Zsub_{\mathcal{A}_1} \otimes \ldots \otimes \Zsub_{\mathcal{A}_{n_\Zsub}} \otimes \Ssub_{\mathcal{B}_1} \otimes \ldots \otimes \Ssub_{\mathcal{B}_{n_\Ssub}} \otimes \Csub_{\mathcal{C}_1} \otimes \ldots \otimes \Csub_{\mathcal{C}_{n_\Csub}}\;.
  \end{equation}
\end{proposition}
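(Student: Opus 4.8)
The plan is to reduce the claimed contraction identity to the multi-party nonsignaling constraint of Definition~\ref{Def:Nonsignaling}, invoked with source set $\{B_1,\dots,B_{n_\Ssub}\}$, target set $\{C_1,\dots,C_{n_\Csub}\}$ and remaining set $\{A_1,\dots,A_{n_\Zsub}\}$; the hypothesis that no $B_i$ signals to any $C_j$ is exactly what makes that constraint available. This is the multi-party version of the step that identified $\Sof{A}\otimes\Cof{B}$ as the signaling subspace in the $\mathrm{A}\to\mathrm{B}$ direction in the proof of Proposition~\ref{Prop:NonDegenerateConstraints}.

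First I would expand $\big(\traceoutI{A}{1}\otimes\cdots\otimes\SanhI{B}{1}^{l_1}\otimes\cdots\otimes\datadI{C}{1}^{k_1|x_1}\otimes\cdots\big)\vec{P}$ into probability coefficients using the explicit formulas~\eqref{Eq:Traceout1},~\eqref{Eq:DefSanh} and~\eqref{Eq:Defdatad}. Each $\traceoutI{A}{i}$ contributes a sum over the output $a_i$ together with the uniform input average $\tfrac{1}{X_{A_i}}\sum_{x_i}$; each $\datadI{C}{j}^{k_j|x_j}$ fixes the input of $C_j$ to $x_j$ and inserts a fixed weight $w^{(j)}_{c_j}$ on its output (see~\eqref{Eq:Defdatad}); and, crucially, each $\SanhI{B}{i}^{l_i}=\tfrac{1}{Y_{B_i}}\big(\SsumI{B}{i}^{l_i}-\SsumI{B}{i}^{Y_{B_i}}\big)$ contributes a sum over the output $b_i$ together with the \emph{difference} of the evaluations at $y_i=l_i$ and at $y_i=Y_{B_i}$. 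Collecting terms, the contraction equals a fixed prefactor $\prod_i\tfrac{1}{X_{A_i}}\prod_i\tfrac{1}{Y_{B_i}}$ times
\[
  \sum_{\overline{x}_A}\sum_{\overline{c}}\Big(\prod_j w^{(j)}_{c_j}\Big)\sum_{S\subseteq\{1,\dots,n_\Ssub\}}(-1)^{|S|}\sum_{\overline{a},\overline{b}}\Pnosub{\overline{a}\,\overline{b}\,\overline{c}}{\overline{x}_A\,\overline{y}_S\,\overline{x}_C}\;,
\]
where $\overline{y}_S$ is the vector of $B$-inputs with $y_i=Y_{B_i}$ for $i\in S$ and $y_i=l_i$ otherwise, and $\overline{x}_C=(x_1,\dots,x_{n_\Csub})$ collects the $C$-inputs singled out by the functionals $\datadI{C}{j}^{k_j|x_j}$.

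The key step is then to note that, with the source/target/remaining assignment above, Definition~\ref{Def:Nonsignaling} guarantees that $\sum_{\overline{a},\overline{b}}\Pnosub{\overline{a}\,\overline{b}\,\overline{c}}{\overline{x}_A\,\overline{y}\,\overline{x}_C}$ is independent of the $B$-inputs $\overline{y}$, for every fixed $\overline{x}_A$, $\overline{x}_C$, $\overline{c}$. Therefore the $2^{n_\Ssub}$ terms indexed by $S$ inside the bracket are all equal, and $\sum_{S\subseteq\{1,\dots,n_\Ssub\}}(-1)^{|S|}=(1-1)^{n_\Ssub}=0$ because $n_\Ssub>0$; hence the whole contraction vanishes, which is the displayed identity. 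To pass from this to the statement that $\vec{P}$ has no component in $\Zsub_{\mathcal{A}_1}\otimes\cdots\otimes\Ssub_{\mathcal{B}_1}\otimes\cdots\otimes\Csub_{\mathcal{C}_1}\otimes\cdots$, I would invoke Proposition~\ref{Prop:DualBasisNonZeros}: as $l_i$, $k_j$, $x_j$ range over their domains, the functionals $\traceoutI{A}{1}\otimes\cdots\otimes\SanhI{B}{1}^{l_1}\otimes\cdots\otimes\datadI{C}{1}^{k_1|x_1}\otimes\cdots$ are exactly the dual coordinates that read off the components in that subspace (equivalently, the projector onto it is the tensor product of the single-party projectors of Corollary~\ref{Cor:Projectors}), so all those coordinates being zero forces the component to vanish.

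I expect the only real difficulty to be bookkeeping: correctly matching the party grouping of Definition~\ref{Def:Nonsignaling} to the present roles ($\Zsub$-parties as ``remaining'', $\Ssub$-parties as ``source'', $\Csub$-parties as ``target''), and tracking the uniform input averages introduced by the trace-out maps and the fixed $C$-inputs introduced by the $\datadI{C}{j}^{k_j|x_j}$ functionals. None of this affects the cancellation, which comes entirely from the $\SanhI{B}{i}^{l_i}$ factors paired with the single nonsignaling identity, so no idea beyond the two-party argument of Proposition~\ref{Prop:NonDegenerateConstraints} is needed.
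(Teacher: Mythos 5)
Your proof is correct and follows essentially the same route as the paper's: it invokes the multi-party nonsignaling constraint of Definition~\ref{Def:Nonsignaling} with the $\Ssub$-parties as sources, the $\Csub$-parties as targets and the $\Zsub$-parties as the remaining set, and then reads off the vanishing component via the dual basis. Your explicit inclusion--exclusion over the $\SanhI{B}{i}^{l_i}$ factors just spells out the step the paper compresses into ``fixing all $\overline{y}'$ to the last input value,'' so there is no substantive difference.
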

\begin{proof}
  See Section~\ref{Sec:ProofPropSignaling}.
\end{proof}

A graphical summary is displayed in Figure~\ref{Fig:ThreePartySubspaces} for the three party case.

\clearpage

\section{Applications}
\label{Sec:InvariantApplications}
We now discuss a few applications.
We discuss in the first part (Section~\ref{Sec:InvariantApplications:Equivalence}) the equivalence of Bell-like inequalities due to the normalization and nonsignaling constraints.
In the second part (Section~\ref{Sec:InvariantApplications:Variance}) we generalize the method proposed in~\cite{Renou2016} to optimize the variance of Bell inequalities used as statistical estimators.
Finally, we present a decomposition of assemblages/witnesses applicable to steering scenarios (Section~\ref{Sec:InvariantApplications:Steering}).
\subsection{Equivalence of inequalities}
\label{Sec:InvariantApplications:Equivalence}
\marginnote{For more parties: define $\traceout{AB\ldots} = \traceout{A} \otimes \traceout{B} \otimes \ldots$, and select the $\{ \mu_i \}$ as the linear forms corresponding to subspaces that are not present (Propositions~\ref{Prop:Normalization} and~\ref{Prop:Signaling}).}
The equivalence of inequalities under affine constraints was studied in~\cite{Rosset2014a} for nonsignaling scenarios.
The construction presented in that paper generalizes readily to scenarios involving signaling directions, as we show now.
We group the linear constraints present in Proposition~\ref{Prop:NonDegenerateConstraints} as
\begin{equation}
  \label{Eq:AllZeroConstraints}
  \traceout{AB} ~ \vecP{AB} = 1, \qquad \mu_i ~ \vecP{AB} = 0
\end{equation}
where $\traceout{AB} = \traceout{A} \otimes \traceout{B}$, and $\{ \mu_i \}$ is a set of linearly independent forms having value zero for correlations in the considered scenario.
\marginnote{A complete definition of equivalency is given in Section~\ref{Sec:EquivalentBellExpressions}.}
We say that inequalities are equivalent when they define the same hyperplace in the affine space constrained by Eq.~\eqref{Eq:AllZeroConstraints}.
\begin{definition}
  \label{Def:AffineEquivalent}
  With $\traceout{AB}$ and $\{\mu_i\}$ as defined above, the $\Kset{}$-inequalities $(\expr{AB}, u)$ and $(\expr{AB}', u')$ are {\em affine equivalent} if
\begin{equation}
  \expr{AB}' = s ~ (\expr{AB} + t ~ \traceout{AB} + \sum_i w_i ~ \mu_i), \qquad u' = s(u + t),
\end{equation}
for $s>0$ and $t, w_i \in \RR$.
\end{definition}
This definition is sound because the~$\mu_i$ will lead to a zero-contribution and~$\traceout{AB}$ to a constant~$t$.
\subsubsection{Checking equivalence and canonical representatives}
While the above condition can be checked by solving a linear system, we prefer to find a method that takes any $(\expr{AB}, u)$ to a canonical representative.
The idea of a canonical representative is to have an expression that follows some recipe and that is as minimal as possible; {\it e.g.}, we will project away all the components of~$\exp{AB}$ which are irrelevant.
Note that Proposition~\ref{Prop:NonDegenerateConstraints} and its corollary splits the space $\Pspace{A} \otimes \Pspace{B}$ into a normalization subspace $\Zof{AB} = \Zof{A} \otimes \Zof{B}$, allowed subspaces such as $\Cof{A} \otimes \Zof{B}$ and forbidden subspaces such as $\Zof{A} \otimes \Sof{B}$, with the exact grouping depending on the allowed signaling directions.
We group the allowed subspaces (with the exception of $\Zof{A} \otimes \Zof{B}$) into $\Gamma$, while the forbidden subspaces are grouped into $\Omega$; the corresponding dual elements are also grouped into $\Gamma^*$ and $\Omega^*$ such that
\begin{equation}
  \Pspace{A} \otimes \Pspace{B} = \Zof{AB} \oplus \Gamma \oplus \Omega, \qquad   (\Pspace{A} \otimes \Pspace{B})^* = \Zof{AB}^* \oplus \Gamma^* \oplus \Omega^*\;.
\end{equation}
Remark that the linear forms in $\Omega^*$ evaluate to zero on allowed behaviors.
Thus, $\{\mu_i\}$ is a basis of $\Omega^*$.

Recalling the subspace projectors given by Corollary~\ref{Cor:Projectors}, we write $\Pi_\Zsub$ the projector on the normalization subspaces:
\begin{equation}
  \Pi_\Zsub = \Pi_{\Zof{A}} \otimes \Pi_{\Zof{B}}\;.
\end{equation}
We also write $\Pi_\Gamma$ (resp. $\Pi_\Omega$) the projector on the $\Gamma$ subspace (resp. $\Omega$):
\begin{equation}
  \label{Eq:Projectorss}
  \Pi_\Gamma = \Pi_{\Zof{A} \otimes \Cof{B}} + \Pi_{\Cof{A} \otimes \Zof{B}} + \Pi_{\Cof{A} \otimes \Cof{B}} + \ldots, \quad \Pi_\Omega = \Pi_{\Zof{A} \otimes \Sof{B}} + \Pi_{\Sof{A} \otimes \Zof{B}} + \ldots\;.
\end{equation}

The projectors defined so far act on behaviors.
\marginnote{Recall that the adjoint $\Pi_\Zsub^\dagger$ is a fancy name for the row-vector-matrix multiplication action of $\Pi_\Zsub$.}
However, if $\Pi_\Zsub$ has image in the $\Zof{AB}$ subspace, its adjoint operator $\Pi_\Zsub^\dagger: (\Pspace{A} \otimes \Pspace{B})^* \to (\Pspace{A} \otimes \Pspace{B})^*$ has image into $\Zof{AB}^*$.

Then, the canonical form of $\expr{AB}$ is obtained by either:
\begin{itemize}
\item Projecting $\expr{AB}$ to $\expr{AB}' = \expr{AB} ~ \Pi_\Gamma$, and shifting $u' = u - \expr{AB} ~ \Puni{A}\otimes\Puni{B}$.
  This projection is relevant when the same Bell expression is equipped with a variety of bounds corresponding to various convex sets of interest (local, quantum, \ldots).
  The projected Bell expression $\expr{AB}'$ is independent of the bound $u$ considered.
  This is the approach used in our classification of Bell inequalities~\cite{Rosset2014a}.
\item Projecting and shifting $\expr{AB}$ to $\expr{AB}' = \expr{AB} ~ (\Pi_\Zsub + \Pi_\Gamma) - u ~ (\traceout{A}\otimes \traceout{B})$ while $u'=0$.
  The resulting Bell inequality is fully characterized by $\expr{AB}'$, which is an element of the dual of the convex set of interest $\Kset{AB}$.
  The projection depends on the original bound $u$.
\end{itemize}
After projection and shift, we have two options to fix the scale factor in order to define a unique representative.
\begin{itemize}
\item If $\expr{AB}'$ has rational coefficients, fix the scale by writing the expression with relatively prime integers.
\item Otherwise, use an arbitrary norm such that the $1$-norm or $\infty$-norm to set $|| \expr{AB} || = 1$.
\end{itemize}
Note that in both cases, this projection, shift and rescale procedure commutes with relabelings of inputs and outputs (see Section~\ref{Sec:InvariantTechnical:RelabelingInvariant}).

\subsection{Optimizing the variance of inequalities}
\label{Sec:InvariantApplications:Variance}
Consider the use of a Bell expression $\expr{AB}$ on experimental data to evaluate the violation of a Bell inequality.
The experimental data corresponds to a random variable $\vecP{AB}^\text{run}$ with covariance matrix $\Sigma_{(abxy),(a'b'x'y')}$; for example, $\vecP{AB}^\text{run}$ can be obtained by computing the relative frequencies
\begin{equation}
  \operatorname{P}^\text{run}(ab|xy) = \frac{N_{abxy}}{\sum_{ab} N_{abxy}}
\end{equation}
where $N_{abxy}$ are the event counts.
In a previous work~\cite{Renou2016}, we showed that the CH and CHSH Bell inequalities, while equivalent under Definition~\ref{Def:AffineEquivalent}, correspond to different statistical estimators of the Bell expression value.
In particular, while the mean of the estimated value is unchanged when switching between CH and CHSH, the variance differs.

In~\cite[Appendix D]{Renou2016}, we showed that the form of a Bell expression can be tuned to minimize the variance of the random variable $\expr{AB} ~ \vecP{AB}^\text{run}$.
The terms that are tuned correspond to the subspace $\Omega$ spanned by the $\{\mu_i\}$ of Eq.~\eqref{Eq:AllZeroConstraints}.
Our previous work presented a subspace decomposition valid only for the CHSH scenario (two parties, binary inputs and outputs).
By following the same construction, we obtain that the Bell expression $\expr{AB}^\star$ with optimal variance has the form:
\marginnote{In this computation, when the inverse is not defined, it should be replaced by the Moore-Penrose pseudo-inverse.}
\begin{equation}
  \expr{AB}^\star = \expr{AB} (\overline{\Pi}_\Omega - \Pi_\Omega (\Pi_\Omega \Sigma \Pi_\Omega + \overline{\Pi}_\Omega)^{-1} \Pi_\Omega \Sigma \overline{\Pi}_\Omega)\;,
\end{equation}
where $\Pi_\Omega$ has been defined in Eq.~\eqref{Eq:Projectorss} and
\begin{equation}
  \overline{\Pi}_\Omega = \Mid - \Pi_\Omega = \Pi_\Zsub + \Pi_\Gamma\;.
\end{equation}

\subsection{Decomposing steering witnesses and assemblages}
 \label{Sec:InvariantApplications:Steering}
 Consider a two-party steering scenario where Alice is device-independent and Bob device-dependent; for an introduction to the concepts, see the reference~\cite{Cavalcanti2017b}.
The main object of study is an assemblage $\boldsymbol{\sigma} = \{\sigma_{a|x}\}_{a,x}$.
We write $\mathsf{H}_\mathrm{B}$ the space of Hermitian operators for device B, whose underlying Hilbert space has dimension $d$.
We easily identify $\boldsymbol{\sigma} \in \Pspace{A} \otimes \mathsf{H}_\mathrm{B}$.
Steering witnesses $\boldsymbol{F} = \{ F_{a|x} \}$ are elements of the dual space $(\Pspace{A} \otimes \mathsf{H}_\mathrm{B})^*$.
The space $\mathsf{H}_\mathrm{B}$ is an inner product space and thus self-dual.
While we already know the decomposition of $\Pspace{A}$, we need to decompose $\mathsf{H}_\mathrm{B}$.
\begin{lemma}
  Under completely-positive-trace-preserving (CPTP) maps, which are the relevant local transformations for density matrices, the space $\mathsf{H}_\mathrm{B}$ decomposes as
\begin{equation}
  0 \subset \Csub_{\mathrm{B}} \subset \Csub_{\mathrm{B}} \oplus \Zsub_{\mathrm{B}} = \mathsf{H}_\mathrm{B}
\end{equation}
where
\begin{equation}
  \Zsub_{\mathrm{B}} = \{ \alpha \Mid_d \text{ s.t. } \alpha \in \RR \}, \qquad \Csub_{\mathrm{B}} = \{ \rho \in \mathsf{H}_\mathrm{B} \text{ s.t. } \operatorname{tr}\rho = 0 \}\;.
\end{equation}
\end{lemma}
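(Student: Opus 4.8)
The plan is to regard each CPTP map as a real-linear, trace-preserving endomorphism of $\mathsf{H}_\mathrm{B}$, read off the invariant subspace directly from trace preservation, and then invoke irreducibility of the conjugation action of the unitary group to conclude that no finer invariant decomposition exists.

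First I would record the structural facts. A CPTP map $\Phi$ sends Hermitian operators to Hermitian operators and obeys $\operatorname{tr}\Phi(H)=\operatorname{tr}H$, hence it restricts to a real-linear map on $\mathsf{H}_\mathrm{B}$. Trace preservation gives at once $\Phi(\Csub_\mathrm{B})\subseteq\Csub_\mathrm{B}$, so the traceless subspace $\Csub_\mathrm{B}$ is invariant. For the direct sum I would use the elementary splitting
\begin{equation}
  H=\Bigl(H-\tfrac{\operatorname{tr}H}{d}\,\Mid_d\Bigr)+\tfrac{\operatorname{tr}H}{d}\,\Mid_d\;,
\end{equation}
whose first summand is traceless and whose second lies in $\Zsub_\mathrm{B}$, together with $\Csub_\mathrm{B}\cap\Zsub_\mathrm{B}=0$ (because $\operatorname{tr}(\alpha\Mid_d)=\alpha d$ vanishes only for $\alpha=0$) and the dimension count $(d^2-1)+1=d^2=\dim\mathsf{H}_\mathrm{B}$. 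This already establishes the claimed chain $0\subset\Csub_\mathrm{B}\subset\Csub_\mathrm{B}\oplus\Zsub_\mathrm{B}=\mathsf{H}_\mathrm{B}$ of subspaces, with $\Csub_\mathrm{B}$ invariant.

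To match Proposition~\ref{Prop:Decomposition} I would then argue that $\Csub_\mathrm{B}$ is the unique proper nonzero invariant subspace, so the chain is the finest one. The unitary conjugations $H\mapsto UHU^\dagger$ are CPTP, and their action on $\Csub_\mathrm{B}$ is the adjoint action of $\mathrm{SU}(d)$, which is irreducible; hence any nonzero invariant $W\subseteq\Csub_\mathrm{B}$ equals $\Csub_\mathrm{B}$. If instead $W$ is invariant but not contained in $\Csub_\mathrm{B}$, then $W\cap\Csub_\mathrm{B}$ is invariant and contained in $\Csub_\mathrm{B}$, so it is $0$ or $\Csub_\mathrm{B}$; in the second case $W=\mathsf{H}_\mathrm{B}$, and in the first case $W$ is one-dimensional and, by the conjugation argument together with trace preservation, equal to $\Zsub_\mathrm{B}$ — which cannot be invariant, since the replacement channels $H\mapsto(\operatorname{tr}H)\,\rho_0$ with $\rho_0\neq\Mid_d/d$ are CPTP and map $\Mid_d$ outside $\Zsub_\mathrm{B}$. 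Thus $\Zsub_\mathrm{B}$ is only a natural but non-invariant complement, exactly as $\Zof{A}$ is in the classical case of Proposition~\ref{Prop:Decomposition}.

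The only substantive input is irreducibility of the adjoint representation, which is standard, so I expect the main obstacle to be organizational rather than deep: keeping a clean distinction between the genuinely invariant object $\Csub_\mathrm{B}$ and the conventional complement $\Zsub_\mathrm{B}$, and being explicit that CPTP maps need not be unital — which is precisely why $\Zsub_\mathrm{B}$ fails to be invariant and why the decomposition is a filtration rather than a direct sum of invariant pieces.
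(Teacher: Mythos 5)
Your proposal is correct and follows essentially the same route as the paper's own proof: decompose $\mathsf{H}_\mathrm{B}$ under the unitary conjugations (a subset of CPTP maps) into $\spanset{\Mid_d}$ and the traceless part, use trace preservation to see that $\Csub_\mathrm{B}$ is invariant, and use nonunital (replacement) channels to rule out invariance of $\Zsub_\mathrm{B}$. You merely make explicit the dimension count and the case analysis that the paper leaves as ``easily verify,'' so no further comparison is needed.
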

\begin{proof}
  Decompose $\mathsf{H}_\mathrm{B}$ as a representation of unitary maps, which form a subset of CPTP maps, to obtain a subspace spanned by $\Mid$ and a subspace corresponding to traceless matrices.
  Invariant subspaces of CPTP maps are necessarily coarser; we easily verify that $\Zsub_{\mathrm{B}}$ is not invariant under nonunital CPTP maps, while $\Csub_{\mathrm{B}}$ is an invariant subspace.
\end{proof}
Note that $\Zsub_{\mathrm{B}}$ and $\Csub_{\mathrm{B}}$ are orthogonal subspaces.
Consider now the decomposition of $\Pspace{A} \otimes \mathsf{H}_\mathrm{B} = (\Cof{A} \oplus \Zof{A} \oplus \Sof{A}) \otimes (\Csub_\mathrm{B} \oplus \Zsub_\mathrm{B})$, we obtain the following result.
\begin{proposition}
  A quantum assemblage has the form
  \begin{equation}
    \boldsymbol{\sigma} = \frac{1}{d} \Puni{A} \otimes \Mid_d + \boldsymbol{\Delta}, \qquad \boldsymbol{\Delta} \in (\Zof{A} \otimes \Csub_\mathrm{B}) \oplus (\Cof{A} \otimes \Zsub_\mathrm{B}) \oplus (\Cof{A} \otimes \Csub_\mathrm{B})\;.
  \end{equation}
\end{proposition}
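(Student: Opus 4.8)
The plan is to run the exact analogue of the proof of Proposition~\ref{Prop:TwoPartyBehaviorDecomposition}, with the role of $\Pspace{B}$ played by $\mathsf{H}_\mathrm{B}$ and the two-term decomposition of $\mathsf{H}_\mathrm{B}$ supplied by the preceding lemma. First I would isolate the \emph{linear} constraints obeyed by a quantum assemblage $\boldsymbol{\sigma} = \{\sigma_{a|x}\}$, identified with an element of $\Pspace{A}\otimes\mathsf{H}_\mathrm{B}$ through $\boldsymbol{\sigma} = \sum_{a,x}\Pbasis{A}{a}{x}\otimes\sigma_{a|x}$: the normalization $\sum_a\operatorname{tr}\sigma_{a|x} = 1$ for all $x$, and the consistency (no-signaling) condition that $\sum_a\sigma_{a|x}$ is independent of $x$. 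Positive semidefiniteness of each $\sigma_{a|x}$ is a convex, not linear, condition, hence irrelevant to which invariant subspaces can carry support; it only cuts a convex body out of the affine space fixed by those two linear constraints.

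Next I would translate the constraints into the dual basis. Since $\mathsf{H}_\mathrm{B}$ is self-dual for the Hilbert--Schmidt product, the projectors of the lemma are $\Pi_{\Zsub_\mathrm{B}}\colon X\mapsto \tfrac1d(\operatorname{tr}X)\Mid_d$ and $\Pi_{\Csub_\mathrm{B}}\colon X\mapsto X-\tfrac1d(\operatorname{tr}X)\Mid_d$, so $\operatorname{tr}$ spans the dual of $\Zsub_\mathrm{B}$. The normalization reads $(\Ssum{A}^x\otimes\operatorname{tr})\,\boldsymbol{\sigma} = 1$ for every $x$; exactly as in Section~\ref{Sec:SplitNormalizationSingle}, using $\traceout{A} = \tfrac1X\sum_x\Ssum{A}^x$ and the forms $\Sanh{A}^k$ of Eq.~\eqref{Eq:DefSanh}, this splits into $(\traceout{A}\otimes\operatorname{tr})\,\boldsymbol{\sigma} = 1$ and $(\Sanh{A}^k\otimes\operatorname{tr})\,\boldsymbol{\sigma} = 0$ for $k\in\domain{X-1}$. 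By Corollary~\ref{Cor:Projectors}, the first equality evaluates the projection $\bigl(\Pi_{\Zof{A}}\otimes\Pi_{\Zsub_\mathrm{B}}\bigr)\boldsymbol{\sigma}$ and pins the $\Zof{A}\otimes\Zsub_\mathrm{B}$ component of $\boldsymbol{\sigma}$ to $\tfrac1d\,\Puni{A}\otimes\Mid_d$; the second forbids any component in $\Sof{A}\otimes\Zsub_\mathrm{B}$.

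Then I would treat the consistency condition. Fixing the reference input $x'=X$, the statement ``$\sum_a\sigma_{a|x}$ independent of $x$'' is equivalent to $(\Sanh{A}^k\otimes\MidP{B})\,\boldsymbol{\sigma} = 0$ for all $k\in\domain{X-1}$, with $\MidP{B}$ the identity on $\mathsf{H}_\mathrm{B}$ — the same step that produces the nonsignaling constraint in the proof of Proposition~\ref{Prop:NonDegenerateConstraints}. Writing $\MidP{B} = \Pi_{\Zsub_\mathrm{B}}+\Pi_{\Csub_\mathrm{B}}$, the $\Zsub_\mathrm{B}$ part only reproduces the normalization constraint already obtained, while the new content $(\Sanh{A}^k\otimes\Pi_{\Csub_\mathrm{B}})\,\boldsymbol{\sigma} = 0$ forbids any component in $\Sof{A}\otimes\Csub_\mathrm{B}$. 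Collecting everything, in the expansion $\Pspace{A}\otimes\mathsf{H}_\mathrm{B} = (\Cof{A}\oplus\Zof{A}\oplus\Sof{A})\otimes(\Csub_\mathrm{B}\oplus\Zsub_\mathrm{B})$ the summands $\Sof{A}\otimes\Zsub_\mathrm{B}$ and $\Sof{A}\otimes\Csub_\mathrm{B}$ vanish, $\Zof{A}\otimes\Zsub_\mathrm{B}$ is pinned to $\tfrac1d\Puni{A}\otimes\Mid_d$, and the remaining freedom sits in $\Zof{A}\otimes\Csub_\mathrm{B}$, $\Cof{A}\otimes\Zsub_\mathrm{B}$ and $\Cof{A}\otimes\Csub_\mathrm{B}$, which is the asserted form with $\boldsymbol{\Delta} = (\Pi_{\Zof{A}\otimes\Csub_\mathrm{B}}+\Pi_{\Cof{A}\otimes\Zsub_\mathrm{B}}+\Pi_{\Cof{A}\otimes\Csub_\mathrm{B}})\,\boldsymbol{\sigma}$. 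As a remark, the background assemblage has $\sigma_{a|x} = \tfrac{1}{dA_x}\Mid_d > 0$, so $\tfrac1d\Puni{A}\otimes\Mid_d + \varepsilon\boldsymbol{\Delta}$ is a genuine assemblage for small $\varepsilon$ and any $\boldsymbol{\Delta}$ in that subspace, showing the characterization is sharp at the linear level.

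I expect the only real obstacle to be bookkeeping: getting the numerical constants right under the identification $\boldsymbol{\sigma}\in\Pspace{A}\otimes\mathsf{H}_\mathrm{B}$ (in particular the $1/d$ factor), and confirming that the usual ``consistent, trace-one, positive semidefinite'' definition of a quantum assemblage contributes no linear constraint beyond the two used above. The rest is a line-by-line transcription of Propositions~\ref{Prop:NonDegenerateConstraints} and~\ref{Prop:TwoPartyBehaviorDecomposition}.
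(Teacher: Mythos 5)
Your proposal is correct and follows essentially the same route as the paper's proof: translate the assemblage's linear constraints (consistency and normalization) into the dual forms $\Sanh{A}^k$ and $\traceout{A}\otimes\operatorname{tr}$, thereby forbidding the $\Sof{A}\otimes(\Zsub_\mathrm{B}\oplus\Csub_\mathrm{B})$ components and pinning the $\Zof{A}\otimes\Zsub_\mathrm{B}$ part to $\tfrac{1}{d}\Puni{A}\otimes\Mid_d$. Your write-up is in fact somewhat more explicit than the paper's (which packages consistency as $(\Sanh{A}^i\otimes W)\,\boldsymbol{\sigma}=0$ for all $W\in\mathsf{H}_\mathrm{B}^*$), and the observation that positivity imposes no extra linear constraint, plus the sharpness remark, are welcome additions.
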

\begin{proof}
  By definition, quantum assemblages satisfy
  \begin{equation}
    \sum_a \sigma_{a|x} - \sigma_{a|x'} = 0, \qquad \frac{1}{X} \sum_{a,x} \sigma_{a|x} = \rho_\mathrm{B}, \qquad \frac{1}{X} \sum_{a,x} \operatorname{tr}~\sigma_{a|x} = 1\;.
  \end{equation}
  The leftmost constraint translates to $(\Sanh{A}^i \otimes W) ~ \boldsymbol{\sigma} = 0$ for all $W \in \mathsf{H}_\mathrm{B}^*$ and implies the middle constraint.
These constraints forbid elements in the $\Sof{A} \otimes (\Csub_\mathrm{B} \oplus \Zsub_\mathrm{B})$ space.
The rightmost constraint translates to $(\Sanh{A}^i \otimes \Mid) ~ \boldsymbol{\sigma} = 1$ where $\Mid$ is interpreted as the linear form $\Mid(\rho) = \operatorname{tr} \rho$, and fixes the element in $\Zof{A} \otimes \Zsub_\mathrm{B}$ to $(\Puni{A} \otimes \Mid_d)/d$.
\end{proof}

Then, as for Bell-like inequalities, the bound of steering witnesses can be shifted using the component in $(\Zof{A} \otimes \Zsub_{\mathrm{B}})^*$, and steering witnesses are equivalent under the addition of arbitrary terms in $(\Sof{A} \otimes \mathsf{H}_\mathrm{B})^*$.

\clearpage

\section{Proofs and technical details}
\label{Sec:InvariantTechnical}

This Section contains technical details and proofs.
While it should be skipped at first reading, it reveals the underlying algebraic structure of local transformations.

\subsection{Preliminary definitions}

Compared to our early work~\cite{Renou2016} that used representation theory of finite groups, we have two complications in the present study.
First, the space $\Pspace{A}$ is not an inner product space.
Second, the set of local transformations is not a group: It contains irreversible transformations.
We start by establishing the formalism in which we provide our proof, following mostly the lecture notes~\cite{Etingof2009}.

\subsubsection{Algebraic structure of local transformations}
Our decomposition is based on representation theory, where the matrices $\Lmap{A}$ are representations of objects we now identify.

First, recall that any local transformation $\Lmap{A}$ can be written as a convex combination of deterministic local maps.
Deterministic local maps are described by the pair $s = (\xi, \overline{\alpha})$ according to Definition~\ref{Def:LocalDeterministicMap}.
\begin{proposition}
  \label{Prop:DetClosed}
  The set of deterministic local maps $\Pspace{A} \to \Pspace{A}$, represented by pairs $s = (\xi, \overline{\alpha})$, is closed under composition (given in  Section~\ref{Sec:LiftingDefinitions:Composition}), has an identity element $e$ (given by having all $\xi$, $\alpha_{x'}$ identity maps themselves). It is thus a monoid which we write $\Det{A}$.
\end{proposition}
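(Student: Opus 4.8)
The plan is to verify directly the three monoid axioms: closure under composition, existence of a two-sided identity, and associativity. I will work entirely with the parametrization $s=(\xi,\overline{\alpha})$ of Definition~\ref{Def:LocalDeterministicMap}, where $\xi:\domain{X'}\to\domain{X}$ remaps inputs and $\alpha_{x'}:\domain{A_{\xi(x')}}\to\domain{A'_{x'}}$ remaps outputs conditioned on $x'$.

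First I would address closure. Given two deterministic local maps on $\Pspace{A}\to\Pspace{A}$ — say $s_1=(\xi_1,\overline{\alpha}^{(1)})$ applied first and $s_2=(\xi_2,\overline{\alpha}^{(2)})$ applied second — I would compute the matrix product $\Lmap{A}^{s_2}\Lmap{A}^{s_1}$ using the explicit entries $\Lambda^{s}_{(a',x'),(a,x)}=\delta_{x,\xi(x')}\,\delta_{a',\alpha_{x'}(a)}$. Summing over the intermediate index $(a,x)$ collapses the two Kronecker deltas and yields $\bigl(\Lmap{A}^{s_2}\Lmap{A}^{s_1}\bigr)_{(a'',x''),(a,x)}=\delta_{x,\xi_1(\xi_2(x''))}\,\delta_{a'',\alpha^{(2)}_{x''}(\alpha^{(1)}_{\xi_2(x'')}(a))}$. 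This is again of the form prescribed by Definition~\ref{Def:LocalDeterministicMap}, with composed input map $\xi_1\circ\xi_2$ and composed output maps $x''\mapsto \alpha^{(2)}_{x''}\circ\alpha^{(1)}_{\xi_2(x'')}$; note the domains match because $\alpha^{(1)}_{\xi_2(x'')}$ has domain $\domain{A_{\xi_1(\xi_2(x''))}}$, exactly the output range relevant to $x=\xi_1(\xi_2(x''))$. This is essentially the content of the composition rule referenced as Section~\ref{Sec:LiftingDefinitions:Composition}, so I would simply invoke that formula and record that the composite data $(\xi_1\circ\xi_2,\ \overline{\alpha}^{(2)}\circ\overline{\alpha}^{(1)})$ again satisfies Definition~\ref{Def:LocalDeterministicMap}.

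Next, for the identity I would take $e$ to be the pair in which $\xi=\id$ on $\domain{X}$ and each $\alpha_x=\id$ on $\domain{A_x}$; its matrix is $\Lambda^e_{(a',x'),(a,x)}=\delta_{x,x'}\delta_{a,a'}$, i.e.\ the identity matrix $\idP{A}$. Plugging $s_1=e$ or $s_2=e$ into the composition formula above immediately gives back $(\xi_2,\overline{\alpha}^{(2)})$ and $(\xi_1,\overline{\alpha}^{(1)})$ respectively, so $e$ is a two-sided identity. Associativity then follows either from associativity of matrix multiplication (since each $\Lmap{A}^s$ is a genuine matrix and composition of maps is realized as matrix product), or, if one prefers to stay at the level of the parametrization, from associativity of function composition applied coordinatewise to $\xi$ and to the $\alpha_{x'}$; I would state it the matrix way in one line. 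Finiteness of $\Det{A}$ is a bonus remark: there are only finitely many functions between the finite sets involved.

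I do not expect a serious obstacle here — the statement is essentially bookkeeping. The one point requiring a little care is the domain/codomain matching in the composition of the output relabelings: one must check that $\alpha^{(2)}_{x'}$ is applied to $\alpha^{(1)}_{\xi_2(x')}(a)$ where the latter's argument $a$ ranges over $\domain{A_{\xi_1(\xi_2(x'))}}$, so that the composite output map is well-defined as a function $\domain{A_{(\xi_1\circ\xi_2)(x')}}\to\domain{A'_{x'}}$. Once that index alignment is pinned down, everything else is immediate, and I would keep the write-up to a short paragraph that cites Definition~\ref{Def:LocalDeterministicMap} and the composition rule of Section~\ref{Sec:LiftingDefinitions:Composition}.
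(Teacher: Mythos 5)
Your proposal is correct and follows essentially the same route as the paper: the paper's justification is precisely the composition rule of Section~\ref{Sec:LiftingDefinitions:Composition}, which identifies the composite of $(\xi,\overline{\alpha})$ and $(\psi,\overline{\beta})$ with the pair $(\xi\circ\psi,\ z\mapsto\beta_z\circ\alpha_{\psi(z)})$, exactly what your Kronecker-delta computation reproduces. Your added checks of the identity element, associativity via matrix multiplication, and the domain/codomain alignment of the output maps are fine and consistent with the paper's parametrization in Definition~\ref{Def:LocalDeterministicMap}.
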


\marginnote{An alternative route for the present study is to consider the irreducible representations of the monoid $\Det{A}$ according to~\cite{Steinberg2016}.
  However, the study of those representations requires more knowledge about the structure of $\Det{A}$ than the pedestrian approach we develop here. 
}
As local transformations are convex (and thus linear) mixtures of deterministic transformations, we consider the set
\begin{equation}
  \mathsf{L} = \{ L : L = \sum_{s\in \Det{A}} \lambda_s \cdot s, \quad \lambda_s \in \mathbb{R} \}
\end{equation}
of formal sums which form a real vector space.
We add the composition rule, for $L = \sum_{s\in \Det{A}} \lambda_s \cdot s $ and $M = \sum_{t\in\Det{A}} \mu_t \cdot t$:
\begin{equation}
  L \cdot M = \sum_{s,t\in\Det{A}} (\lambda_s \mu_t) ( s \circ t) \;.
\end{equation}

Using Definition~\ref{Def:LocalDeterministicMap}, each $s$ corresponds to a matrix $\Lmap{A}(s)$.
The formal sum $L$ corresponds to a real matrix $\Lmap{A}(L)$
\begin{equation}
  \Lmap{A}(L) = \sum_{s\in\Det{A}} \lambda_s ~ \Lmap{A}(s) \; .
\end{equation}
When $\sum_s \lambda_s = 1$ and $\lambda_s \ge 0$, the resulting $\Lmap{A}(L)$ is a local transformation by Proposition~\ref{Prop:LocalTransformationsConvex}.
However, the same $\Lmap{A}$ can correspond to different convex decompositions $L$.
\marginnote{In that case, by a slight abuse of terminology, we say that $\Vof{A}$ is a representation of $\mathsf{L}$.}

With this construction, the set $\mathsf{L}$ is an associative algebra.
By linearity and as $\Lmap{A}(L ~ M) = \Lmap{A}(L) ~ \Lmap{A}(M)$, this algebra has a representation $L \mapsto \Lmap{A}(L)$ on the vector space $\Pspace{A}$.

\subsubsection{Invariant subspaces, subrepresentations and derived representations}
We now describe subrepresentations of $\Pspace{A}$ and representations that can be derived from subrepresentations.
\begin{definition}
  A subspace $\Vsub \subset \Pspace{A}$ is an {\em invariant subspace} under the maps $\left \{ \Lmap{A} : \Pspace{A} \to \Pspace{A} \right \}$ if the following holds: For all $\Lmap{A}$ and all $\vec{v} \in \Vsub$, the image $(\Lmap{A} ~ \vec{v})$ is in $\Vsub$.
\end{definition}
The invariant subspace $\Vsub$, through the restriction of $\Lmap{A}: \Pspace{A} \to \Pspace{A}$ to $\Vsub \to \Vsub$ is a subrepresentation of the associative algebra $\mathsf{L}$.
If $\Vsub$, in turn, has no nontrivial invariant subspace, then $\Vsub$ is an {\em irreducible} representation of $\mathsf{L}$.

Given a subrepresentation of $\Pspace{A}$, we can generate other representations of $\mathsf{L}$.
First, we look at the invariant subspaces of $\Pspace{A}^*$ under the action of the adjoint $\Lmap{A}^\dagger$.
\marginnote{Example: Consider a device with binary inputs and outputs $\pst{A} = (2,2)$.
  Consider the subspace $\Vsub$ spanned by all the deterministic behaviors $\vec{v}_1 = (1,0,1,0)^\top$, $\vec{v}_2 = (1,0,0,1)^\top$, $\vec{v}_3 = (0,1,1,0)^\top$, $\vec{v}_4 = (0,1,0,1)^\top$.
  This subspace has dimension three and is invariant under local transformations.
  Its annihilator is spanned by $\nu = (1,1,-1,-1)$ and is also invariant under local transformations; it corresponds to the form that checks whether a distribution has the same normalization factor accross inputs.
  Recall that the order of coefficient enumeration has been discussed in Section~\ref{Sec:Definition:Behaviors} and the annihilator has been informally discussed in Eq.~\eqref{Eq:InvariantSubspaces:Motivation:Ann}.
}
For a subspace $\Vsub \subset \Pspace{A}$, we define the {\em annihilator space} $\Vsub^0 \subset \Pspace{A}^*$
\begin{equation}
  \Vsub^0 = \left \{ \expr{} \in \Pspace{A}^* \text{ s.t. } \expr{} ~ \vec{v} = 0 \text{ for all } \vec{v} \in \Vsub \right \}
\end{equation}
which has dimension $\dim \Vsub^0 = \dim \Pspace{A} - \dim \Vsub$.
\begin{proposition}
  The space $\Vsub^0$ is an invariant subspace of $\Pspace{A}^*$.
\end{proposition}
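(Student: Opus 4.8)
The plan is to unwind the two relevant definitions --- the annihilator space and the action of a local transformation on a Bell expression --- and then observe that the required stability is just associativity of the row-vector--matrix--column-vector product. First I would recall, from Section~\ref{Sec:Def:BellExpressions}, that a local transformation $\Lmap{A}: \Pspace{A} \to \Pspace{A}$ acts on $\expr{} \in \Pspace{A}^*$ by right multiplication, $\expr{} \mapsto \expr{}\,\Lmap{A}$, and that this is precisely the adjoint map $\Lmap{A}^\dagger$; equivalently, $(\expr{}\,\Lmap{A})\,\vec{v} = \expr{}\,(\Lmap{A}\,\vec{v})$ for every $\vec{v} \in \Pspace{A}$.

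The core argument is then immediate. Fix $\expr{} \in \Vsub^0$ and an arbitrary local transformation $\Lmap{A}: \Pspace{A} \to \Pspace{A}$. For any $\vec{v} \in \Vsub$, invariance of $\Vsub$ gives $\Lmap{A}\,\vec{v} \in \Vsub$, so $(\expr{}\,\Lmap{A})\,\vec{v} = \expr{}\,(\Lmap{A}\,\vec{v}) = 0$ because $\expr{}$ annihilates $\Vsub$. Since $\vec{v} \in \Vsub$ was arbitrary, $\expr{}\,\Lmap{A}$ annihilates all of $\Vsub$, i.e.\ $\expr{}\,\Lmap{A} \in \Vsub^0$. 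As $\Vsub^0$ is a linear subspace of $\Pspace{A}^*$ (being an annihilator, a standard fact which also yields $\dim \Vsub^0 = \dim \Pspace{A} - \dim \Vsub$), this shows that $\Vsub^0$ is stable under every adjoint $\Lmap{A}^\dagger$, which is exactly the assertion.

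I would also record the minor bookkeeping point that, although $\Vsub$ is only assumed invariant under the convex local transformations, linearity of the action promotes this to invariance under the whole associative algebra $\mathsf{L}$: for $L = \sum_{s} \lambda_s \cdot s$ one has $\Lmap{A}(L)\,\vec{v} = \sum_s \lambda_s\,(\Lmap{A}(s)\,\vec{v}) \in \Vsub$ since each $\Lmap{A}(s)\,\vec{v} \in \Vsub$ and $\Vsub$ is closed under linear combinations; consequently the argument above applies verbatim with $\Lmap{A}(L)$ in place of $\Lmap{A}$, and $\Vsub^0$ is a genuine subrepresentation of $\mathsf{L}$ on $\Pspace{A}^*$. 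There is essentially no obstacle here; the only thing to keep straight is that the dual action reverses the order of composition, which is irrelevant for this single-map statement but must be remembered when the construction is later iterated.
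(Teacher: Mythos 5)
Your proof is correct and is essentially identical to the paper's: both take $\expr{} \in \Vsub^0$, use the invariance $\Lmap{A}\,\vec{v} \in \Vsub$, and conclude $(\expr{}\,\Lmap{A})\,\vec{v} = 0$ for all $\vec{v} \in \Vsub$. The extra remark about extending by linearity to the whole algebra $\mathsf{L}$ is a harmless (and accurate) addition.
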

\begin{proof}
Let $\expr{} \in \Vsub^0$.
Then $(\expr{} ~ \Lmap{A})$ is in $\Vsub^0$ as well: For all $\vec{v} \in \Vsub$, we have $(\expr{} ~ \Lmap{A}) ~ \vec{v} = 0$ as $(\Lmap{A} ~ \vec{v}) \in \Vsub$.
\end{proof}

The quotient space $\Pspace{A} / \Vsub$ is defined as the set of equivalence classes of the relation:

\begin{equation}
  \vec{x} \equiv_{\Vsub} \vec{y} \qquad \text{if} \qquad \vec{x} - \vec{y} \in \Vsub\;.
\end{equation}

It is a vector space but not a subspace of $\Pspace{A}$.
To emphasize that its elements correspond to elements of $\Pspace{A}$ up to an element of $\Vsub$, we write these elements as $\vec{w} + \Vsub$ for $\vec{w} \in \Pspace{A}$.
\begin{proposition}
  The quotient space $\Pspace{A} / \Vsub$ is a representation of $\mathsf{L}$.
\end{proposition}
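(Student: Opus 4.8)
The plan is to equip $\Pspace{A}/\Vsub$ with the obvious quotient action and to check that it is well defined and an algebra homomorphism. Concretely, for $L \in \mathsf{L}$ and a class $\vec{w} + \Vsub$ with $\vec{w} \in \Pspace{A}$, I would set
\begin{equation}
  L \cdot (\vec{w} + \Vsub) \Def (\Lmap{A}(L) ~ \vec{w}) + \Vsub\;.
\end{equation}

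First I would check that this does not depend on the chosen representative. This amounts to showing that $\Lmap{A}(L)$ maps $\Vsub$ into $\Vsub$ for \emph{every} $L \in \mathsf{L}$, not merely for the convex combinations that correspond to bona fide local transformations. Since each deterministic local map $s \in \Det{A}$ is itself a local transformation, the hypothesis that $\Vsub$ is invariant gives $\Lmap{A}(s) \, \Vsub \subseteq \Vsub$; writing $L = \sum_{s} \lambda_s \cdot s$ and using the linearity of $L \mapsto \Lmap{A}(L)$ together with the fact that $\Vsub$ is a linear subspace, one gets $\Lmap{A}(L) \, \Vsub \subseteq \Vsub$ for arbitrary real coefficients $\lambda_s$. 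Consequently, if $\vec{w} - \vec{w}' \in \Vsub$ then $\Lmap{A}(L)(\vec{w} - \vec{w}') \in \Vsub$, so the two candidate images agree in $\Pspace{A}/\Vsub$, and the action is well defined.

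Next I would verify that $\vec{w}+\Vsub \mapsto L\cdot(\vec{w}+\Vsub)$ is linear, which is immediate from the linearity of $\Lmap{A}(L)$ and the definition of the vector-space structure on $\Pspace{A}/\Vsub$, and that $L \mapsto L\cdot(\,\cdot\,)$ is linear in $L$. Finally, for the homomorphism property I would use $\Lmap{A}(L \cdot M) = \Lmap{A}(L)\,\Lmap{A}(M)$, which holds by construction of the algebra $\mathsf{L}$ and its representation $\Lmap{A}$, to obtain $(L \cdot M)\cdot(\vec{w}+\Vsub) = L\cdot\bigl(M\cdot(\vec{w}+\Vsub)\bigr)$; and since $\Lmap{A}(e) = \idV{\Pspace{A}}$ the identity $e$ acts as the identity on $\Pspace{A}/\Vsub$. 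Hence $\Pspace{A}/\Vsub$ carries a representation of $\mathsf{L}$.

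There is essentially no hard step here; the only point requiring a moment's care is the extension of invariance from local transformations (nonnegative coefficients summing to one) to arbitrary formal sums in $\mathsf{L}$, which the linearity argument above handles. Everything else is the standard construction of the quotient representation.
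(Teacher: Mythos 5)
Your proposal is correct and follows essentially the same route as the paper: the whole content is the well-definedness of the quotient action, i.e.\ that $\Lmap{A}(L)\,\vec{v} \in \Vsub$ for $\vec{v} \in \Vsub$, which the paper expresses by writing $\Lmap{A}(L)(\vec{w}+\vec{v}) = \Lmap{A}(L)\,\vec{w} + \Lmap{A}(L)\,\vec{v}$ with the second term vanishing in the quotient. Your extra step extending invariance from convex mixtures to arbitrary real formal sums by linearity, and the routine homomorphism checks, only make explicit what the paper leaves implicit.
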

\begin{proof}
  For all $\vec{v} \in \Vsub$, we have $\Lmap{A}(L) \big( \vec{w} + \vec{v} \big) = \Lmap{A}(L) ~ \vec{w} + \underbrace{\Lmap{A}(L) ~ \vec{v}}_{\equiv 0}$.
\end{proof}

It is easier to study $\Pspace{A} / \Vsub$ through an explicit basis of its annihilator $\Vsub^0$.
\begin{lemma}
  \label{Lemma:QuotientAnnihilator}
Let $\Vsub$ be a representation of $\mathsf{A}$ and $\Vsub^0$ its annihilator.
Let $\{ \nu_i \}_{i=1}^d$ be a basis of $\Vsub^0$ (let us remind that those basis elements are linear forms).
We consider the map $f: \Pspace{A} \to \RR^d$ that evaluates those linear forms:
\begin{equation}
  f: \vec{v} \mapsto \vec{q} = (q_1, \ldots, q_d)\;,
\end{equation}
where the $q_i$ are computed according to $q_i = \nu_i ~ \vec{v}$.
Then the image of $f$ is isomorphic to $\Pspace{A} / \Vsub$, and affords a representation of $\mathsf{L}$.
\end{lemma}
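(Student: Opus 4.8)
The plan is to exhibit $f$ as the canonical surjection realising the quotient $\Pspace{A}/\Vsub$, and then to transfer to $\Ima f$ the $\mathsf{L}$-representation structure that the preceding proposition attaches to $\Pspace{A}/\Vsub$. First I would compute $\ker f$: by construction $f(\vec v)=\vec 0$ precisely when $\nu_i ~ \vec v = 0$ for all $i$, and since $\{\nu_i\}$ is a basis of $\Vsub^0$ this is equivalent to saying that $\vec v$ is annihilated by every linear form in $\Vsub^0$. That set of vectors is the pre-annihilator ${}^{0}(\Vsub^{0})$; as $\Pspace{A}$ is finite-dimensional, the dimension identity $\dim\Vsub^0 = \dim\Pspace{A} - \dim\Vsub$ together with the obvious inclusion $\Vsub \subseteq {}^{0}(\Vsub^{0})$ gives ${}^{0}(\Vsub^{0}) = \Vsub$, hence $\ker f = \Vsub$. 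The first isomorphism theorem for vector spaces then produces a well-defined linear bijection
\begin{equation*}
  \overline{f} : \Pspace{A}/\Vsub \longrightarrow \Ima f, \qquad \vec w + \Vsub \longmapsto f(\vec w),
\end{equation*}
so in particular $\Ima f \cong \Pspace{A}/\Vsub$ as vector spaces.

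It remains to equip $\Ima f$ with an action of $\mathsf{L}$ making $\overline{f}$ an isomorphism of representations. Since $\Vsub^0$ is an invariant subspace of $\Pspace{A}^*$ (preceding proposition), for each $L\in\mathsf{L}$ there are unique scalars $N_{ij}(L)$ with $\nu_i ~ \Lmap{A}(L) = \sum_j N_{ij}(L) ~ \nu_j$, uniqueness holding because the $\nu_j$ are linearly independent. Using linearity in $L$, the identity $\Lmap{A}(L ~ M) = \Lmap{A}(L) ~ \Lmap{A}(M)$ and $\Lmap{A}(e)=\Mid$, a direct expansion gives $N(L ~ M) = N(L) ~ N(M)$ and $N(e)=\Mid$, so $L\mapsto N(L)$ is a representation of $\mathsf{L}$ on $\RR^{d}$. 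Moreover
\begin{equation*}
  f\big( \Lmap{A}(L) ~ \vec v \big)_i \;=\; \nu_i ~ \Lmap{A}(L) ~ \vec v \;=\; \sum_j N_{ij}(L)\,(\nu_j ~ \vec v) \;=\; \big( N(L) ~ f(\vec v) \big)_i,
\end{equation*}
that is, $f\circ\Lmap{A}(L) = N(L)\circ f$. Consequently $\Ima f = f(\Pspace{A})$ is $N(L)$-invariant for every $L$, so it is a subrepresentation of $(\RR^{d},N)$, and the same intertwining identity shows $\overline{f}$ is an isomorphism of representations. This establishes both claims: $\Ima f$ affords a representation of $\mathsf{L}$, isomorphic to $\Pspace{A}/\Vsub$.

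The argument is a packaging of the first isomorphism theorem with the dual of a subrepresentation, so no step is genuinely hard; the two points that need care are the identity ${}^{0}(\Vsub^{0})=\Vsub$ — the only place finite-dimensionality of $\Pspace{A}$ is used — and checking that $L\mapsto N(L)$ is an honest algebra homomorphism rather than an anti-homomorphism, which follows cleanly from $\Lmap{A}(L ~ M)=\Lmap{A}(L) ~ \Lmap{A}(M)$ once the expansion is carried out consistently with the convention, fixed earlier, that $\Pspace{A}^*$ carries the action $\varphi\mapsto\varphi ~ \Lmap{A}(L)$.
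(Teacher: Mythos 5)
Your proof is correct and follows the standard argument: the paper itself disposes of this lemma by citing \cite[Theorem 3.16]{Roman2005}, and what you have written is essentially a self-contained proof of that cited fact (pre-annihilator identity plus the first isomorphism theorem), together with the explicit check that the induced coefficient matrices $N(L)$ form an honest representation of $\mathsf{L}$ intertwined by $f$ --- a point the paper leaves implicit, relying on the already-established invariance of $\Vsub^0$ and the quotient representation. No gaps; in particular your verification that $N(L\,M)=N(L)\,N(M)$ (rather than an anti-homomorphism) is the one place where care is genuinely needed, and you handle it correctly.
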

\begin{proof}
  See~\cite[Theorem 3.16]{Roman2005}.
\end{proof}

\subsubsection{Filtrations}

We are now looking at decompositions of the space $\Pspace{A}$ and its dual.
\begin{definition}
  A {\em filtration} is a chain of subspaces
  \begin{equation}
    0 = \mathsf{V}_0 \subset \mathsf{V}_1 \subset \ldots \subset \mathsf{V}_n = \Pspace{A}
  \end{equation}
  such that each $\mathsf{V}_i$ is invariant and affords a subrepresentation of the algebra $\mathsf{L}$.
\end{definition}

\marginnote{We denote the zero vector space by $0$.}
A filtration provides a decomposition of the dual space $\Pspace{A}^*$.
\begin{lemma}
  \label{Lemma:AnnihilatorDualDecomposition}
  Let $0 = \mathsf{V}_0 \subset \mathsf{V}_1 \subset \ldots \subset \mathsf{V}_n = \Pspace{A}$ be a chain of invariant subspaces of $\Pspace{A}$ under $\Lmap{A}$.
  Then
  \begin{equation}
    0 = \Vsub_n^0 \subset \Vsub_{n-1}^0 \subset \ldots \subset \Vsub_1^0 \subset \Vsub_0^0 = \Pspace{A}^*
  \end{equation}
  is a chain of invariant subspaces of $\Pspace{A}^*$.
\end{lemma}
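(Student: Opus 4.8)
The plan is to reduce the statement to two elementary properties of annihilators, combined with the Proposition above asserting that the annihilator $\Vsub^0$ of an invariant subspace of $\Pspace{A}$ is an invariant subspace of $\Pspace{A}^*$.

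First I would record the inclusion-reversing property of annihilators: if $\mathsf{W} \subseteq \mathsf{W}'$ are subspaces of $\Pspace{A}$, then $(\mathsf{W}')^0 \subseteq \mathsf{W}^0$, since a linear form vanishing on all of $\mathsf{W}'$ in particular vanishes on the smaller subspace $\mathsf{W}$. Applying this to the given chain $\mathsf{V}_0 \subset \mathsf{V}_1 \subset \ldots \subset \mathsf{V}_n$ yields the reversed chain of inclusions $\Vsub_n^0 \subseteq \Vsub_{n-1}^0 \subseteq \ldots \subseteq \Vsub_0^0$. The two endpoints I would compute directly: $\mathsf{V}_0 = 0$ is annihilated by every linear form, so $\Vsub_0^0 = \Pspace{A}^*$; and $\mathsf{V}_n = \Pspace{A}$ is annihilated only by the zero form, so $\Vsub_n^0 = 0$. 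To see that each inclusion is strict, so that the reversed chain is again a filtration in the sense of the Definition above, I would invoke the dimension identity $\dim \Vsub_i^0 = \dim \Pspace{A} - \dim \mathsf{V}_i$ recalled earlier, together with the fact that $\dim \mathsf{V}_i$ strictly increases along the original filtration.

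It then remains to check that each $\Vsub_i^0$ is invariant. Since $\mathsf{V}_i$ is by hypothesis an invariant subspace of $\Pspace{A}$ under every local transformation $\Lmap{A}$, the Proposition above applies verbatim with $\Vsub = \mathsf{V}_i$: for $\expr{} \in \Vsub_i^0$ one has $\expr{}~\Lmap{A} \in \Vsub_i^0$, i.e. $\Vsub_i^0$ is invariant under the adjoint action. Combining this invariance with the reversed chain of strict inclusions and the endpoint computations completes the argument.

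I do not expect a genuine obstacle here: the content of the lemma is just that ``take the annihilator'' is an order-reversing operation on the lattice of invariant subspaces, sending $0$ and $\Pspace{A}$ to $\Pspace{A}^*$ and $0$. The only point that needs a moment's care is the bookkeeping that the reversed chain is still a strictly increasing chain, which is exactly where the dimension count enters.
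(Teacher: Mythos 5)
Your proposal is correct and follows essentially the same route as the paper's proof: invariance of each $\Vsub_i^0$ comes from the earlier proposition on annihilators of invariant subspaces, and the reversed inclusions come from the order-reversing property of the annihilator (which the paper simply cites from Roman's textbook). Your additional remarks on the endpoints and on strictness of the inclusions via $\dim \Vsub_i^0 = \dim \Pspace{A} - \dim \Vsub_i$ are fine details that the paper leaves implicit.
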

\begin{proof}
  The invariance of $\Vsub_i^0$ has already been proved above.
  For $\Vsub_{i-1} \subset \Vsub_i$ implies $\Vsub_i^0 \subset \Vsub_{i-1}^0$, see~\cite[Theorem 3.14]{Roman2005}.
\end{proof}

\subsection{Subspace decomposition}

We now move to the proof of the uniqueness of the decomposition presented in Proposition~\ref{Prop:Decomposition}.
According to~\cite[Lemma 2.8]{Etingof2009}, every finite dimensional representation admits a finite filtration such that the successive quotients $\mathsf{V}_i / \mathsf{V}_{i-1}$ are irreducible, and according to the Jordan-H{\"o}lder theorem~\cite[Section 2.7]{Etingof2009}, this filtration is unique up to the permutation of subspaces.

While it is not difficult to verify that the stated invariant subspaces in Proposition~\ref{Prop:Decomposition} are indeed invariant (at least for concrete cases), proving the irreducibility of the successive quotients is more involved.

\begin{proposition}
  \label{Prop:Filtration}[Technical version of Proposition~\ref{Prop:Decomposition}]
  The representation $\Lmap{A}(\cdot)$ of $\mathsf{L}$ admits the filtration
  \begin{equation}
    0 \subset \Cof{A} \subset \Cof{A} \oplus \Zof{A} \subset \Cof{A} \oplus \Zof{A} \oplus \Sof{A} = \Pspace{A}\;,
  \end{equation}
  such that the successive quotients are irreducible.
\end{proposition}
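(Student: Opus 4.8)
The plan is to verify the chain of inclusions by the subspace definitions, check invariance directly from the monoid action, and then carry the real work: showing that each of the three successive quotients $\Cof{A}$, $(\Cof{A}\oplus\Zof{A})/\Cof{A}\cong\Zof{A}$ and $\Pspace{A}/(\Cof{A}\oplus\Zof{A})\cong\Sof{A}$ is an irreducible representation of the algebra $\mathsf{L}$. By Lemma~\ref{Lemma:QuotientAnnihilator} each quotient can be studied via its annihilator, so equivalently I would work with the dual chain of Lemma~\ref{Lemma:AnnihilatorDualDecomposition} and show $\Sof{A}^*$, $\Zof{A}^*$-mod-below, $\Cof{A}^*$-mod-below are irreducible. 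The invariance facts I would take as essentially routine: for a deterministic local map $s=(\xi,\overline\alpha)$ one checks $\sum_a\Lambda_{(a',x'),(a,x)}\,v(a,x)$ on a vector $\vec v\in\Cof{A}$ (i.e.\ one with $\sum_a v(a,x)=0$ for all $x$) again has vanishing sums over $a'$, using that $\alpha_{x'}$ is a surjection of output alphabets; and on $\Cof{A}\oplus\Zof{A}$ (the $\traceout{A}$-constant vectors, by Proposition~\ref{Prop:LocalTransformationsNormalization}) the constant is preserved. This gives the filtration; the inclusions are strict by a dimension count ($\dim\Cof{A}=\sum_x(A_x-1)$, $\dim\Zof{A}=1$, $\dim\Sof{A}=X-1$).

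For irreducibility of the bottom piece $\Cof{A}$, I would argue that the monoid $\Det{A}$ already contains enough maps: relabelings of outputs for a fixed input $x$ act on the block $\spanset{\{\data{A}^{i|x}\}_i}$ as the standard (irreducible) representation of the symmetric group $S_{A_x}$, and input relabelings $\xi$ together with the "collapse" maps that send $x'$ to a chosen $x$ permute and mix these blocks; any nonzero invariant subspace, projected to some block via a deterministic collapse map, must contain the whole standard $S_{A_x}$-representation of that block, and then input permutations propagate it to every block. So $\Cof{A}$ has no proper nonzero invariant subspace. For the middle quotient $\Zof{A}$, it is one-dimensional, hence automatically irreducible. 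For the top quotient $\cong\Sof{A}$, I would identify the induced action: modulo $\Cof{A}\oplus\Zof{A}$, a deterministic local map acts on the $\sig{A}^k$ essentially through the input map $\xi$ only (the output relabelings act trivially on normalization data), and $\{\sig{A}^k\}_{k=1}^{X-1}$ carries the standard representation of the symmetric group on the $X$ inputs — again irreducible — provided I can realize enough input permutations and the identification is faithful; here one must be slightly careful because $\xi$ need not be injective, but the non-injective $\xi$'s send $\Sof{A}$ into $\Cof{A}\oplus\Zof{A}$, so on the quotient they act as projections that cannot stabilize a proper nonzero subspace either.

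The main obstacle I anticipate is precisely the irreducibility of the outer quotients, and in particular making rigorous the claim that $\Det{A}$ (not the full group of relabelings, which would be the easy $S_n$-representation story of~\cite{Renou2016}) supplies operators rich enough to force irreducibility despite containing only monoid — not group — elements. Concretely: one must exhibit, for an arbitrary nonzero $\vec v$ in a putative invariant subspace of $\Cof{A}$, an explicit composition of output relabelings, input relabelings and collapse maps whose image is a prescribed basis vector $\data{A}^{i|x}$ (up to scalar), and then generate the rest of a single $S_{A_x}$-block and finally all blocks. This is the step where the representation-theoretic input of~\cite{Etingof2009} (Jordan--Hölder, and the characterization of irreducibles) is genuinely used, together with a hands-on description of composition from Section~\ref{Sec:LiftingDefinitions:Composition}. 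Once the three quotients are shown irreducible, the filtration statement is complete, and uniqueness of the decomposition in Proposition~\ref{Prop:Decomposition} follows from Jordan--Hölder as already noted in the text.
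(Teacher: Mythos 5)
Your overall architecture is the same as the paper's: check invariance of $\Cof{A}$ and $\Cof{A}\oplus\Zof{A}$, prove irreducibility of the three successive quotients, and let Jordan--H\"older give uniqueness. For the top quotient your route is actually cleaner than the paper's: since on $\Pspace{A}/(\Cof{A}\oplus\Zof{A})$ only the input map $\xi$ acts (via $v'_{x'}=v_{\xi(x')}$ on the $\sigma$-values, modulo constants), and every permutation $\xi\in S_X$ is realizable as a deterministic local map (the output maps are irrelevant on the quotient), irreducibility follows from irreducibility of the standard representation of $S_X$ over $\RR$; any monoid-invariant subspace is in particular $S_X$-invariant, so you do not need your cautionary remark about non-injective $\xi$ at all --- which is fortunate, because that remark is false as stated (only \emph{constant} $\xi$ send everything into the homogeneous-normalization subspace; a general non-injective $\xi$ does not). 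The paper instead builds an explicit spanning family of collapse maps $\xi_i$; your group-theoretic shortcut buys a shorter and more robust argument there.

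The genuine gap is in the bottom piece: the irreducibility of $\Cof{A}$ is exactly the step you defer as ``the main obstacle,'' and it is the real content of the paper's proof. What you must exhibit (and the paper does, in a one-line construction) is: given any nonzero $\vec{v}\in\Cof{A}$, pick an input $\nu$ where the block component is nonzero, take $\xi(x')=\nu$ for all $x'$, send the outputs of block $\nu$ carrying positive coefficients to $i$ and those carrying negative coefficients to $A_j$ inside the target input $j$, and map all outputs to a constant for $x'\neq j$ (these blocks then vanish because the block-$\nu$ component sums to zero); the image is a nonzero multiple of $\data{A}^{i|j}$, for arbitrary $i,j$, so any nonzero invariant subspace is all of $\Cof{A}$. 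Your sketched alternative (isolate a block, use the standard $S_{A_\nu}$-representation inside it, then transport to other blocks) can be made to work, but two points need care: the invariance of $\Cof{A}$ does not rely on $\alpha_{x'}$ being surjective (the pushforward preserves the total sum over outputs regardless), and ``input permutations propagate it to every block'' is not literally available when the output cardinalities $A_x$ differ --- pure input permutations between blocks of unequal size do not exist, so the transport must use general deterministic maps whose output maps coarse- or fine-grain (e.g.\ push $e_1-e_2$ of block $\nu$ to $e_i-e_{A_j}$ of block $j$). Once that construction is written out, your proof is complete and matches the paper's in substance.
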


We now prove Proposition~\ref{Prop:Filtration}.
\subsubsection{Irreducibility of $\Cof{A}$}
We first show that $\Cof{A}$ is a subrepresentation of $\mathsf{L}$.
By linearity, it is sufficient to consider the action of elements of $\Det{A}$ on $\Cof{A}$.
Let $s = (\xi, \overline{\alpha}) \in \Det{A}$ be an abstract deterministic map and $\vecP{A} \in \Cof{A}$ while $\vecP{A}' = \Lmap{A}(s) ~ \vecP{A}$.
By definition, $\Cof{A}$ is the maximal subspace of $\Pspace{A}$ such that $\sigma_x ~ \vecP{A} = 0$ for all $x$.
We verify easily that $\sigma_{x'} ~ \vecP{A}' = \sigma_{\xi(x')} ~ \vecP{A} = 0$, and thus $\vecP{A}' \in \Cof{A}$.
Thus $\Cof{A}$ is a subrepresentation of $\Pspace{A}$.

We assume that at least one $A_x > 1$ such that the subspace $\Cof{A}$ is nonzero.
We show that this space $\Cof{A}$ is irreducible using the following lemma, noting that $\Cof{A}$ is spanned by the vectors $\data{A}^{i|j}$ defined in Eq.~\eqref{Eq:CorrelationVectors}.

\begin{lemma}
  Given a nonzero $\vecP{A} \in \Cof{A}$, we can construct a basis vector $\data{A}^{i|j}$ for arbitrary $i$, $j$ using an appropriate deterministic map in $s = (\xi, \overline{\alpha}) \in \Det{A}$, such that
\begin{equation}
  \Lmap{A}(s) ~ \vecP{A} = w ~ \data{A}^{i|j}
\end{equation}
for some $w \neq 0$.

\end{lemma}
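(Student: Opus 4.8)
The plan is to build the required deterministic map by ``reading'' the single nonzero input-block of $\vecP{A}$ into the target block $j$, while collapsing every other target block to zero.

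First I would exploit nonvanishing: since $\vecP{A}\neq 0$, there is a pair $(a_0,x_0)$ with $v:=\Pgivenop{A}{X}{a_0}{x_0}\neq 0$; and since $\vecP{A}\in\Cof{A}$ we have $\Ssum{A}^{x}~\vecP{A}=\sum_a\Pgivenop{A}{X}{a}{x}=0$ for every $x$, so in particular the $x_0$-block sums to zero (hence $A_{x_0}\geq 2$). Fix the target basis vector $\data{A}^{i|j}$; by definition $A_j\geq 2$, so $i$ and $A_j$ are distinct output values. I then define $s=(\xi,\overline{\alpha})\in\Det{A}$, viewed as a map $\Pspace{A}\to\Pspace{A}$, by: $\xi(j)=x_0$ and $\xi(x')=1$ for $x'\neq j$; $\alpha_j\colon\domain{A_{x_0}}\to\domain{A_j}$ with $\alpha_j(a_0)=i$ and $\alpha_j(a)=A_j$ for all $a\neq a_0$; and $\alpha_{x'}$ the constant map onto output $1$ for every $x'\neq j$. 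Each piece is legitimate because the only constraints in Definition~\ref{Def:LocalDeterministicMap} concern the cardinalities of source and target, which are respected ($A_j\geq 2$, and every $A_{x'}\geq 1$).

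Next I would compute $\vecP{A}'=\Lmap{A}(s)~\vecP{A}$ block by block, using
\[
  \Pgivenopprime{A'}{X'}{a'}{x'}=\sum_{a\,:\,\alpha_{x'}(a)=a'}\Pgivenop{A}{X}{a}{\xi(x')}\;.
\]
For a target input $x'\neq j$, the map $\alpha_{x'}$ is constant, so the only possibly nonzero coefficient of that block sits at output $1$ and equals $\sum_a\Pgivenop{A}{X}{a}{\xi(x')}=\Ssum{A}^{\xi(x')}~\vecP{A}=0$; hence that block vanishes. For $x'=j$, output $i$ receives $\Pgivenop{A}{X}{a_0}{x_0}=v$, output $A_j$ receives $\sum_{a\neq a_0}\Pgivenop{A}{X}{a}{x_0}=\Ssum{A}^{x_0}~\vecP{A}-v=-v$, and every other output receives $0$. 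Therefore $\vecP{A}'=v\big(\Pbasis{A}{i}{j}-\Pbasis{A}{A_j}{j}\big)=(vA_j)~\data{A}^{i|j}$, which is the claim with $w=vA_j\neq 0$.

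There is no deep obstacle here: the argument is a direct construction, so the only care needed is in (i) checking that the $s$ written down truly obeys Definition~\ref{Def:LocalDeterministicMap} with the correct source/target cardinalities, and (ii) invoking the defining property $\Ssum{A}^{x}~\vecP{A}=0$ of $\Cof{A}$ at the two places above, once to annihilate the off-target blocks and once to compute the $A_j$-coefficient. With the lemma in hand, irreducibility of $\Cof{A}$ follows at once: any nonzero invariant subspace of $\Cof{A}$ contains $w~\data{A}^{i|j}$ for all admissible $i,j$, hence equals $\Cof{A}$.
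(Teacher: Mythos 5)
Your construction is correct and follows essentially the same route as the paper's proof: pick an input block with nonzero entries, use the zero block-sum property of $\Cof{A}$ both to annihilate the off-target input blocks (via constant output maps) and to force the target block into the form $v\bigl(\Pbasis{A}{i}{j}-\Pbasis{A}{A_j}{j}\bigr)\propto\data{A}^{i|j}$. The only (harmless) difference is cosmetic: the paper sends all positive-coefficient outputs to $i$ and all negative ones to $A_j$, whereas you send a single nonzero output $a_0$ to $i$ and everything else to $A_j$, which works just as well since the lemma only requires $w\neq 0$.
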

\begin{figure}
  \begin{center}
    \includegraphics{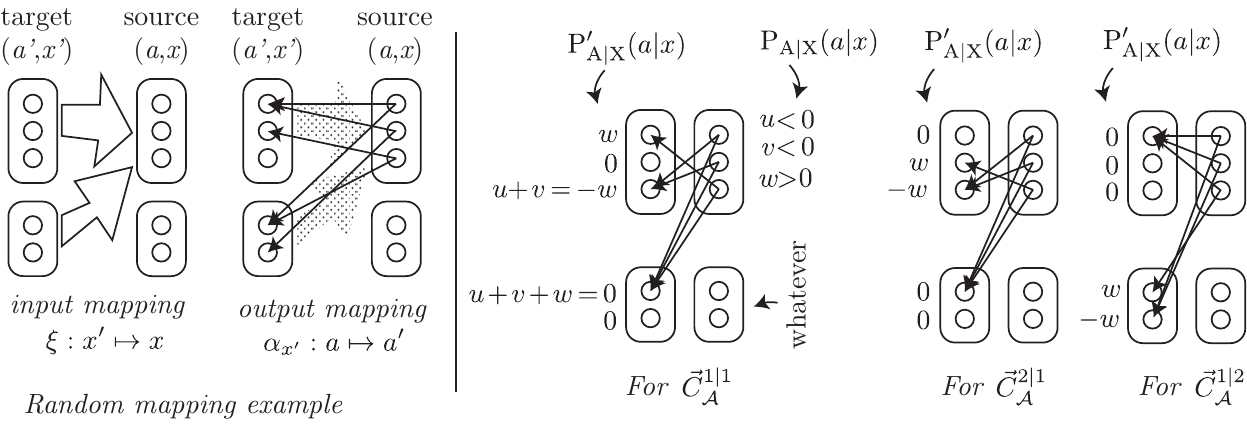}
  \end{center}
  \caption{
    \label{Fig:InvarianceC}
    We provide an illustration of the irreducibility of the subspace $\Cof{A}$ for the party structure $\pst{A} = (3, 2)$, and thus for the space $\Pspace{A} = \RR^5$.
    This figure illustrates the notation used in the later Section~\ref{Sec:LiftingDefinitions}.
    Left: The abstract deterministic local maps are composed of a mapping of inputs $\xi$ (white arrows) and, given this input mapping, a mapping of outputs $\alpha_{x'}$ (black arrows).
    Right: We consider a source vector $\vec{c} = (u, v, w, \zeta_0, \zeta_1)$ where $\zeta_0$, $\zeta_1$ are arbitrary and some of the $u$, $v$, $w$ are nonzero.
    In our example, we have $u,v<0$ and $w>0$ while $u+v+w = 0$ by definition.
    We then display the deterministic local maps that construct the three basis vectors $\data{A}^{1|1}$, $\data{A}^{2|1}$ and $\data{A}^{1|2}$.
  }
\end{figure}
\begin{proof}
The proof is straightforward.
As $\sum_a \Pgivenop{A}{X}{a}{x} = 0$ but $\vecP{A} \ne 0$, at least one input $x = \nu$ has some nonzero coefficients $\{\Pgivenop{A}{X}{a}{\nu}\}_a$.
We use the input mapping $\xi(x') = \nu$ for the deterministic map, and consider  now the output mappings $\alpha_{x'}: \domain{A_\nu} \to \domain{A_{x'}}$.
For $x' \neq j$, we set $\alpha_{x'}(a) = 1$, which sets the coefficients for $(a',x' \neq j)$ to zero. For $x' = j$, we fix the input $x = \nu$ and consider for which $a$ the source vector $\vecP{A}$ has corresponding positive, negative or zero coefficients.
If the sum of the positive coefficients is $w_+$ and the sum of the negative coeficients is $w_-$, by $\sigma_\nu(\vecP{A}) = w_+ + w_- = 0$ we have $w_+ = - w_-$.
Then, to form the basis vector $\data{A}^{i|j}$, it is sufficient to send the positive-valued outputs to $a' = i$, and the negative-valued outputs to $a' = A_j$, while the zero-valued outputs can be sent anywhere.
The scheme is also illustrated in Figure~\ref{Fig:InvarianceC}.
\end{proof}

\subsubsection{Decomposition of $\Pspace{A} / \Cof{A}$}
\label{Sec:Decomposition1}
We have now the chain $0 \subset \Cof{A} \subset \Pspace{A}$, and study whether the quotient $\Pspace{A} / \Cof{A}$ is irreducible.
As the $\{\sigma_x\}_x$ provide a basis of the annihilator $\Cof{A}^0$, we use Lemma~\ref{Lemma:QuotientAnnihilator} with the observation that $\sigma_{x'} ~ \vecP{A}' = \sigma_{\xi(x')} ~ \vecP{A}$ under local transformations.
The subspace of elements with homogenous normalization, $\sigma_x ~ \vecP{A} = \omega$ for all $x$, is invariant under $\Det{A}$.
It is thus sufficient to complement $\Cof{A}$ with a single vector $\vecP{A}$ obeying $\sigma_x ~ \vecP{A} = \omega$ for all $x$ to obtain a new invariant subspace.

However, there is some freedom in the choice of this new vector.
We make a choice here to proceed with the proof, the motivations will become clear in Section~\ref{Sec:InvariantTechnical:FixingDOF}: We require $\Zof{A}$ to be invariant permutation of outputs, and then only choice is a vector proportional to the uniformly random distribution $\Puni{A}$.
We fix the scaling so that $\Puni{A}$ is a properly normalized probability distribution.

Note that the space $\Zof{A}$ spanned by $\Puni{A}$ is {\em not} invariant under deterministic local maps.
However, the space $\Cof{A} \oplus \Zof{A}$ is invariant, as is the quotient space $(\Cof{A} \oplus \Zof{A}) / \Cof{A}$ given by $\omega \Puni{A} + \Cof{A}$ for $\omega \in \RR$.
This quotient space has dimension one and thus corresponds to an irreducible (trivial) representation.
As such, it cannot be split further.

\subsubsection{Decomposition of $\Pspace{A} / (\Cof{A} \oplus \Zof{A})$}
We verify that the quotient space $\mathcal{Q} = \Pspace{A} / (\Cof{A} \oplus \Zof{A})$ is irreducible.
A basis of the annihilator $(\Cof{A} \oplus \Zof{A})^0$ is given by the maps $\Sanh{A}^i \in \Pspace{A}^*$, for $i = 1,\dots,X-1$.
Thus, $\mathcal{Q}$ is isomorphic to $\RR^{X-1}$ through the map
\begin{equation}
  \label{Eq:IsomorphismSQuotient}
  f: \mathcal{Q} \to \RR^{X-1}, \quad \vec{q} \mapsto \vec{v} = (\Sanh{A}^1 ~ \vec{q}, \dots, \Sanh{A}^{X-1} ~ \vec{q}) \;,
\end{equation}
as discussed in Lemma~\ref{Lemma:QuotientAnnihilator}.
When applying a local transformation to an element of $\mathcal{Q}$, only the input mapping $\xi$ modifies $\vec{v}$, the output mapping $\overline{\alpha}$ is irrelevant as $\Sanh{A}^i$ sums over all output values.
Thus, we study the action of local transformation through their input mappings only.
We now show that $\mathcal{Q}$ is irreducible, and start with any nonzero element $\vec{v}$.
Using suitable permutations, we can transform $\vec{v}$ such that its first two coefficients obey $v_1 \ne v_2$.
We then consider local transformation where the input mappings preserve the last input: $\xi(X) = X$.
Then, the action on $\vec{v}$ is such that:
\begin{equation}
  \vec{v}' = \restr{\Lmap{A}(s)}{\mathcal{Q}} ~ \vec{v} \quad \implies \quad {v'}_{i'} = v_{\xi(i')}\; \text{ for } i'=1,\dots,X-1\;.
\end{equation}
Now, for $i = 1, \dots, X-1$, we define the input mapping family $\{ \xi_i \}_i$
\begin{equation}
  \xi_i(x') = \begin{cases} 1 & \mbox{if } x' = i \\ 2 & \mbox{if } x' \neq i \mbox{ and } x' < X \\ X & \mbox{if } x' = X \end{cases}
\end{equation}
which generates the vectors $(v_1, v_2, \dots, v_2)$ and $(v_2, v_1, \dots, v_2)$ until $(v_2, v_2, \dots, v_1)$, which together provide a basis of $\RR^{X-1}$.
This shows that $\mathcal{Q}$ is irreducible, which completes the decomposition.
Note that the kernel of the representation $\restr{\Lmap{A}(s)}{\mathcal{Q}}$ is the set of all deterministic maps with an input mapping $\xi$ corresponding to the identity.

\subsection{Unicity of the decomposition}
By the Jordan-H{\"o}lder theorem~\cite[Theorem~2.18]{Etingof2009}, the decomposition is unique up to permutation of quotients.
In the decomposition above, we picked $\Cof{A}$ as an irreducible representation of $\mathsf{L}$ at the start of the chain.
Let us look now at the other two candidates to place at the start of the chain through their kernels.

The quotient $(\Cof{A} \oplus \Zof{A})/\Cof{A}$ corresponds to a trivial representation, its kernel is the set of all deterministic transformations $\Det{A}$.
The kernel of the quotient $\Pspace{A} / (\Cof{A} \oplus \Zof{A})$ is the set of deterministic maps with identity input mapping, which we write $\mathsf{Out}_\mathcal{A} \subset \Det{A}$.
However, it is impossible to find an invariant subspace of $\Pspace{A}$ whose kernel contains $\mathsf{Out}_\mathcal{A}$ (except in the pathological case where all $A_x = 1$, but then the dimension of $\Cof{A}$ would be zero anyway).
Thus $\Cof{A}$ has to be the first subspace in the chain.

The remaining freedom is whether we can find another irreducible representation in the quotient $\Pspace{A} / \Cof{A}$.
As observed in Section~\ref{Sec:Decomposition1}, the annihilator of $\Cof{A}$ has basis $\{\sigma_x\}_{x=1,\dots,X}$.
Thus, the quotient space is isomorphic to $\RR^X$ through
\begin{equation}
  h: \Pspace{A} / \Cof{A} \to \RR^X, \quad \vecP{A} + \Cof{A} \mapsto \vec{u} = (\sigma_1(\vecP{A}), \dots, \sigma_X(\vecP{A})) \;,
\end{equation}
and deterministic maps transform $\vec{u} \to \vec{u}'$ by mapping its coefficients $u'_{x'} = u_{\xi(x')}$.
Now, note that any invariant subspace of that quotient space has to include the vector $\vec{u}_0 = (1,\dots,1)$, as it is proportional to the result of the action of a deterministic map with $\xi(x') = \text{cte}$.
The subspace spanned by $\vec{u}_0$ is itself invariant, and corresponds to the trivial representation of $\Det{A}$, which has to be the next quotient space in the composition series.
This completes the proof of the unicity of the decomposition.

\subsection{Fixing the degrees of freedom afforded by the Jordan-H{\"o}lder theorem}
\label{Sec:InvariantTechnical:FixingDOF}
We used the Jordan-H{\"o}lder theorem to prove the uniqueness of our decomposition in a chain of subspaces
\begin{equation}
  0 \subset \Cof{A} \subset \Cof{A} \oplus \Zof{A} \subset \Cof{A} \oplus \Zof{A} \oplus \Sof{A} = \Pspace{A}
\end{equation}
but does not prescribe the form of the $\Zof{A}$ and $\Sof{A}$ subspaces; nor it does prescribe a particular convention for the basis vectors used to construct those subspaces.
Thus, how do we motivate the convention proposed in Section~\ref{Sec:InvariantSubspaces:Correlators}?

\subsubsection{Defining the subspaces $\Zof{A}$ and $\Sof{A}$}
\label{Sec:InvariantTechnical:RelabelingInvariant}
We first consider the problem of singling out the subspaces $\Zof{A}$ and $\Sof{A}$.
For that, we use two principles:
\begin{itemize}
\item The subspaces $\Zof{A}$ and $\Sof{A}$ should be invariant under any permutation of outputs, as the labeling of outputs has no physical relevance (see~\cite{Renou2016}).
\item The trace out form $\traceout{A}$ should correspond to the computation of a marginal probability distribution using a uniformly random distribution of inputs $\operatorname{P}_{\mathrm{X}}(x) = 1/X$, where $X$ is the number of inputs, as the labeling of inputs has no physical relevance.
\end{itemize}
Invariance under permutation of outputs fixes the subspace $\Zof{A}$ as already discussed in Section~\ref{Sec:Decomposition1}.
The subspace $\Sof{A}$, of dimension $X-1$, is mostly determined by invariance under output permutation, which leaves $X$ degrees of freedom.
To remove the last degree of freedom, we use the duality relation $\traceout{A} ~ \vecP{A} = 0$ for $\vecP{A} \in \Sof{A}$, with the form $\traceout{A}$ fixed by the second principle.

\subsubsection{Choice of basis elements}
The remaining freedom to fix in Section~\ref{Sec:InvariantSubspaces:Correlators} is the choice of the particular basis elements.
We use the following guiding principles:
\begin{enumerate}
\item We reuse existing conventions as much as possible.
  In the case of binary outputs, our notation should be compatible with binary correlators.
\item The basis conversion matrices have straightforward structure and are written using rational coefficients with small numerators/denominators.
\item Pure signaling correlations (for example with $b=x$) have coefficients in the corresponding signaling subspace equal to the identity matrix (interpreting the matrix rows as the source space $\Ssub$ and columns as the target space $\Csub$).
\end{enumerate}

The correlation vectors $\data{A}^{i|x}$ are such that the dual elements $\datad{A}^{i|x}$ correspond to the generalized correlators presented in~\cite[Appendix]{Bancal2010}, which satisfies 1. and 2.
The vector $\Puni{A}$ is fixed by normalization.
The signaling vectors $\sig{A}^{i}$ are then chosen to satisfy 3.

\subsection{Proof of Proposition~\ref{Prop:Normalization}}
\label{Sec:ProofPropNormalization}
First, remark that normalization prescribes:
\begin{equation}
  \sum_{a_1a_2\ldots b_1b_2\ldots} \Pgivenop{A_1A_2\ldots B_1B_2\ldots}{X_1X_2\ldots Y_1Y_2\ldots}{a_1a_2\ldots b_1b_2 \ldots}{x_1x_2\ldots y_1y_2 \ldots} = 1\;,
\end{equation}
for all $x_1,x_2,\ldots,y_1,y_2,\ldots$, which is equivalent to
\begin{equation}
  (\SsumI{A}{1}^{x_1} \otimes \SsumI{A}{2}^{x_2} \otimes \ldots \otimes \SsumI{B}{1}^{y_1} \otimes \SsumI{B}{2}^{y_2} \otimes \ldots) ~ \vecP{} = 1\;,
\end{equation}
for the same indices.
We remark that $\{\SsumI{A}{1}^{x_1}\}_{x_1}$ spans the same subspace as $\{\traceoutI{A}{1}\} \cup \{\SanhI{A}{1}^{k_1}\}_{k_1}$.
Thus we rewrite the above constraint either as:
\begin{equation}
  (\traceoutI{A}{1} \otimes \traceoutI{A}{2} \otimes \ldots \otimes \traceoutI{A}{n_\Zsub}) ~ \vecP{} = 1
\end{equation}
when $n_\Ssub = 0$ or
\begin{equation}
  (\traceoutI{A}{1} \otimes \traceoutI{A}{2} \otimes \ldots \otimes \SanhI{B}{1}^{l_1} \otimes \SanhI{B}{2}^{l_2} \otimes \ldots) ~ \vecP{} = 0
\end{equation}
for all $l_1, l_2, \ldots$ when $n_\Ssub > 0$, and the r.h.s. value is obtained by substituting the definitions~\eqref{Eq:Traceout1} and~\eqref{Eq:DefSanh}.

\subsection{Proof of Proposition~\ref{Prop:Nonsignaling}}
\label{Sec:ProofPropNonsignaling}
  Due to the existence of a $\Csub_{\mathcal{C}_1}$ subspace, we cannot have the cardinality $(C_1)_{z_1} = 1$ for all $z_1$.
  We assume that the output cardinalities for $z_1 = 1$ is $(C_1)_{z_1=1} > 1$; when this is not true, the proof is adapted by replacing $z_1 = 1$ by one of the inputs $z_1$ that has cardinality $(C_1)_{z_1} > 1$.
  The same assumption is made about $\mathcal{C}_2$ and so on.
  
  We consider the deterministic nonsignaling behavior
  \begin{multline*}
    \Pgivenop{A_1 A_2 \ldots C_1 C_2 \ldots}{X_1 X_2 \ldots Z_1 Z_2 \ldots}{a_1 a_2 \ldots c_1 c_2 \ldots}{x_1 x_2 \ldots z_1 z_2 \ldots} = \\
    \Pgivenop{A_1}{X_1}{a_1}{x_1} \Pgivenop{A_2}{X_2}{a_2}{x_2} \ldots \Pgivenop{C_1}{X_1}{c_1}{x_1} \Pgivenop{C_2}{X_2}{c_2}{x_2} \ldots    
  \end{multline*}
  where each single party distribution is deterministic such that $a_1 = a_2 = \ldots = c_1 = c_2 = \ldots = 1$.
  To prove that $\vec{P}_{\mathcal{A}_1\mathcal{A}_2\ldots\mathcal{C}_1\mathcal{C}_2\ldots}$ has support in the aforementionned subspace, we have
\begin{equation}
  (\traceoutI{A}{1} \otimes \traceoutI{A}{2} \otimes \ldots \otimes \datadI{C}{1}^{1|1} \otimes \datadI{C}{2}^{1|1} \otimes \ldots) ~ \vec{P} = \left[ (C_1)_{z_1=1} - 1 \right ] \left [ (C_2)_{z_2=1} - 1 \right ] \ldots  \ne 0\;,
\end{equation}
as $\traceoutI{A}{1} ~ \vec{P}_{\mathcal{A}_1} = 1$, \ldots, and
\begin{equation}
  \datadI{C}{1}^{1|1} ~ \vec{P}_{\mathcal{C}_1} = (C_1)_{z_1=1} - 1
\end{equation}
by~\eqref{Eq:Defdatad}.
Now, the proposition stated that {\em any} deterministic behavior has support in the considered subspace.
Due to the tracing out, the deterministic value of the outputs $a_1, \ldots, a_{n_\Zsub}$ do not impact the proof.
Nevertheless, we assumed that $c_1 = \ldots = c_{n_\Csub} = 1$.
This does not lose generality.
We use a relabeling of outputs to bring the outputs to $c_1 = \ldots = c_{n_\Csub} = 1$.
As the subspace considered is invariant under local transformations, and the transformation is reversible, the proposition follows.

\subsection{Proof of Proposition~\ref{Prop:Signaling}}
\label{Sec:ProofPropSignaling}
We remind Definition~\ref{Def:Nonsignaling}, and write after summing over $\overline{x}$:
\begin{equation}
  \left[\sum_{\overline{x}\overline{a}\overline{b}} \Pgivenop{\overline{A}\overline{B}\overline{C}}{\overline{X}\overline{Y}\overline{Z}}{\overline{a} \overline{b} \overline{c}}{\overline{x} \overline{y} \overline{z}} \right]
  -
  \left[\sum_{\overline{x}\overline{a}\overline{b}} \Pgivenop{\overline{A}\overline{B}\overline{C}}{\overline{X}\overline{Y}\overline{Z}}{\overline{a} \overline{b} \overline{c}}{\overline{x} \overline{y}' \overline{z}} \right] = 0, \quad \forall \overline{c},\overline{y}, \overline{y}', \overline{z}\;.
\end{equation}
Fixing all $\overline{y}'$ to the last input value, we get
\begin{equation}
  (\traceoutI{A}{1} \otimes \traceoutI{A}{2} \otimes \ldots \otimes \SanhI{B}{1}^{y_1} \otimes \SanhI{B}{2}^{y_2} \otimes \ldots \mathbbm{1}_{\mathcal{C}_1} \otimes \mathbbm{1}_{\mathcal{C}_2} \otimes \ldots ) ~ \vec{P} = 0\;.
\end{equation}
This is a vector equation as we left the subspaces $\mathcal{C}_1$, $\mathcal{C}_2$, \ldots unaffected.
We get the proposition by replacing the identity maps by the $\datad{\ldots}$, which corresponds to the subspaces not already covered by Propositions~\ref{Prop:Normalization} and~\ref{Prop:Nonsignaling}.

\clearpage

\part{Liftings}
\label{Part:Liftings}
We now start the third part of our manuscript, and study the reversibility of local transformations.
In particular, we link invertible transformations to the liftings of Bell expressions presented by Pironio~\cite{Pironio2005}, in which specific local transformations process Bell expressions to create new expressions in scenarios with additional inputs and/or outputs.
When the original expression corresponds to a facet of the local polytope, the new expression also corresponds to a local facet.
This implies that some local facets in scenarios of complex structure test actually correlations with a simpler structure, so these Bell expressions are preferably studied in their simpler form.
The present section expands on~\cite{Pironio2005} in two directions.
We prove that the transformations listed in~\cite{Pironio2005} are exhaustive, and generalize them to signaling scenarios.
We also study liftings of behaviors, for example of nonsignaling boxes.

This part of our manuscript is structured as follows.
First, in Section~\ref{Sec:LiftingDefinitions}, we provide an overview and the relevant definitions.
Second, in Section~\ref{Sec:LiftingBehaviors}, we discuss reversible transformations of behaviors, which correspond to liftings of boxes.
Finally, Section~\ref{Sec:LiftingExpressions} addresses reversible transformations of Bell expressions, which corresponds to liftings of Bell-like inequalities.

For simplicity, the arguments in this part are presented for a nonsignaling two-party scenario,
as the generalization to the multi-party case is straightforward.
To avoid prime symbols burdening the notation, we use liberally the letters A, B, C, D, E, F.
The context easily identifies which particular subsystem the spaces $\Pspace{A}, \Pspace{B}, \ldots$ are attached to.

\clearpage

\section{Properties of deterministic local transformations}
\label{Sec:LiftingDefinitions}
\marginnote{
  \begin{center}
    \includegraphics{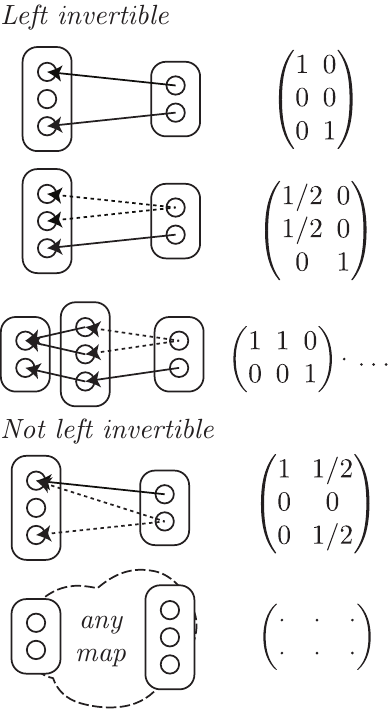}
  \end{center}
  \captionof{figure}{\label{Fig:LeftInvertible}
    Examples of transformations and left invertibility.
    A solid arrow corresponds to a transition probability of $1$, while dotted arrows correspond to $1/2$.
  }
  \begin{center}
    \includegraphics{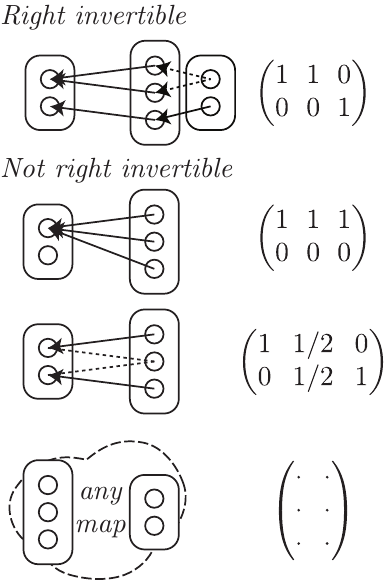}
  \end{center}
  \captionof{figure}{\label{Fig:RightInvertible}
    Examples of transformations and right invertibility.
  }
  \begin{center}
    \includegraphics{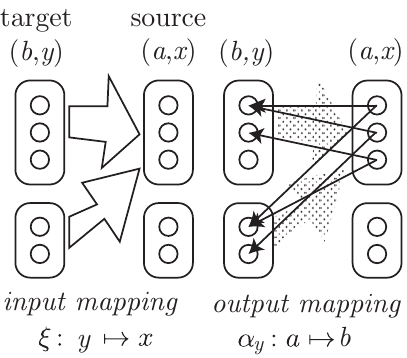}
  \end{center}
  \captionof{figure}{\label{Fig:GraphicalDeterministic}
    Graphical representation of deterministic local maps, see Figure~\ref{Fig:InvarianceC} for a complex example.
  }
}[-1cm]
We first discuss deterministic stochastic matrices, and their invertibility properties, as deterministic local transformations can be seen as their generalization.
We then complete the characterization of deterministic local transformations made in the previous sections.
In particular, we introduce three representations of local transformations: an abstract formulation using the pair of mappings $(\xi, \overline{\alpha})$, a graphical representation and a representation as a block matrix, all of which play a role in this Part~\ref{Part:Liftings} of our manuscript.
We also provide the composition rule of deterministic local transformations, and their decomposition into pure input and output maps.

\subsection{Local transformations as generalized stochastic matrices}
Consider a scenario where A and B have only one input $X=Y=1$.
Then, any local transformation $\Lmap{}: \Pspace{A} \to \Pspace{B}$ has a particularly simple form
\begin{equation}
  \forall a,b, \quad \Lambda_{b,a} \ge 0\;, \hspace{2cm} \forall a, \quad \sum_b \Lambda_{b,a} = 1\;,
\end{equation}
which corresponds to a {\em column-stochastic matrix}, not necessarily square.
In such simple scenarios, deterministic local transformations are matrices with a single coefficient equal to $1$ in each column.
We look at the cases where such a transformation can be reversed.
For that, we need a left inverse element $\Lmap{}^\text{-1L}$ such that
\begin{equation}
  \forall \vecP{A}, \qquad \vecP{A} = \Lmap{}^\text{-1L} ~ \Lmap{} ~ \vecP{A}\;,
\end{equation}
which implies that $\Lmap{}^\text{-1L} ~ \Lmap{} = \Mid$ and thus that $\Lmap{}$ is {\em left-invertible}.\
This is possible only if $\Lmap{}$ has at most one nonzero element in each row; otherwise, components of $\vecP{A}$ are mixed in a nonreversible manner.
This also implies that the number of outputs cannot decrease: $B_1 \ge A_1$.
We illustrate left-invertibility in Figure~\ref{Fig:LeftInvertible} by considering $\Lmap{}$ as the biadjacency matrix~\cite{Arumugam2016} of a bipartite edge-weighted graph.
The vertices on the right represent the output values $a=1,\ldots,A_1$, while the vertices on the left represent $b=1,\ldots,B_1$.
When the transformation is deterministic, notice that the graph represents a deterministic mapping $\alpha: \domain{A_1} \to \domain{B_1}$; and left-invertible deterministic transformations correspond to injective $\alpha$ that preserve distinctness.

The matrix $\Lmap{}$ is {\em right invertible} if there exists a right inverse element $\Lmap{}^\text{-1R}$ such that $\Lmap{} ~ \Lmap{}^\text{-1R} = \Mid$.
We easily check that right invertible $\Lmap{}$ have $\{0,1\}$ coefficients with at least one nonzero coefficient per row.
Those matrices have necessarily $A_1 \ge B_1$, and correspond to deterministic mappings with $\alpha$ surjective.
A graphical representation is given in Figure~\ref{Fig:RightInvertible}.

We see easily that left- {\em and} right-invertible transformations do not change the number of outputs $(A_1 = B_1)$ and correspond to permutation matrices.
For later use, we define {\em row-stochastic matrices} as matrices with nonnegative entries with each row summing to 1.

The present section only applies to local transformations with $X=Y=1$.
The next sections will study left and right-invertible deterministic local transformations with $X,Y>1$, but we need to complete a few definitions before that.

\subsection{Deterministic local transformations}
\label{Sec:LifitingDefinitions:DeterministicLocalTransformations}
The definition of deterministic local maps was only sketched in Section~\ref{Sec:Def:LocalMaps:Deterministic}.
We recall that $\Lmap{}: \Pspace{A} \to \Pspace{B}$ deterministic corresponds to a mapping of inputs $\xi: \domain{Y} \to \domain{X}$ and a sef of output mappings $\alpha_y: \domain{A_{\xi(y)}} \to \domain{B_y}$ such that
\begin{equation}
  \Lambda_{(b,y),(a,x)} = \begin{cases} 1 & \mbox{if }  x = \xi(y) \mbox{ and }  b = \alpha_y(a), \\ 0 & \mbox{otherwise.} \end{cases}
\end{equation}

\subsubsection{Graphical representation of deterministic local maps}

The corresponding graphical representation is shown in Figure~\ref{Fig:GraphicalDeterministic}, and already used in Figure~\ref{Fig:InvarianceC}.
A first structural level corresponds to the input mapping, which goes from the target input $y$ to the source input $x$.
Then, for each target input $y$, we have a mapping of the outputs corresponding to the source input $x=\xi(y)$.

\subsubsection{Deterministic local transformations as block matrices}
\label{Sec:DeterministicBlockMatrices}
Consider the matrix representation of a deterministic local transformations $\Lmap{}: \Pspace{A} \to \Pspace{B}$.
We split the source and target vector spaces as follows
\begin{equation}
  \Pspace{A} = \RR^{A_1} \oplus \ldots \oplus \RR^{A_X}, \qquad \Pspace{B} = \RR^{B_1} \oplus \ldots \oplus \RR^{B_Y}\;,
\end{equation}
and decompose accordingly $\Lmap{}$ as a block matrix
\begin{equation}
  \Lmap{} =
  \left(
    \begin{array}{l|l|l|l}
      H_{11} & H_{12} & \ldots & H_{1X} \\
      \hline
      H_{21} & H_{22} &        & H_{2X} \\
      \hline
      \vdots &       &        &       \\
      \hline
      H_{Y1} & H_{Y2} &        & H_{YX}
    \end{array}
  \right)\;,
  \qquad
  H_{yx} \in \RR^{B_y \times A_x}\;.
\end{equation}

Deterministic local maps impose a specific structure on the blocks $H_{yx}$.
As the deterministic $\Lmap{}$ has nonzero coefficients for input pairs $(x,y)$ such that $x = \xi(y)$, that prescribes that the nonzero blocks are $H_{y,\xi(y)}$.
When viewed as a block matrix, $\Lmap{}$ has the sparsity pattern of a row-stochastic matrix: a single nonzero element in the column $x$ for each $y$.
Now, for given $y$, the block $H_{y,\xi(y)}$ is fixed by the map $b = \alpha_y(a)$; thus $H_{y,\xi(y)}$ is a column-stochastic matrix: a single nonzero element in the row $b$ for each column $a$.

\marginnote{
  Consider the deterministic map from $\pst{A} = (3,2)$ to $\pst{B}=(2,2,2)$ given by $\xi = (1,1,2)$, $\alpha_1 = (1, 2, 2)$, $\alpha_2 = (1,1,2)$ and $\alpha_3 = (2,1)$, where we described the maps $\xi$ and $\alpha_y$ by their images.
  Then $\Lmap{}$
  \[
    \left (
      \begin{array}{@{}ccc|cc@{}}
        1 & 0 & 0   & 0 & 0 \\
        0 & 1 & 1   & 0 & 0 \\
        \hline
        1 & 1 & 0   & 0 & 0 \\
        0 & 0 & 1   & 0 & 0 \\
        \hline
        0 & 0 & 0   & 0 & 1 \\
        0 & 0 & 0   & 1 & 0 \\
      \end{array}
    \right )
  \]
  has a row-stochastic block sparsity pattern
  \[
    \begin{pmatrix}
      1 & 0 \\
      1 & 0 \\
      0 & 1
    \end{pmatrix}
  \]
  while the individual nonzero blocks are column-stochastic.
  This map factorizes as a pure input map
  \[
    \begin{pmatrix}
      \Mid & 0 \\
      \Mid & 0 \\
      0 & \Mid
    \end{pmatrix}
  \]
  followed by a pure output map with diagonal blocks
  $H_{11} = \begin{pmatrix} 1 & 0 & 0 \\ 0 & 1 & 1 \end{pmatrix}\;$,
  $H_{22} = \begin{pmatrix} 1 & 1 & 0 \\ 0 & 0 & 1 \end{pmatrix}\;$,
  and
  $H_{33} = \begin{pmatrix} 0 & 1 \\ 1 & 0 \end{pmatrix}$.
}
\subsubsection{Composition rules}
\label{Sec:LiftingDefinitions:Composition}
Consider the deterministic local maps
\begin{equation}
  \Lmap{AB}: \Pspace{A} \to \Pspace{B}, \qquad \Lmap{BC}: \Pspace{B} \to \Pspace{C}, \qquad \Lmap{AC}: \Pspace{A} \to \Pspace{C}
\end{equation}
such that $\Lmap{AC} = \Lmap{BC} ~ \Lmap{AB}$, where we used the labels A, B, C to describe transformations of the same device.
We describe those maps by the (input, output) mappings
\begin{equation}
  s = (\xi, \overline{\alpha}), \qquad t = (\psi, \overline{\beta}), \qquad u = (\zeta, \overline{\gamma})
\end{equation}
in that order, such that $u = t ~ s$.

As we have $\xi: \domain{Y} \to \domain{X}$ and $\psi: \domain{Z} \to \domain{Y}$, and should have $\zeta: \domain{Z} \to \domain{X}$, we easily deduce
\begin{equation}
  \label{Eq:CompositionXi}
  x = \zeta(z) = \xi(\psi(z)) \quad \Rightarrow \quad \zeta = \xi \circ \psi \;.
\end{equation}
For the outputs, we have $b = \alpha_y(a)$ and $c = \beta_z(b)$, thus $\gamma_z: \domain{A_{\zeta(z)}} \to \domain{C_z}$ should be
\begin{equation}
  \label{Eq:CompositionAlpha}
  c = \gamma_z(a) = (\beta_z \circ \alpha_y)(a) = (\beta_z \circ \alpha_{\psi(z)})(a) \quad \Rightarrow \quad \gamma_z = \beta_z \circ \alpha_{\psi(z)}\;.
\end{equation}

\subsubsection{Decomposition into pure input and output maps}
Among the deterministic local maps, we single {\em pure input maps} $\Lmap{}^\mathrm{I}$, where all $\alpha_y$ are identity maps, and {\em pure output maps} $\Lmap{}^\mathrm{O}$, where $\xi$ is the identity.
\begin{proposition}
  \label{Prop:DecompositionDeterministicLocalMaps}
  Any deterministic local map $\Lmap{A}: \Pspace{A} \to \Pspace{C}$ can be factored as the composition of a pure input map $\Lmap{}^\mathrm{I}: \Pspace{A} \to \Pspace{B}$ followed by a pure output map $\Lmap{}^\mathrm{O}: \Pspace{B} \to \Pspace{C}$:
  \begin{equation}
    \Lmap{A} = \Lmap{}^\mathrm{O} ~ \Lmap{}^\mathrm{I}\;.
  \end{equation}
\end{proposition}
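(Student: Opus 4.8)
The plan is to build the intermediate device $\pst{B}$ and the two factors explicitly from the data describing $\Lmap{A}$, and then confirm the factorization using the composition rules \eqref{Eq:CompositionXi} and \eqref{Eq:CompositionAlpha}. First I would write $\Lmap{A}\colon\Pspace{A}\to\Pspace{C}$ in the form of Section~\ref{Sec:LifitingDefinitions:DeterministicLocalTransformations}: an input mapping $\xi\colon\domain{Z}\to\domain{X}$ together with output mappings $\alpha_z\colon\domain{A_{\xi(z)}}\to\domain{C_z}$. Then take $\pst{B}=(B_1,\dots,B_Z)$ with $B_z\Def A_{\xi(z)}$, so that $\pst{B}$ keeps the input labels of $\pst{C}$ but carries the output cardinalities that $\pst{A}$ presents once its inputs have been relabelled through $\xi$.

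Next I would define $\Lmap{}^{\mathrm{I}}\colon\Pspace{A}\to\Pspace{B}$ as the pure input map with data $(\xi,\overline{\alpha^{\mathrm{I}}})$, where every $\alpha^{\mathrm{I}}_z$ is the identity on $\domain{A_{\xi(z)}}=\domain{B_z}$ — this is consistent precisely because of the choice of $\pst{B}$ — and define $\Lmap{}^{\mathrm{O}}\colon\Pspace{B}\to\Pspace{C}$ as the pure output map with data $(\psi,\overline{\beta})$, where $\psi$ is the identity on $\domain{Z}$ and $\beta_z\Def\alpha_z\colon\domain{B_z}\to\domain{C_z}$. By construction $\Lmap{}^{\mathrm{I}}$ is a pure input map and $\Lmap{}^{\mathrm{O}}$ a pure output map in the sense of the definition preceding the statement.

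It then remains to verify $\Lmap{}^{\mathrm{O}}\,\Lmap{}^{\mathrm{I}}=\Lmap{A}$. Setting $s=\Lmap{}^{\mathrm{I}}$ followed by $t=\Lmap{}^{\mathrm{O}}$ in the composition of Section~\ref{Sec:LiftingDefinitions:Composition}, rule \eqref{Eq:CompositionXi} gives the composite input mapping $\xi\circ\psi=\xi\circ\id=\xi$, and rule \eqref{Eq:CompositionAlpha} gives the composite output mappings $\beta_z\circ\alpha^{\mathrm{I}}_{\psi(z)}=\alpha_z\circ\id=\alpha_z$; hence the composite is described by $(\xi,\overline{\alpha})$, which is exactly $\Lmap{A}$. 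Alternatively one can bypass the composition rules and multiply the matrix representations directly: $\sum_{b,y}\Lambda^{\mathrm{O}}_{(c,z),(b,y)}\,\Lambda^{\mathrm{I}}_{(b,y),(a,x)}$ is nonzero (and then equal to $1$) iff $y=z$, $b=a$, $x=\xi(z)$ and $c=\alpha_z(a)$, which recovers the defining entries of $\Lmap{A}$.

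I do not expect a genuine obstacle here; the only point needing care is to make the intermediate structure $\pst{B}$ well defined, so that the identity output maps of $\Lmap{}^{\mathrm{I}}$ and the input‑preserving shape of $\Lmap{}^{\mathrm{O}}$ really live on matching spaces. It is worth remarking that the factorization is far from unique — inserting a relabelling (a map that is both left‑ and right‑invertible) between the two factors and absorbing its inverse on one side yields another valid decomposition — and this slack is exactly what later allows the pure input and pure output parts to be normalized in the inventory of liftings.
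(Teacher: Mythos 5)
Your construction is exactly the paper's proof: keep the input mapping $\xi$ with identity output mappings for the pure input factor, and use an identity input mapping with the original $\alpha_z$ for the pure output factor. You merely make explicit the intermediate cardinalities $B_z = A_{\xi(z)}$ and the verification via the composition rules, which the paper leaves implicit, so the argument is correct and essentially identical.
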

\begin{proof}
  We construct the decomposition using the pure input map $\Lmap{}^\mathrm{I}$ that has the same input mapping $\xi$ as $\Lmap{A}$, but all output mappings equal to the identity.
  The output mappings of $\Lmap{A}$ are written $\alpha_z$ and are indexed by the target input $z$.
  Thus, $\Lmap{}^\mathrm{O}$ is given by an identity input mapping and all the output mappings of $\Lmap{A}$.
\end{proof}

\clearpage

\section{Defining liftings}
\label{Sec:DefiningLiftings}

We now motivate our definition of liftings, through the equivalency of expressions or behaviors, when using local transformations to maximize the average payoff. 

\subsection{Maximal violations of a Bell inequality}
The motivation of our study comes from the maximal violation of a Bell inequality that can be obtained for given devices of known behavior.
\begin{definition}
  \label{Def:MaximalAveragePayoff}
  The {\em maximal average payoff} of the behavior $\vecP{CD} \in \Pspace{C} \otimes \Pspace{D}$ under the Bell expression $\expr{AB} \in \left( \Pspace{A} \otimes \Pspace{B} \right)^*$ is written $\left< \expr{AB} \right>_{\vecP{CD}}^\star$ where
  \begin{equation}
    \left< \expr{AB} \right>_{\vecP{CD}}^\star = \max_{\Lmap{C}, \Lmap{D}} \expr{AB} \left( (\Lmap{C} \otimes \Lmap{D}) ~ \vecP{CD} \right ) \;,
\end{equation}
\marginnote{By convexity, it is sufficient to consider deterministic local maps instead of generic local transformations.}
and the maximization is done over all deterministic local maps $\Lmap{C}: \Pspace{C} \to \Pspace{A}$ and $\Lmap{D}: \Pspace{D} \to \Pspace{B}$.
\end{definition}

Note that under that definition, all relabelings of CHSH provide the same test for a given behavior.
Or, given a Bell inequality, all relabelings of the PR box provide the same violation.
The definition also allows testing Bell inequalities with behaviors that do not have the same cardinality, thus removing an element of arbitrariness in the process.

\subsection{Equivalent behaviors}
\label{Sec:EquivalentBehaviors}
We now precise the notion of equivalency of behaviors.
\begin{definition}
  Two behaviors $\vecP{CD} \in \Pspace{C} \otimes \Pspace{D}$ and $\vecP{EF} \in \Pspace{E} \otimes \Pspace{F}$ are {\em equivalent} if, for any Bell expression $\expr{AB} \in (\Pspace{A} \otimes \Pspace{B})^*$ of arbitrary cardinality, we have
  \begin{equation}
    \left< \expr{AB} \right>_{\vecP{CD}}^\star = \left< \expr{AB} \right>_{\vecP{EF}}^\star\;.
  \end{equation}
\end{definition}
For example, if $\vecP{CD}$ corresponds to a relabeling of the inputs and outputs of $\vecP{EF}$, then both behaviors are equivalent.
Inputs and outputs relabelings are reversible transformations.
More generally, we can speak of interconvertible transformations.
\begin{definition}
  Two behaviors $\vecP{CD} \in \Pspace{C} \otimes \Pspace{D}$ and $\vecP{EF} \in \Pspace{E} \otimes \Pspace{F}$ are interconvertible if there exists deterministic maps $\Lmap{C}$, $\Lmap{D}$, $\Lmap{E}$ and $\Lmap{F}$ (types clear from the context) such that
  \begin{equation}
    \vecP{EF} = (\Lmap{C} \otimes \Lmap{D}) ~ \vecP{CD}, \qquad \vecP{CD} = (\Lmap{E} \otimes \Lmap{F}) ~ \vecP{EF}\;.
  \end{equation}
\end{definition}
\marginnote{Forgetting about D and F (see below) and for $\pst{C}=(2,2)$, $\pst{E}=(3,2)$, a pair of interconvertible maps is given by
  \[
    \Lmap{C} = \left(
      \begin{array}{@{}ll|ll@{}}
        1 & 0 & 0 & 0 \\
        0 & 1 & 0 & 0 \\
        0 & 0 & 0 & 0 \\
        \hline
        0 & 0 & 1 & 0 \\
        0 & 0 & 0 & 1
      \end{array}
    \right)
  \]
  and
  \[
    \Lmap{E} = \left(
      \begin{array}{@{}lll|ll@{}}
        1 & 0 & 0 & 0 & 0 \\
        0 & 1 & 1 & 0 & 0 \\
        \hline
        0 & 0 & 0 & 1 & 0 \\
        0 & 0 & 0 & 0 & 1
      \end{array}
    \right)\;,
  \]
  where $\Lmap{C}$ is a fine graining of outputs, while $\Lmap{E}$ is a coarse graining of outputs.
}[-2cm]

The latter definition implies the former.
\begin{proposition}
  \label{Prop:InterconvertibilityBehaviors}
  Interconvertibility implies equivalency.
\end{proposition}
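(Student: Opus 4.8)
The plan is to show that if $\vecP{CD}$ and $\vecP{EF}$ are interconvertible via deterministic local maps $\Lmap{C}, \Lmap{D}, \Lmap{E}, \Lmap{F}$, then for every Bell expression $\expr{AB}$ the maximal average payoffs coincide. The key observation is that the maximization in Definition~\ref{Def:MaximalAveragePayoff} ranges over \emph{all} deterministic local maps, and that deterministic local maps are closed under composition (Proposition~\ref{Prop:DetClosed}). So the two optimization problems differ only by a fixed change of variables in the feasible set, which does not change the optimal value.

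First I would fix an arbitrary Bell expression $\expr{AB} \in (\Pspace{A}\otimes\Pspace{B})^*$ and write out the definition:
\begin{equation}
  \left< \expr{AB} \right>_{\vecP{EF}}^\star = \max_{\Lmap{G}, \Lmap{H}} \expr{AB}\left( (\Lmap{G} \otimes \Lmap{H}) ~ \vecP{EF} \right)\;,
\end{equation}
where $\Lmap{G}: \Pspace{E} \to \Pspace{A}$ and $\Lmap{H}: \Pspace{F} \to \Pspace{B}$ range over deterministic local maps. Now substitute $\vecP{EF} = (\Lmap{C} \otimes \Lmap{D}) ~ \vecP{CD}$ and use the tensor-product composition rule:
\begin{equation}
  (\Lmap{G} \otimes \Lmap{H})(\Lmap{C} \otimes \Lmap{D}) ~ \vecP{CD} = \left( (\Lmap{G}\Lmap{C}) \otimes (\Lmap{H}\Lmap{D}) \right) ~ \vecP{CD}\;.
\end{equation}
By Proposition~\ref{Prop:DetClosed}, $\Lmap{G}\Lmap{C}: \Pspace{C} \to \Pspace{A}$ and $\Lmap{H}\Lmap{D}: \Pspace{D} \to \Pspace{B}$ are again deterministic local maps; hence every term appearing in the maximization defining $\left< \expr{AB} \right>_{\vecP{EF}}^\star$ also appears among the terms maximized in $\left< \expr{AB} \right>_{\vecP{CD}}^\star$. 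This yields the inequality $\left< \expr{AB} \right>_{\vecP{EF}}^\star \le \left< \expr{AB} \right>_{\vecP{CD}}^\star$.

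The reverse inequality is obtained symmetrically, substituting $\vecP{CD} = (\Lmap{E} \otimes \Lmap{F}) ~ \vecP{EF}$ into the definition of $\left< \expr{AB} \right>_{\vecP{CD}}^\star$ and again invoking closure under composition. Combining the two inequalities gives equality for every $\expr{AB}$, which is exactly the definition of equivalency. I do not expect any real obstacle here: the only subtlety is making sure the composed maps genuinely lie in the admissible class (handled by Proposition~\ref{Prop:DetClosed}) and that the tensor factors compose independently, which is immediate from $(\Lmap{G}\otimes\Lmap{H})(\Lmap{C}\otimes\Lmap{D}) = (\Lmap{G}\Lmap{C})\otimes(\Lmap{H}\Lmap{D})$. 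Note that interconvertibility is used only in its one-directional form on each side, so the two reversibility relations are exactly what is needed — no more, no less — and the proof makes transparent why mere \emph{existence} of a one-way transformation would not suffice to conclude equivalency.
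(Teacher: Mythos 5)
Your proposal is correct and takes essentially the same route as the paper: the paper's proof also establishes the one-sided inequality $\left< \expr{AB} \right>_{\vecP{EF}}^\star \le \left< \expr{AB} \right>_{\vecP{CD}}^\star$ from Definition~\ref{Def:MaximalAveragePayoff} together with closure of deterministic maps under composition (Proposition~\ref{Prop:DetClosed}), and then applies it in both directions. You merely spell out the composition step $(\Lmap{G}\otimes\Lmap{H})(\Lmap{C}\otimes\Lmap{D}) = (\Lmap{G}\Lmap{C})\otimes(\Lmap{H}\Lmap{D})$ explicitly, which the paper leaves implicit.
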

\begin{proof}
  For that, we apply the statement below in two directions. 
  Let $\expr{AB}$ be an arbitrary Bell expression and $\vecP{EF} = (\Lmap{C} \otimes \Lmap{D}) ~ \vecP{CD}$.
  Then
  \begin{equation}
    \left< \expr{AB} \right>_{\vecP{EF}}^\star \le \left< \expr{AB} \right>_{\vecP{CD}}^\star \;.
  \end{equation}
  This is easily proven from Definition~\ref{Def:MaximalAveragePayoff} and the composition of deterministic maps (Proposition~\ref{Prop:DetClosed}).
\end{proof}
We single out {\em behavior lifitings} as deterministic transformations $\Lmap{C} \otimes \Lmap{D}$ that can be applied to any $\vecP{CD}$, such that the resulting $\vecP{EF}$ can be converted back.
Due to the tensor structure, it is sufficient to consider first transformations that apply only to the first device.
We are thus looking at the deterministic maps $\Lmap{C}: \Pspace{C} \to \Pspace{E}$ with a corresponding $\Lmap{E}: \Pspace{E} \to \Pspace{C}$ such that
\begin{equation}
  ((\Lmap{E} \circ \Lmap{C}) \otimes \Mid) ~ \vecP{CD} = \vecP{CD}\;.
\end{equation}
\marginnote{The technicality concern degenerate scenarios where the device C cannot signal, and has at least two inputs, say $z=1,2$ that have only one output: $A_1 = A_2 = 1$.
  We do not get any information from the device C when using $z=1,2$, and correlations of other parties are not affected due to nonsignaling.
  Then, local transformations that affect only those inputs cannot then be distinguished from the identity.
}
Modulo a small technicality (see margin note), this implies that $\Lmap{C}$ is left invertible ($\Lmap{E} \circ \Lmap{C} = \Mid$); note that in general, the left inverse $\Lmap{E}$ is not unique.
To summarize, we find behavior liftings in the set of left invertible deterministic local maps, which are studied in Section~\ref{Sec:LiftingBehaviors}.

\subsection{Equivalent Bell expressions}
\label{Sec:EquivalentBellExpressions}
We now adapt the concepts of equivalence and interconvertibility to Bell expressions.
\begin{definition}
  \label{Def:EquivalentBellExpressions}
  Two Bell expressions $\expr{AB} \in (\Pspace{A} \otimes \Pspace{B})^*$ and $\expr{CD} \in (\Pspace{C} \otimes \Pspace{D})^*$ are {\em equivalent} if, for any behavior $\vecP{EF} \in \Pspace{E} \otimes \Pspace{F}$ of arbitrary cardinality, we have
  \begin{equation}
    \left< \expr{AB} \right>_{\vecP{EF}}^\star = \left< \expr{CD} \right>_{\vecP{EF}}^\star\;.
  \end{equation}
\end{definition}
We also derive a notion of interconvertibility of Bell expressions.
\begin{definition}
  \label{Def:InterconvertibleBellExpressions}
  The Bell expressions $\expr{AB}$ and $\expr{CD}$ are interconvertible if there exists deterministic maps $\Lmap{A}$, $\Lmap{B}$, $\Lmap{C}$ and $\Lmap{D}$, types clear from context, such that
  \begin{equation}
    \expr{CD} = \expr{AB} ~ (\Lmap{C} \otimes \Lmap{D}), \qquad \expr{AB} = \expr{CD} ~ (\Lmap{A} \otimes \Lmap{B}) \;.
  \end{equation}
\end{definition}
The equivalent of Proposition~\ref{Prop:InterconvertibilityBehaviors} is given below.
\begin{proposition}
  \label{Prop:InterconvertibilityExpressions}
  Interconvertibility of Bell expressions implies equivalency.
\end{proposition}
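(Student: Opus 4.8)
The plan is to mirror exactly the structure of the proof of Proposition~\ref{Prop:InterconvertibilityBehaviors}, exploiting the duality between the action of local transformations on behaviors and on Bell expressions. First I would establish the one-directional comparison: if $\expr{CD} = \expr{AB}(\Lmap{C}\otimes\Lmap{D})$ for deterministic local maps $\Lmap{C}\colon\Pspace{C}\to\Pspace{A}$ and $\Lmap{D}\colon\Pspace{D}\to\Pspace{B}$, then for every behavior $\vecP{EF}$ one has
\begin{equation}
  \left< \expr{CD} \right>_{\vecP{EF}}^\star \le \left< \expr{AB} \right>_{\vecP{EF}}^\star .
\end{equation}
This follows from Definition~\ref{Def:MaximalAveragePayoff}: expanding $\left< \expr{CD} \right>_{\vecP{EF}}^\star = \max_{\Lmap{E},\Lmap{F}} \expr{CD}\big((\Lmap{E}\otimes\Lmap{F})\vecP{EF}\big) = \max_{\Lmap{E},\Lmap{F}} \expr{AB}\big((\Lmap{C}\Lmap{E}\otimes\Lmap{D}\Lmap{F})\vecP{EF}\big)$, and since $\Lmap{C}\Lmap{E}$ and $\Lmap{D}\Lmap{F}$ are again deterministic local maps of the appropriate types (by Proposition~\ref{Prop:DetClosed}, the set of deterministic local maps is closed under composition), this last maximum ranges over a \emph{subset} of the pairs that define $\left< \expr{AB} \right>_{\vecP{EF}}^\star$, hence is no larger.

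The second step is simply to apply this inequality in both directions. Interconvertibility (Definition~\ref{Def:InterconvertibleBellExpressions}) gives us both $\expr{CD} = \expr{AB}(\Lmap{C}\otimes\Lmap{D})$ and $\expr{AB} = \expr{CD}(\Lmap{A}\otimes\Lmap{B})$, so the argument above yields $\left< \expr{CD} \right>_{\vecP{EF}}^\star \le \left< \expr{AB} \right>_{\vecP{EF}}^\star$ from the first relation and $\left< \expr{AB} \right>_{\vecP{EF}}^\star \le \left< \expr{CD} \right>_{\vecP{EF}}^\star$ from the second. Combining, $\left< \expr{AB} \right>_{\vecP{EF}}^\star = \left< \expr{CD} \right>_{\vecP{EF}}^\star$ for every $\vecP{EF}$ of arbitrary cardinality, which is precisely the definition of equivalence (Definition~\ref{Def:EquivalentBellExpressions}).

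There is essentially no hard part here — the proposition is the dual twin of Proposition~\ref{Prop:InterconvertibilityBehaviors}, and the only things one must be slightly careful about are the directions of the maps (for expressions the transformations go $\Pspace{C}\to\Pspace{A}$, opposite to the behavior case, as flagged in the margin note preceding Proposition~\ref{Prop:BellInequalityTransformation}) and checking that associativity of composition lets one absorb $(\Lmap{C}\otimes\Lmap{D})$ into the maximization. The closure of deterministic local maps under composition, already recorded in Proposition~\ref{Prop:DetClosed}, is exactly the ingredient that makes the restricted maximization legitimate. Everything else is bookkeeping.
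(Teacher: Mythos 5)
Your proposal is correct and follows essentially the same route as the paper: the one-directional bound $\maxpayoff{\expr{CD}}{\vecP{EF}} \le \maxpayoff{\expr{AB}}{\vecP{EF}}$ obtained by absorbing $(\Lmap{C}\otimes\Lmap{D})$ into the maximization via closure under composition (Proposition~\ref{Prop:DetClosed}), applied in both directions thanks to interconvertibility. The paper's own proof merely states this inequality and refers back to Proposition~\ref{Prop:InterconvertibilityBehaviors}; you have just spelled out the same argument in full detail.
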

\begin{proof}
  Let $\vecP{EF}$ be an arbitrary behavior and $\expr{CD} = \expr{AB} ~ (\Lmap{C} \otimes \Lmap{D})$ .
  As in Proposition~\ref{Prop:InterconvertibilityBehaviors}, we have
  \begin{equation}
    \left< \expr{CD} \right>_{\vecP{EF}}^\star \le \left< \expr{AB} \right>_{\vecP{EF}}^\star \;.
  \end{equation}
\end{proof}
Thus, if $\expr{AB}$ and $\expr{CD}$ are interconvertible, they are equivalent.
Now, {\em expression liftings} are deterministic transformations $\Lmap{C} \otimes \Lmap{D}$ that apply to any $\expr{AB}$ and create an expression $\expr{CD}$ interconvertible with $\expr{AB}$.
Considering only the first device, we are looking at deterministic maps $\Lmap{C}: \Pspace{C} \to \Pspace{A}$ and $\Lmap{A}: \Pspace{A} \to \Pspace{C}$ such that
\begin{equation}
  \expr{AB} ~ \left[(\Lmap{C} \circ \Lmap{A}) \otimes \Mid\right] = \expr{AB}\;,
\end{equation}
which, modulo the same technicality as before, implies that $\Lmap{C} \circ \Lmap{A} = \Mid$ and thus that $\Lmap{C}$ is right invertible.
In general, the right inverse $\Lmap{A}$ is not unique, and we will find expression liftings among right invertible deterministic local transformations, which will be studied in Section~\ref{Sec:LiftingExpressions}.

\subsection{Liftings, reorderings, relabelings}

Some deterministic local maps have a left {\em and} right inverse.
Among those, we single out:
\marginnote{An example of a pair of reorderings is given by
  \[
    \Lmap{A} = \left(
      \begin{array}{@{}ll|lll@{}}
        0 & 0 & 1 & 0 & 0 \\
        0 & 0 & 0 & 1 & 0 \\
        0 & 0 & 0 & 0 & 1 \\
        \hline
        1 & 0 & 0 & 0 & 0 \\
        0 & 1 & 0 & 0 & 0
      \end{array}
    \right)
  \]
  and
  \[
    \Lmap{C} = \left(
      \begin{array}{@{}lll|ll@{}}
        0 & 0 & 0 & 1 & 0 \\
        0 & 0 & 0 & 0 & 1 \\
        \hline
        1 & 0 & 0 & 0 & 0 \\
        0 & 1 & 0 & 0 & 0 \\
        0 & 0 & 1 & 0 & 0
      \end{array}
    \right)\;.
  \]
}[1cm]
\begin{itemize}
\item Permutations of inputs and outputs that do not modify the cardinality $\pst{A}$.
  They are maps of the form $\Lmap{A}: \Pspace{A} \to \Pspace{A}$ and correspond to the subgroup of invertible elements in the semigroup/monoid of local deterministic maps $\Det{A}$ (see Proposition~\ref{Prop:DetClosed}).
  We name those elements {\em relabelings}; we will study them in detail in another work~\cite{RossetInPrepRelabelings}.
\item Permutations of inputs and outputs that modify the cardinality $\pst{A}$ (up to a permutation).
  They are maps of the form $\Lmap{A}: \Pspace{A} \to \Pspace{C}$ such that a unique $\Lmap{C}: \Pspace{C} \to \Pspace{A}$ has the property that
  \begin{equation}
    \Lmap{C} \circ \Lmap{A} = \Mid_\Pspace{A}, \qquad \Lmap{A} \circ \Lmap{C} = \Mid_\Pspace{C}\;.
  \end{equation}
  We name those elements {\em reorderings}.
\end{itemize}

We now consider left or right invertible maps that are neither relabelings or reorderings.
The left invertible maps are {\em behavior liftings}, while the right invertible maps are {\em expression liftings}.
We study them in detail in Section~\ref{Sec:LiftingBehaviors} and Section~\ref{Sec:LiftingExpressions}.

\clearpage

\section{Behavior liftings}
\label{Sec:LiftingBehaviors}

Recall that in Section~\ref{Sec:EquivalentBehaviors}, we defined behaviors liftings as the deterministic transformation $\Lmap{A}: \Pspace{A} \to \Pspace{C}$ that can be applied to any $\vecP{AD}$, such that the resulting $\vecP{CD} = (\Lmap{C} \otimes \Mid) ~ \vecP{AD}$ can be converted back; this implied that $\Lmap{A}$ is left invertible, i.e. there exists $\Lmap{C}$ such that
\begin{equation}
  \Lmap{C} \circ \Lmap{A} = \Mid\;.
\end{equation}
We now characterize such left invertible transforms.
\subsection{Left invertible local transformations}
The main idea is a follows.
For the map $\Lmap{A}$ to be left invertible, there must be at least one target input $z$ that corresponds to each source input $x$, with the output processing reversible.
This implies that the number of inputs $Z \ge X$.
\begin{proposition}
  \label{Prop:LiftingBehaviors}
  Any left invertible map $\Lmap{A}: \Pspace{A} \to \Pspace{C}$ can be decomposed into the composition of a map $\Lmap{}^\mathrm{L}: \Pspace{A} \to \Pspace{B}$ and a $\Lmap{}^\mathrm{R}: \Pspace{B} \to \Pspace{C}$:
  \begin{equation}
    \Lmap{A} = \Lmap{}^\mathrm{R} \circ \Lmap{}^\mathrm{L}
  \end{equation}
  such that
  \begin{itemize}
    \item The map $\Lmap{}^\mathrm{L}$ has the elongated block diagonal form (following Section~\ref{Sec:DeterministicBlockMatrices})
      \begin{equation}
        \Lmap{}^\mathrm{L} =
        \left(
          \begin{array}{c|c|c}
            L_{11} & 0 & \ldots \\ \hline
            L_{21} & 0 &   \\ \hline
            \vdots & 0 &   \\ \hline
            0 & L_{12} &  \\ \hline
            0 & L_{22} &   \\ \hline
            0 & \vdots &
          \end{array} \right)\;,
      \end{equation}
      such that each column contains one $\{L_{1x}\}$ or several $\{L_{1x}, L_{2x}, \ldots\}$ column-stochastic deterministic matrices, and the first block of each column $L_{1x}$ is left-invertible.
    \item The map $\Lmap{}^\mathrm{R}$ corresponds to a reordering/relabeling of inputs that does not transform outputs.
    \end{itemize}
\end{proposition}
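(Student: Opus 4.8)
The plan is to exploit the structure of deterministic local maps as block matrices established in Section~\ref{Sec:DeterministicBlockMatrices}, together with the left-invertibility criterion discussed above. First I would recall that a deterministic local map $\Lmap{A}:\Pspace{A}\to\Pspace{C}$ is described by a pair $(\xi,\overline{\alpha})$ with input map $\xi:\domain{Z}\to\domain{X}$ and output maps $\alpha_z:\domain{A_{\xi(z)}}\to\domain{C_z}$. The key observation is that $\Lmap{A}$ is left-invertible iff its matrix has at most one nonzero entry per row (no mixing of source coefficients), which forces: (i) for each source input $x$ there is at least one target input $z$ with $\xi(z)=x$ (so $\xi$ is surjective, hence $Z\ge X$), and (ii) for at least one such $z$ the output map $\alpha_z$ is injective. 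I would pick, for each $x\in\domain{X}$, a distinguished preimage $z(x)\in\xi^{-1}(x)$ on which $\alpha_{z(x)}$ is injective; these preimages are distinct since they sit over distinct $x$'s.

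The construction then proceeds as follows. Define the intermediate cardinality $\pst{B}$ by listing, for each $x$ in order, the blocks $\{C_z : \xi(z)=x\}$ — that is, $\Pspace{B}$ is just a reordered copy of $\Pspace{C}$ in which all target inputs sharing a given source input $x$ are grouped into a contiguous band. Let $\Lmap{}^\mathrm{R}:\Pspace{B}\to\Pspace{C}$ be the permutation-of-inputs map realizing this regrouping (it transforms no outputs, as required; it is a reordering/relabeling and hence invertible). Set $\Lmap{}^\mathrm{L}=(\Lmap{}^\mathrm{R})^{-1}\circ\Lmap{A}:\Pspace{A}\to\Pspace{B}$; this is again a deterministic local map, and its input map $\xi^\mathrm{L}$ is $\xi$ followed by the regrouping permutation, so by construction $\xi^\mathrm{L}$ is monotone: it sends the first band of $\Pspace{B}$-inputs to $x=1$, the second to $x=2$, and so on. Writing $\Lmap{}^\mathrm{L}$ as a block matrix with column index $x\in\domain{X}$ and row index running over the $\pst{B}$-inputs, each column is supported exactly on its band, giving the claimed elongated block-diagonal form with blocks $L_{1x},L_{2x},\ldots$ equal to the column-stochastic deterministic matrices $\restr{\alpha_z}{\xi(z)=x}$. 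Finally I would verify that, ordering each band so that the distinguished input $z(x)$ comes first, the leading block $L_{1x}$ is the matrix of the injective map $\alpha_{z(x)}$, hence left-invertible — which is precisely the last assertion of the proposition, and which guarantees $\Lmap{}^\mathrm{L}$ (and therefore $\Lmap{A}$) is left-invertible, consistent with the hypothesis.

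The main obstacle I anticipate is the bookkeeping around the "small technicality" flagged earlier in the margin note: degenerate source inputs $x$ with $A_x=1$, for which every output map is trivially injective and the left inverse is highly non-unique. I would handle this by noting that such inputs contribute $1\times 1$ blocks $L_{1x}=(1)$, which are vacuously left-invertible, so the stated form still holds; one just has freedom in which band-member to call "first." A secondary, purely notational obstacle is making precise the ordering conventions on $\Pspace{B}$ so that the block partition of $\Lmap{}^\mathrm{L}$ genuinely matches the picture — but once $\pst{B}$ is defined as the $\xi$-sorted reindexing of $\pst{C}$, this is forced, and the verification that $\Lmap{}^\mathrm{R}$ composed with $\Lmap{}^\mathrm{L}$ recovers $\Lmap{A}$ is immediate from the composition rule~\eqref{Eq:CompositionXi}–\eqref{Eq:CompositionAlpha}.
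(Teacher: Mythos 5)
Your overall construction---grouping the target inputs into bands according to their source input $\xi(z)=x$, peeling off that regrouping as the pure input reordering $\Lmap{}^\mathrm{R}$, and placing a distinguished injective output map first in each band so that $L_{1x}$ is left-invertible---is essentially the paper's argument (``do the cloning of inputs and their output processing first, then reorder the clones into place''). However, the step you call the key observation is wrong as stated: a deterministic local map $\Lmap{A}:\Pspace{A}\to\Pspace{C}$ is \emph{not} left-invertible iff its matrix has at most one nonzero entry per row. That criterion is only the right one in the single-input warm-up ($X=Y=1$). In general a left-invertible map may clone an input and coarse-grain the outputs of the extra clones: take $\pst{A}=(2)$, $\pst{C}=(2,1)$, $\xi(1)=\xi(2)=1$, $\alpha_1=\id$, $\alpha_2$ constant. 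This map is left invertible (invert through the faithful clone $z=1$), yet the row belonging to the second clone has two nonzero entries. Indeed the proposition you are proving explicitly allows blocks $L_{ix}$ with $i>1$ that are not left-invertible, so your ``iff'' would contradict the statement itself; moreover ``at most one nonzero per row'' does not by itself force $\xi$ to be surjective (a zero column is compatible with it), so neither (i) nor (ii) genuinely follows from the premise you invoke.

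The gap is local and easily repaired by deriving (i) and (ii) the way the paper does, from the composition rules. Writing $\Lmap{A}$ as $s=(\xi,\overline{\alpha})$, its left inverse as $t=(\zeta,\overline{\gamma})$, and imposing $t\circ s=e$, Eq.~\eqref{Eq:CompositionXi} gives $\xi\circ\zeta=\id$, hence $\xi$ is surjective, and Eq.~\eqref{Eq:CompositionAlpha} gives $\gamma_x\circ\alpha_{\zeta(x)}=\id$ for every $x$, hence $\alpha_{\zeta(x)}$ is injective---which supplies exactly one distinguished preimage $z(x)=\zeta(x)$ with injective output map per source input, i.e.\ what your banding construction needs. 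With that substitution, the remainder of your proof (the definition of $\pst{B}$ as the $\xi$-sorted copy of $\pst{C}$, $\Lmap{}^\mathrm{R}$ as an input reordering that leaves outputs untouched, and $\Lmap{}^\mathrm{L}=(\Lmap{}^\mathrm{R})^{-1}\circ\Lmap{A}$ having the elongated block-diagonal form with left-invertible leading blocks $L_{1x}$) is sound and coincides with the paper's proof; your handling of the degenerate inputs with $A_x=1$ is harmless, since that margin-note technicality concerns the passage from behavior liftings to left invertibility, not the decomposition itself.
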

\begin{proof}
We study left invertible maps $\Lmap{A}: \Pspace{A} \to \Pspace{C}$, which are characterized by an element $s$ of the form
\begin{equation}
  s = (\xi, \overline{\alpha}), \qquad \xi: z \mapsto x, \quad \alpha_z: a \mapsto c\;.
\end{equation}
along with their left inverses $\Lmap{C}: \Pspace{C} \to \Pspace{A}$, characterized by an element $t$ of the form
\begin{equation}
  t = (\zeta, \overline{\gamma}), \qquad \zeta: x \mapsto z, \quad \gamma_x: c \mapsto a \;.
\end{equation}

\marginnote{Remember that $g \circ f = \id$ implies that $f$ is injective and $g$ is surjective.}
We require $\Lmap{C} \circ \Lmap{A} = \Mid$, and thus $t \circ s = e$; for what follows, we remember the composition rules of Section~\ref{Sec:LiftingDefinitions:Composition}.

According to Eq.~\eqref{Eq:CompositionXi}, we have $\xi \circ \zeta = \id$.
Thus $\xi$ is surjective: All original inputs have to be present in $\Lmap{A} ~ \vecP{A}$.
The surjectivity of $\xi: \domain{Z} \to \domain{X}$ implies that $X \le Z$, so that the number of inputs cannot decrease.

According to Eq.~\eqref{Eq:CompositionAlpha}, we have $\gamma_x \circ \alpha_{\zeta(x)} = \id$ for all $x$; all $\gamma_x$ are surjective while those $\alpha_z$ with $z$ in the image of $\zeta$ are injective.
This implies that for each $x$, there is a $z$ in the preimage set $\xi^{-1}(x)$ such that $\alpha_z: \domain{A_x} \mapsto \domain{C_z}$ is injective.
When $\xi^{-1}(x) = \{z\}$, then the corresponding output mapping must be reversible.
When $\xi^{-1}(x) = \{z_1, z_2, \ldots \}$, the input $z$ corresponds to a source input $x$ with many clones.
One of the clones, say $z_1$, has to correspond to a reversible output mapping; it will be used to reverse the local transformation with $\zeta(x) = z_1$.
The other clones $\{z_2, \ldots\}$ can use an arbitrary output mapping.
For $z = \zeta(x)$, we have $A_x \le C_z$, so that the corresponding number of outputs cannot decrease.
The form of the proposition follows easily by doing first the cloning of inputs and their transformations, and then reordering the clones into their final place. 
\end{proof}
\subsection{Types of output transformations}
We interpret the proposition as follows.
Any diagonal behavior lifting $\Lmap{}^\mathrm{L}$ can be decomposed into a pure input transformation and a pure output transformation.
The input transformation operates on inputs by reordering and possibly cloning them.
The presence of several blocks in a column corresponds to the case of cloning inputs.
Considering now the output transformation, we distinguish several cases for the blocks $L_{ix}$.
When $i=1$, the block $L_{ix}$ must be left invertible, meaning it is of either of the following types.
\begin{enumerate}[I.]
\item The block $L_{ix}$ is square and the cardinality of outputs $C_z = A_x$ does not change.
  The block is simply a permutation matrix, and corresponds to an output relabeling.
\item The block $L_{ix}$ is not square, the cardinality of outputs $C_z > A_x$ increases.
  It corresponds to a fine-graining of outputs, along, possibly, a permutation of outputs, as in Figure~\ref{Fig:LeftInvertible}.
\end{enumerate}
For $i>1$, two additional types are possible.
\begin{enumerate}[I.]
  \setcounter{enumi}{2}
\item The block $L_{ix}$ assigns a deterministic output $c$, and corresponds to a matrix with a single row identically 1.
\item The block $L_{ix}$ of mixed type: It is neither a relabeling (I), a fine-graining (II) or a deterministic assignment (III).
\end{enumerate}

\subsection{Example}

The PR box correlations $\vecP{AD}$ are expressed in the scenario with $\pst{A} = (2,2)$ and $\pst{D} = (2,2)$ by the probability distribution
\begin{equation}
  \Pgivenop{AB}{XY}{ab}{xy} = \begin{cases} \frac{1}{2} & \text{ if } a \oplus b = x ~ y\;, \\ 0  & \text{ otherwise.} \end{cases}
\end{equation}
We now lift this behavior to the scenario with $\pst{C} = (3,3,3)$ and $\pst{D} = (2,2)$ unchanged.
For that, we enumerated the $5832 = 3^6 ~ 2^3$ deterministic local transformations $\Lmap{A}: \Pspace{A} \to \Pspace{C}$, of which $2592$ are left invertible and thus correspond to behavior liftings.
From these deterministic transformations, we obtained $1944$ unique lifted behaviors, all of which are extremal points of the nonsignaling polytope~\cite{Brunner2014} in the scenario $(3,3,3) \otimes (2,2)$.
In addition, that scenario has $3^3 \cdot 2^2 = 108$ deterministic extremal boxes.
Together the deterministic boxes and the lifted PR boxes form the $2052$ vertices of the nonsignaling polytope: Thus, that scenario does not exhibit new nonsignaling boxes.

\clearpage

\section{Lifting expressions}
\label{Sec:LiftingExpressions}

To lift expressions, we are looking at a right invertible deterministic map $\Lmap{C}: \Pspace{C} \to \Pspace{A}$, such that any Bell expression $\expr{AD}$ can be lifted into the expression $\expr{CD} = \expr{AD} ~ (\Lmap{C} \otimes \Mid)$, and the transformation is reversible, $\expr{AD} = \expr{CD} ~ (\Lmap{A} \otimes \Mid)$, because $\Lmap{A}: \Pspace{A} \to \Pspace{C}$ is the right inverse of $\Lmap{C}$.

\subsection{Right invertible local transformations}
We characterize liftings of expressions as follows.
\begin{proposition}
  \label{Prop:LiftingExpressions}
  Any right invertible map $\Lmap{C}: \Pspace{C} \to \Pspace{A}$ can be decomposed into the composition of a transformation $\Lmap{}^\mathrm{R}: \Pspace{C} \to \Pspace{B}$ and a transformation $\Lmap{}^\mathrm{L}: \Pspace{B} \to \Pspace{A}$:
  \begin{equation}
    \Lmap{C} = ~ \Lmap{}^\mathrm{L} ~ \Lmap{}^\mathrm{R}
  \end{equation}
  where
  \begin{itemize}
  \item the transformation $\Lmap{}^\mathrm{R}$ is a reordering/relabeling of inputs that does not affect outputs,
    \item the transformation $\Lmap{}^\mathrm{L}$ is a lifting and has the elongated block diagonal form
      \begin{equation}
        \Lmap{}^\mathrm{L} =
        \left(
          \begin{array}{c|c|c|c|c|c}
            L_1 & 0 & \ldots & 0 & 0 & \ldots \\ \hline
            0 & 0 &  & L_2 & 0 & \ldots \\ \hline
            \vdots & & & & &
          \end{array} \right)
      \end{equation}
      where each row contains a single nonzero block $L_z$, and the blocks $L_z$ are right-invertible.
    \end{itemize}
  \end{proposition}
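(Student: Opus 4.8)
The plan is to establish Proposition~\ref{Prop:LiftingExpressions} as the mirror image of Proposition~\ref{Prop:LiftingBehaviors}: in the block-matrix picture of Section~\ref{Sec:DeterministicBlockMatrices} a right invertible deterministic map is, block-wise, the transpose of a left invertible one, so the same combinatorial analysis carries over with ``injective''/``fine-graining'' replaced by ``surjective''/``coarse-graining''. First I would write the candidate lifting $\Lmap{C}:\Pspace{C}\to\Pspace{A}$ abstractly (Definition~\ref{Def:LocalDeterministicMap}) as a pair $(\xi,\overline{\alpha})$ with input mapping $\xi:\domain{X}\to\domain{Z}$ and output mappings $\alpha_x:\domain{C_{\xi(x)}}\to\domain{A_x}$, and its right inverse $\Lmap{A}:\Pspace{A}\to\Pspace{C}$ as $(\chi,\overline{\beta})$ with $\chi:\domain{Z}\to\domain{X}$ and $\beta_z:\domain{A_{\chi(z)}}\to\domain{C_z}$, so that the defining relation is $\Lmap{C}\circ\Lmap{A}=\Mid_{\Pspace{A}}$.

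Next I would feed this relation through the composition rules of Section~\ref{Sec:LiftingDefinitions:Composition}. The input rule~\eqref{Eq:CompositionXi} gives $\chi\circ\xi=\id_{\domain{X}}$, hence $\xi$ is injective; thus $X\le Z$, the number of inputs cannot decrease, and the inputs of $C$ outside $\mathrm{image}(\xi)$ are never queried by $\Lmap{C}$ and hence unconstrained (these are the spectator inputs of the lifted expression). The output rule~\eqref{Eq:CompositionAlpha} gives $\alpha_x\circ\beta_{\xi(x)}=\id_{\domain{A_x}}$ for every $x$, so each $\alpha_x$ is surjective --- a coarse-graining of the $C_{\xi(x)}$ outcomes onto the $A_x$ outcomes, forcing $C_{\xi(x)}\ge A_x$ --- with $\beta_{\xi(x)}$ a section of it, while the remaining $\beta_z$ are arbitrary. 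Translated back into blocks, this says precisely that the block row of $\Lmap{C}$ indexed by $x$ has a single nonzero block $H_{x,\xi(x)}$, which is column-stochastic and of full row rank, i.e.\ right-invertible as a matrix.

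Then I would peel off a reordering. Let $\Lmap{}^\mathrm{R}:\Pspace{C}\to\Pspace{B}$ be the relabeling/reordering of the inputs of $C$ that leaves every output untouched and brings the queried inputs $\xi(1),\dots,\xi(X)$ into canonical positions, and let $\Lmap{}^\mathrm{L}:\Pspace{B}\to\Pspace{A}$ be the residual map, so that $\Lmap{C}=\Lmap{}^\mathrm{L}\circ\Lmap{}^\mathrm{R}$. After this reordering the input mapping of $\Lmap{}^\mathrm{L}$ keeps each retained input in place and discards the rest, so in the block picture each of its $X$ block rows carries exactly one nonzero block $L_z=H_{x,\xi(x)}$, in distinct columns, all other block columns being zero --- the elongated block-diagonal form in the statement --- and each $L_z$ is right-invertible by the previous step. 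Finally $\Lmap{}^\mathrm{L}$ is itself an expression lifting: a right inverse is obtained by choosing a section of each surjection $L_z$ and filling in the discarded inputs arbitrarily.

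I expect the actual work to be bookkeeping rather than a conceptual obstacle: making the splitting $\Lmap{C}=\Lmap{}^\mathrm{L}\circ\Lmap{}^\mathrm{R}$ unambiguous when $\xi$ fails to be surjective (so that $\Lmap{}^\mathrm{L}$ genuinely has to discard spectator inputs), stating the block form up to the permutation absorbed into $\Lmap{}^\mathrm{R}$ exactly as was done for $\Lmap{}^\mathrm{L}$ in Proposition~\ref{Prop:LiftingBehaviors}, and handling the degenerate configurations flagged in Section~\ref{Sec:DefiningLiftings}, where a non-signaling device with several single-outcome inputs makes distinct local maps indistinguishable, so that ``$\Lmap{}^\mathrm{R}$ is a reordering'' must be read modulo that identification.
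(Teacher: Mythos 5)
Your proposal is correct and follows essentially the same route as the paper's proof: represent the right-invertible map and its right inverse as pairs $(\xi,\overline{\alpha})$, apply the composition rules~\eqref{Eq:CompositionXi} and~\eqref{Eq:CompositionAlpha} to get injectivity of the input mapping and surjectivity of the output mappings (hence inputs are discarded and outputs coarse-grained), and then split off an input reordering to reach the elongated block form. The only cosmetic differences are that you work directly with the labeling $\Lmap{C}:\Pspace{C}\to\Pspace{A}$ of the statement and explicitly exhibit a right inverse of $\Lmap{}^\mathrm{L}$ via sections, whereas the paper phrases the factorization through Proposition~\ref{Prop:DecompositionDeterministicLocalMaps}; the content is the same.
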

  \begin{proof}
    We now characterize the set of right invertible deterministic maps $\Lmap{A}: \Pspace{A} \to \Pspace{C}$, characterized by
    \begin{equation}
      s = (\xi, \overline{\alpha}), \qquad \xi: z \mapsto x, \quad \alpha_z: a \mapsto c\;.
    \end{equation}
    along with their right inverses $\Lmap{C}: \Pspace{C} \to \Pspace{A}$, characterized by
    \begin{equation}
      t = (\zeta, \overline{\gamma}), \qquad \zeta: x \mapsto z, \quad \gamma_x: c \mapsto a \;.
    \end{equation}
    so that $\Lmap{A} \circ \Lmap{C} = \Mid_{\Pspace{C}}$, and thus $s ~ t = e$.

    According to the composition rule~\eqref{Eq:CompositionXi}, we have $\zeta \circ \xi = \id$; thus $\xi$ is injective.
    The injectivity of $\xi: \domain{Z} \to \domain{X}$ implies that $X \ge Z$.

    According to the composition rule~\eqref{Eq:CompositionAlpha}, we have $\alpha_z \circ \gamma_{\xi(z)} = \id$ for all $z$; thus the $\alpha_z$ are surjective.
    The surjectivity of $\alpha_z: \domain{A_{\xi(z)}} \to \domain{C_z}$ implies $A_{\xi(z)} \ge C_z$.

    Using Proposition~\ref{Prop:DecompositionDeterministicLocalMaps}, we write $\Lmap{A} = \Lmap{}^\mathrm{O} ~ \Lmap{}^\mathrm{I}$.
    If $\xi$ is bijective, then $\Lmap{}^\mathrm{I}$ is simply a reordering of inputs.
    If it is not, then $\Lmap{}^\mathrm{I}$ removes some inputs.
    Now, moving to $\Lmap{}^\mathrm{O}$.
    If all $\alpha_z$ are bijective, then the action of $\Lmap{}^\mathrm{O}$ is a relabeling of outputs.
    Otherwise, as $\alpha_z$ is surjective, its action corresponds to a coarse-graining of outputs.

    The interpretation using an elongated block diagonal form follows by reordering inputs separately.
  \end{proof}

In essence, the zero columns correspond to inputs that are removed by the transformation, while the output transformations correspond to coarse-graining of outputs.
Both those nonreversible transformations actually discard information; however, when these right invertible deterministic transformations act on a Bell expression, they create a (seemingly) more complicated expression, as the evaluation of that expression corresponds to: Take a behavior in a complex scenario, discard information and evaluate a Bell expression in a simpler scenario.
Note that the transformations thus identified correspond to the input and output liftings described in~\cite{Pironio2005}.

\subsection{Example: causal inequalities}
We are concerned with probability distributions~$\Pgiven{AB}{XY}$ which arise from {\em causal\/} orderings of the two parties Alice and Bob.
The assumption on causality says that either one party is in the causal past of the other, or that both parties cannot communicate~\cite{Oreshkov2012}.
Thus, such a distribution is called {\em causal\/} if it admits a decomposition
\begin{align}
	\Pgiven{AB}{XY} = p\Pgiven{A}{X}\Pgiven{B}{AXY} + (1-p)\Pgiven{A}{BXY}\Pgiven{B}{Y}
	\,,
\end{align}
where with probability~$p$ Alice can be thought of being in the past of Bob, and with probability~$1-p$ Bob is in the past of Alice (note that the nonsignaling terms can be absorbed in any of the two summands).
This decomposition induces {\em causal inequalities\/} (a Bell-like inequality) with which one can test whether a given distribution can be decomposed as above.

Our results on liftings are not only applicable to Bell inequalities that test whether a distribution is nonlocal, but also to these causal inequalities.
We continue by providing a short example.

A simple causal inequality (a facet of the causal polytope) for two parties with binary inputs and binary outputs is~\cite{Branciard2016}:
\begin{align}
	\Pr(A=Y,B=X) \leq \frac{1}{2}
	\,,
\end{align}
and is called the {\em Guess Your Neighbours Input\/} game: We are concerned with the probability that Alice guesses Bob's input and that, simultaneously, Bob guesses Alice's input.
It is easy to see that in a setup where Alice is in the past of Bob, or Bob in the past of Alice, or both parties are space-like separated, or even in any convex combination thereof, the inequality is satisfied.
To adopt our notation, we rewrite this inequality as a Bell expression~$\expr{AB} \in (\Pspace{A}\otimes\Pspace{B})^*$, where~$\phi(a,b,x,y)=1$ for~$a=y\wedge b=x$ and~$0$ otherwise.
In this notation, every causal distribution~$\Pgiven{AB}{XY}$ satisfies~$\expr{AB}\vecP{AB} \leq 2$.

We now {\em lift\/} this expression to a new scenario where Bob has one output in addition.
This is done by the use of the local map~$\Mid\otimes\Lambda$ with
\begin{align}
  \Lambda =
  \begin{pmatrix}
    1 & 0 & 0 & 0 & 0 & 0 \\
    0 & 1 & 1 & 0 & 0 & 0 \\
    0 & 0 & 0 & 1 & 0 & 0 \\
    0 & 0 & 0 & 0 & 1 & 1
  \end{pmatrix}
	\,.
\end{align}
So, we get
\begin{align}
	\expr{AB'}' = \expr{AB}(\Mid\otimes\Lambda)
	\,,
\end{align}
with
\begin{align}
	\phi'(a,b,x,y)=
	\begin{cases}
		1&\text{for }a=y\wedge b=x\\
		1&\text{for }a=y\wedge x=2 \wedge b=3\\
		0&\text{otherwise.}
	\end{cases}
\end{align}

Since this is a lifiting, it does not help to detect non-causal behaviors as compared to the initial~$2$-outputs inequality.
As a sanity check, we verified with the help of CDD~\cite{Fukuda1997} that this inequality is indeed a facet of the causal polytope for two parties with a binary input and a binary output for Alice, and with a binary input and ternary output for Bob.
This polytope consists of
\begin{itemize}
	\item 24 non-negativity facets
		\begin{align*}
			\forall a,x,y\in\{0,1\}, b\in\{0,1,2\}:
			\Pgivenop{AB}{XY}{a,b}{x,y} \geq 0
			\,,
		\end{align*}
	\item 48 {\em lazy guess your neighbours input (LGYNI)\/} facets
		\begin{align*}
			\forall& a,x,y\in\{0,1\}, b\in\{0,1,2\},\\
			\forall& c,d,e,q,r,s\in\{0,1\}: q\not=r \vee q\not= s \vee r\not=s\\
			&\Pr\Big(
			(x\oplus c)(a\oplus d \oplus y)=0\,\wedge
			( y=e \vee
			(
			(b=0 \wedge x=q_y) \vee\\
			&\quad(b=1 \wedge x=r_y)\,\vee
			(b=2 \wedge x=s_y) )
			)\Big)
			\leq \frac{3}{4}
			\,,
		\end{align*}
	\item 144 {\em guess your neighbours input (GYNI)\/} facets
		\begin{align*}
			\forall& a,x,y\in\{0,1\}, b\in\{0,1,2\},\\
			\forall& c,d,q_0,r_0,s_0,q_1,r_1,s_1\in\{0,1\}: q_i\not=r_i \vee q_i\not= s_i \vee r_i\not=s_i\\
			&\Pr\Big(
			y=c\oplus dx\oplus a\,\wedge\\
			&\quad
			(
			(b=0 \wedge x=q_y) \vee
			(b=1 \wedge x=r_y) \vee
			(b=2 \wedge x=s_y)
			)
			\Big)
			\leq \frac{1}{2}
			\,,
		\end{align*}
\end{itemize}
where we adopted the notion of Ref.~\cite{Branciard2016}: inputs and outputs are labeled starting with~$0$.
The {\em lazy guess your neighbours input\/} facets are facets where each party guesses the other party's input conditioned on her or his input.

\clearpage
\part{Conclusion}
%

John Bell's~\cite{Bell1964} work on the EPR Paradox~\cite{Einstein1935} is a milestone for the foundations of quantum theory.
His work provided a new language to study and characterize how physical objects behave: it is now common to study correlations allowed by physical theories, and their limits.
Bell inequalities are exemplary, as they formulate the boundaries of the correlations attainable in agreement with a local hidden variable model.
If we find that a Bell inequality is violated by the use of some physical system or theory, then we must conclude that the system or theory is in disagreement with at least one of the assumptions of any local hidden variably model.
This approach has been extended to incorporate other concepts or assumptions, leading to other Bell-like inequalities (see {\it e.g.}, Ref.~\cite{Oreshkov2012} in the case of definite causal order).
However, a comprehensive study of the mathematical structure of correlations and Bell inequalities was partly missing.

In this work we studied behaviors as well as Bell-like inequalities, and showed that these objects are dual to each other.
Then, we showed which local transformations can be applied to behaviors or inequalities: our answer is that local transformations correspond to preprocessing and postprocessing with memory.
We reached the same conclusion by taking two natural but different approaches: one based on causality (causes precede effects), and one inspired by quantum theory, where maps are restricted to be completely positive (local transformations must preserve normalization and nonnegativity of probability distributions, even when applied to a joint distribution).

After that, we studied the properties of these transformations.

In a geometric approach, we studied invariant subspaces of local transformations.
We showed that local transformations can be decomposed into parts representing non-signaling, signaling, and normalization constraints.
This allows a compact description of both behaviors and Bell-like inequalities, in particular when partial signaling is allowed.
We generalized a number of earlier results to scenarios where (partial) signaling is involved: the equivalence of Bell-like inequalities, the optimization of the variance of Bell statistical estimators.
We also showed how our approach translates to steering scenarios.

In the algebraic approach, we studied how local transformations compose.
We showed how the liftings of Pironio~\cite{Pironio2005} arise from invertibility of local transformations.
We generalized the known definition of liftings to apply it to behaviors, and to signaling scenarios as well.

We finish by discussing a few open questions that remain.

Since we managed to apply liftings to causal inequalities, it is natural to ask which generalizations remain to be explored.
We briefly showed how to apply our technique to steering scenarios; however, the generalization to quantum resources of arbitrary type~\cite{Schmid2019, Rosset2019} remains to be done -- for example, which liftings apply to teleportages~\cite{Cavalcanti2017a}.
Local transformations will also apply to nonlinear causal incompatibility inequalities, derived for classical hidden variables~\cite{Wolfe2019a} or quantum sources~\cite{Wolfe2019b}.
In the study of causal order, process matrices are mathematical objects in a generalization of quantum theory~\cite{Oreshkov2012} that allow for violations of causal inequalities; it remains to be seen if our approach might shed a light on the structure of those processes.

We observed that deterministic local transformations form a monoid; while we explored the representations of that monoid, its structure could be explored, as it corresponds to a generalization of the transformation monoid~\cite{Steinberg2010}.

Finally, while we know how to recognize liftings of Bell inequalities~\cite{Rosset2014a}, the question of recognizing automatically that some behavior is a lifting is still open.

\begin{acknowledgements}
  Research at Perimeter Institute is supported in part by the Government of Canada through the Department of Innovation, Science and Economic Development Canada and by the Province of Ontario through the Ministry of Economic Development, Job Creation and Trade.
  This publication was made possible through the support of a grant from the John Templeton Foundation.
  The opinions expressed in this publication are those of the authors and do not necessarily reflect theviews of the John Templeton Foundation.
\"A.B.~is supported by the Erwin Schr\"odinger Center for Quantum Science \& Technology (ESQ), and the Austrian Science Found (FWF): Z3 and F71.
N.G.~acknowledges financial support by the Swiss NCCR SwissMap.
M.-O.R. is supported by the Swiss National Fund Early Mobility Grant P2GEP2\_191444 and acknowledges the Spanish MINECO (Severo Ochoa SEV-2015-0522), Fundacio Cellex and Mir-Puig, Generalitat de Catalunya (SGR 1381 and CERCA Programme).
\end{acknowledgements}

\bibliographystyle{plainnat}
\bibliography{bellspaces}

\begin{thebibliography}{35}
\providecommand{\natexlab}[1]{#1}
\providecommand{\url}[1]{\texttt{#1}}
\expandafter\ifx\csname urlstyle\endcsname\relax
  \providecommand{\doi}[1]{doi: #1}\else
  \providecommand{\doi}{doi: \begingroup \urlstyle{rm}\Url}\fi

\bibitem[Arumugam et~al.(2016)Arumugam, Brandst{\"a}dt, Nishizeki, and
  Thulasiraman]{Arumugam2016}
Subramanian Arumugam, Andreas Brandst{\"a}dt, Takao Nishizeki, and Krishnaiyan
  Thulasiraman.
\newblock \emph{Handbook of Graph Theory, Combinatorial Optimization, and
  Algorithms}.
\newblock {Chapman and Hall/CRC}, 2016.

\bibitem[Bancal et~al.(2010)Bancal, Gisin, and Pironio]{Bancal2010}
Jean-Daniel Bancal, Nicolas Gisin, and Stefano Pironio.
\newblock Looking for symmetric {{Bell}} inequalities.
\newblock \emph{J. Phys. Math. Theor.}, 43\penalty0 (38):\penalty0 385303,
  September 2010.
\newblock ISSN 1751-8113, 1751-8121.
\newblock \doi{10.1088/1751-8113/43/38/385303}.

\bibitem[Barrett(2007)]{Barrett2007a}
Jonathan Barrett.
\newblock Information processing in generalized probabilistic theories.
\newblock \emph{Phys. Rev. A}, 75\penalty0 (3):\penalty0 032304, March 2007.
\newblock \doi{10.1103/PhysRevA.75.032304}.

\bibitem[Baumeler and Wolf(2016)]{Baumeler2016}
{\"A}min Baumeler and Stefan Wolf.
\newblock The space of logically consistent classical processes without causal
  order.
\newblock \emph{New J. Phys.}, 18\penalty0 (1):\penalty0 013036, 2016.
\newblock ISSN 1367-2630.
\newblock \doi{10.1088/1367-2630/18/1/013036}.

\bibitem[Bell(1964)]{Bell1964}
John~Stewart Bell.
\newblock {On the Einstein Podolsky Rosen paradox}.
\newblock \emph{Physics Physique Fizika}, 1\penalty0 (3):\penalty0 195--200,
  nov 1964.
\newblock ISSN 0554-128X.
\newblock \doi{10.1103/PhysicsPhysiqueFizika.1.195}.

\bibitem[Bengtsson and Zyczkowski(2008)]{Bengtsson2008}
Ingemar Bengtsson and Karol Zyczkowski.
\newblock \emph{Geometry of {{Quantum States}}: {{An Introduction}} to
  {{Quantum Entanglement}}}.
\newblock {Cambridge University Press}, {Cambridge; New York}, 1 edition
  edition, January 2008.
\newblock ISBN 978-0-521-89140-0.

\bibitem[Branciard et~al.(2016)Branciard, Ara{\'u}jo, Feix, Costa, and
  Brukner]{Branciard2016}
Cyril Branciard, Mateus Ara{\'u}jo, Adrien Feix, Fabio Costa, and {\v C}aslav
  Brukner.
\newblock The simplest causal inequalities and their violation.
\newblock \emph{New J. Phys.}, 18\penalty0 (1):\penalty0 013008, 2016.
\newblock ISSN 1367-2630.
\newblock \doi{10.1088/1367-2630/18/1/013008}.

\bibitem[Brunner et~al.(2014)Brunner, Cavalcanti, Pironio, Scarani, and
  Wehner]{Brunner2014}
Nicolas Brunner, Daniel Cavalcanti, Stefano Pironio, Valerio Scarani, and
  Stephanie Wehner.
\newblock Bell nonlocality.
\newblock \emph{Rev. Mod. Phys.}, 86\penalty0 (2):\penalty0 419--478, April
  2014.
\newblock \doi{10.1103/RevModPhys.86.419}.

\bibitem[Castro-Ruiz et~al.(2018)Castro-Ruiz, Giacomini, and
  Brukner]{Castro-Ruiz2017}
Esteban Castro-Ruiz, Flaminia Giacomini, and Časlav Brukner.
\newblock Dynamics of quantum causal structures.
\newblock \emph{Physical Review X}, 8\penalty0 (1), Mar 2018.
\newblock ISSN 2160-3308.
\newblock \doi{10.1103/physrevx.8.011047}.
\newblock URL \url{http://dx.doi.org/10.1103/PhysRevX.8.011047}.

\bibitem[Cavalcanti and Skrzypczyk(2017)]{Cavalcanti2017b}
Daniel Cavalcanti and Paul Skrzypczyk.
\newblock Quantum steering: A review with focus on semidefinite programming.
\newblock \emph{Rep. Prog. Phys.}, 80\penalty0 (2):\penalty0 024001, 2017.
\newblock ISSN 0034-4885.
\newblock \doi{10.1088/1361-6633/80/2/024001}.

\bibitem[Cavalcanti et~al.(2017)Cavalcanti, Skrzypczyk, and
  Šupić]{Cavalcanti2017a}
Daniel Cavalcanti, Paul Skrzypczyk, and Ivan Šupić.
\newblock All {Entangled} {States} can {Demonstrate} {Nonclassical}
  {Teleportation}.
\newblock \emph{Physical Review Letters}, 119\penalty0 (11):\penalty0 110501,
  September 2017.
\newblock \doi{10.1103/PhysRevLett.119.110501}.
\newblock URL \url{https://link.aps.org/doi/10.1103/PhysRevLett.119.110501}.

\bibitem[de~Vicente(2014)]{Vicente2014}
Julio~I. de~Vicente.
\newblock On nonlocality as a resource theory and nonlocality measures.
\newblock \emph{J. Phys. A: Math. Theor.}, 47\penalty0 (42):\penalty0 424017,
  2014.
\newblock ISSN 1751-8121.
\newblock \doi{10.1088/1751-8113/47/42/424017}.

\bibitem[Einstein et~al.(1935)Einstein, Podolsky, and Rosen]{Einstein1935}
Albert Einstein, Boris Podolsky, and Nathan Rosen.
\newblock {Can Quantum-Mechanical Description of Physical Reality Be Considered
  Complete?}
\newblock \emph{Physical Review}, 47\penalty0 (10):\penalty0 777--780, may
  1935.
\newblock ISSN 0031-899X.
\newblock \doi{10.1103/PhysRev.47.777}.

\bibitem[{\it et al., to be published}()]{RossetInPrepRelabelings}
Denis~Rosset {\it et al., to be published}.

\bibitem[Etingof et~al.(2011)Etingof, Golberg, Hensel, Liu, Schwendner,
  Vaintrob, and Yudovina]{Etingof2009}
Pavel Etingof, Oleg Golberg, Sebastian Hensel, Tiankai Liu, Alex Schwendner,
  Dmitry Vaintrob, and Elena Yudovina.
\newblock \emph{Introduction to Representation Theory}, volume~59 of
  \emph{Student Mathematical Library}.
\newblock {American Mathematical Society}, Providence, Rhode Island, January
  2011.
\newblock ISBN 978-0-8218-5351-1.
\newblock URL \url{https://bookstore.ams.org/stml-59/}.

\bibitem[Fukuda(1997)]{Fukuda1997}
Komei Fukuda.
\newblock Cdd/cdd+ {{Reference Manual}}.
\newblock \emph{Inst. Oper. Res. ETH-Zent.}, 1997.

\bibitem[Gallego and Aolita(2015)]{Gallego2015}
Rodrigo Gallego and Leandro Aolita.
\newblock Resource {{Theory}} of {{Steering}}.
\newblock \emph{Phys. Rev. X}, 5\penalty0 (4):\penalty0 041008, October 2015.
\newblock \doi{10.1103/PhysRevX.5.041008}.

\bibitem[Goh et~al.(2018)Goh, Kaniewski, Wolfe, V{\'e}rtesi, Wu, Cai, Liang,
  and Scarani]{Goh2018}
Koon~Tong Goh, J{\k{e}}drzej Kaniewski, Elie Wolfe, Tam{\'a}s V{\'e}rtesi,
  Xingyao Wu, Yu~Cai, Yeong-Cherng Liang, and Valerio Scarani.
\newblock Geometry of the set of quantum correlations.
\newblock \emph{Phys. Rev. A}, 97\penalty0 (2):\penalty0 022104, February 2018.
\newblock \doi{10.1103/PhysRevA.97.022104}.

\bibitem[Horodecki et~al.(2015)Horodecki, Grudka, Joshi, K{\l}obus, and
  {\L}odyga]{Horodecki2015}
Karol Horodecki, Andrzej Grudka, Pankaj Joshi, Waldemar K{\l}obus, and Justyna
  {\L}odyga.
\newblock Axiomatic approach to contextuality and nonlocality.
\newblock \emph{Phys. Rev. A}, 92\penalty0 (3):\penalty0 032104, September
  2015.
\newblock \doi{10.1103/PhysRevA.92.032104}.

\bibitem[Kaur and Wilde(2017)]{Kaur2017a}
Eneet Kaur and Mark~M. Wilde.
\newblock Relative entropy of steering: On its definition and properties.
\newblock \emph{J. Phys. A: Math. Theor.}, 50\penalty0 (46):\penalty0 465301,
  2017.
\newblock ISSN 1751-8121.
\newblock \doi{10.1088/1751-8121/aa907b}.

\bibitem[MacLean et~al.(2017)MacLean, Ried, Spekkens, and Resch]{MacLean2016}
Jean-Philippe~W. MacLean, Katja Ried, Robert~W. Spekkens, and Kevin~J. Resch.
\newblock Quantum-coherent mixtures of causal relations.
\newblock \emph{Nature Communications}, 8\penalty0 (1), May 2017.
\newblock ISSN 2041-1723.
\newblock \doi{10.1038/ncomms15149}.
\newblock URL \url{http://dx.doi.org/10.1038/ncomms15149}.

\bibitem[Navascu{\'e}s et~al.(2015)Navascu{\'e}s, Guryanova, Hoban, and
  Ac{\'i}n]{Navascues2015}
Miguel Navascu{\'e}s, Yelena Guryanova, Matty~J. Hoban, and Antonio Ac{\'i}n.
\newblock Almost quantum correlations.
\newblock \emph{Nat Commun}, 6:\penalty0 6288, February 2015.
\newblock \doi{10.1038/ncomms7288}.

\bibitem[Oreshkov et~al.(2012)Oreshkov, Costa, and Brukner]{Oreshkov2012}
Ognyan Oreshkov, Fabio Costa, and {\v C}aslav Brukner.
\newblock Quantum correlations with no causal order.
\newblock \emph{Nat Commun}, 3:\penalty0 1092, October 2012.
\newblock \doi{10.1038/ncomms2076}.

\bibitem[Pironio(2005)]{Pironio2005}
Stefano Pironio.
\newblock Lifting {{Bell}} inequalities.
\newblock \emph{J. Math. Phys.}, 46\penalty0 (6):\penalty0 062112, June 2005.
\newblock ISSN 0022-2488, 1089-7658.
\newblock \doi{10.1063/1.1928727}.

\bibitem[Renou et~al.(2017)Renou, Rosset, Martin, and Gisin]{Renou2016}
Marc-Oliver Renou, Denis Rosset, Anthony Martin, and Nicolas Gisin.
\newblock On the inequivalence of the ch and chsh inequalities due to finite
  statistics.
\newblock \emph{Journal of Physics A: Mathematical and Theoretical},
  50\penalty0 (25):\penalty0 255301, May 2017.
\newblock ISSN 1751-8121.
\newblock \doi{10.1088/1751-8121/aa6f78}.
\newblock URL \url{http://dx.doi.org/10.1088/1751-8121/aa6f78}.

\bibitem[Rockafellar(1970)]{Rockafellar1970}
R~Tyrrell Rockafellar.
\newblock \emph{Convex {{Analysis}}}, volume~28.
\newblock {Princeton University Press}, 1970.

\bibitem[Roman(2005)]{Roman2005}
Steven Roman.
\newblock \emph{Advanced Linear Algebra}.
\newblock {Springer}, 3 edition, 2005.

\bibitem[Rosset et~al.(2014)Rosset, Bancal, and Gisin]{Rosset2014a}
Denis Rosset, Jean-Daniel Bancal, and Nicolas Gisin.
\newblock Classifying 50 years of {{Bell}} inequalities.
\newblock \emph{J. Phys. A: Math. Theor.}, 47\penalty0 (42):\penalty0 424022,
  October 2014.
\newblock ISSN 1751-8121.
\newblock \doi{10.1088/1751-8113/47/42/424022}.

\bibitem[Rosset et~al.(2019)Rosset, Schmid, and Buscemi]{Rosset2019}
Denis Rosset, David Schmid, and Francesco Buscemi.
\newblock Characterizing nonclassicality of arbitrary distributed devices.
\newblock November 2019.
\newblock URL \url{http://arxiv.org/abs/1911.12462}.

\bibitem[Schmid et~al.(2020)Schmid, Rosset, and Buscemi]{Schmid2019}
David Schmid, Denis Rosset, and Francesco Buscemi.
\newblock Type-independent resource theory of local operations and shared
  randomness.
\newblock January 2020.
\newblock URL \url{http://arxiv.org/abs/1909.04065}.

\bibitem[{\'S}liwa(2003)]{Sliwa2003}
Cezary {\'S}liwa.
\newblock Symmetries of the {{Bell}} correlation inequalities.
\newblock \emph{Physics Letters A}, 317\penalty0 (3-4):\penalty0 165--168,
  October 2003.
\newblock ISSN 0375-9601.
\newblock \doi{10.1016/S0375-9601(03)01115-0}.

\bibitem[Steinberg(2010)]{Steinberg2010}
Benjamin Steinberg.
\newblock A {Theory} of {Transformation} {Monoids}: {Combinatorics} and
  {Representation} {Theory}.
\newblock April 2010.
\newblock URL \url{http://arxiv.org/abs/1004.2982}.

\bibitem[Steinberg(2016)]{Steinberg2016}
Benjamin Steinberg.
\newblock \emph{Representation {{Theory}} of {{Finite Monoids}}}.
\newblock {Springer International Publishing}, December 2016.
\newblock ISBN 978-3-319-43930-3.

\bibitem[Wolfe et~al.(2019{\natexlab{a}})Wolfe, Pozas-Kerstjens, Grinberg,
  Rosset, Acín, and Navascues]{Wolfe2019b}
Elie Wolfe, Alejandro Pozas-Kerstjens, Matan Grinberg, Denis Rosset, Antonio
  Acín, and Miguel Navascues.
\newblock Quantum {Inflation}: {A} {General} {Approach} to {Quantum} {Causal}
  {Compatibility}.
\newblock September 2019{\natexlab{a}}.
\newblock URL \url{http://arxiv.org/abs/1909.10519}.

\bibitem[Wolfe et~al.(2019{\natexlab{b}})Wolfe, Spekkens, and
  Fritz]{Wolfe2019a}
Elie Wolfe, Robert~W. Spekkens, and Tobias Fritz.
\newblock The {Inflation} {Technique} for {Causal} {Inference} with {Latent}
  {Variables}.
\newblock \emph{Journal of Causal Inference}, 7\penalty0 (2), September
  2019{\natexlab{b}}.
\newblock \doi{10.1515/jci-2017-0020}.
\newblock URL
  \url{https://www.degruyter.com/view/journals/jci/7/2/article-20170020.xml}.
\newblock Publisher: De Gruyter Section: Journal of Causal Inference.

\end{thebibliography}

\end{document}